\newcommand\simpath{Simulation}
\newcommand\obs{\mathrm{obs}}
\newcommand\jacob[1]{\nabla_{\!#1}}
\newcommand\rbf{\mathrm{rbf}}
\title{Adaptive learning in two-player Stackelberg games with application to network security}
\author{Guosong~Yang, Radha~Poovendran, and Jo\~{a}o~P.~Hespanha%
    \thanks{G.~Yang and J.~P.~Hespanha are with the Center for Control, Dynamical Systems, and Computation, University of California, Santa Barbara, CA 93106, USA. Email: \texttt{\{guosongyang,~hespanha\}@ucsb.edu}.}
    \thanks{R.~Poovendran is with the Department of Electrical and Computer Engineering, University of Washington, Seattle, WA 98195, USA. Email: \texttt{rp3@uw.edu}.}}
\begin{document}
\maketitle

\begin{abstract}
We study a two-player Stackelberg game with incomplete information such that the follower's strategy belongs to a known family of parameterized functions with an unknown parameter vector. We design an adaptive learning approach to simultaneously estimate the unknown parameter and minimize the leader's cost, based on adaptive control techniques and hysteresis switching. Our approach guarantees that the leader's cost predicted using the parameter estimate becomes indistinguishable from its actual cost in finite time, up to a preselected, arbitrarily small error threshold. Also, the first-order necessary condition for optimality holds asymptotically for the predicted cost. Additionally, if a persistent excitation condition holds, then the parameter estimation error becomes bounded by a preselected, arbitrarily small threshold in finite time as well. For the case where there is a mismatch between the follower's strategy and the parameterized function that is known to the leader, our approach is able to guarantee the same convergence results for error thresholds larger than the size of the mismatch. The algorithms and the convergence results are illustrated via a simulation example in the domain of network security.
\end{abstract}

\section{Introduction}\label{sec:intro}
A modern engineering system often involves multiple self-interested decision makers whose actions have mutual consequences. Examples of such systems include communications over a shared network with limited capacity, and computer programs sharing limited computational resources. Game theory provides a systematic framework for modeling the cooperation and conflict between these so-called strategic players \cite{FudenbergTirole1991}, and has been widely applied to areas such as robust design, resource allocation, and network security \cite{BasarOlsder1999,AlpcanBasar2010,Hespanha2017}.

In game theory, a fundamental question is whether the players can converge to a Nash equilibrium---a tuple of strategies for which no one has a unilateral incentive to change---if they play the game iteratively and adjust their strategies based on historical outcomes. A primary example of such a learning process is fictitious play \cite{Brown1951,Robinson1951}, in which every player believes that the opponents are playing constant mixed strategies in agreement with the empirical distributions of their past actions, and plays the corresponding best response. Another well-known example is gradient response \cite{BrownNeumann1952,Rosen1965}, in which each player adjusts its strategy according to the corresponding gradient of its cost function. Fictitious play and gradient response have attracted significant research interests \cite{FudenbergLevine1998,Hart2005} and have been used in applications such as multiagent reinforcement learning \cite{BusoniuBabuskaSchutter2008} and distributed control \cite{MardenShamma2015}.

In this paper, we propose an adaptive learning approach for a hierarchical game model introduced by Stackelberg \cite{Stackelberg2011}. In a two-player Stackelberg game, one player (called the \emph{leader}) selects its action first, and then the other player (called the \emph{follower}), informed of the leader's choice, selects its own action. Therefore, a follower's strategy in a Stackelberg game can be viewed as a function that specifies an action in response to each leader's possible action.

Stackelberg games provide a natural framework for understanding systems with asymmetric information, a common feature of many network problems such as routing \cite{KorilisLazarOrda1997}, scheduling \cite{Roughgarden2004}, and channel allocation \cite{BloemAlpcanBasar2007}. They are especially useful for modeling security problems, where the defender (leader) is usually unaware of the attacker's objective and capabilities a priori, whereas the attacker (follower) is able to observe the defender's strategy and attack after careful planning. A class of Stackelberg games called Stackelberg Security Games have been applied to various real-world security domains and have lead to practical implementations such as the ARMOR program at the Los Angeles International Airport \cite{PitaJainMareckiOrdonezPortwayTambeWesternParuchuriKraus2008}, the IRIS program used by the US Federal Air Marshals \cite{TsaiRathiKiekintveldOrdonezTambe2009}, and counterterrorism programs for crucial infrastructures such as power grid and oil reserves \cite{BrownCarlyleSalmeronWood2005,BrownCarlyleSalmeronWood2006}.

Asymmetric information often leads to scenarios with no Nash equilibrium but a Stackelberg equilibrium, as sufficient conditions for the existence of the former are much stronger than those for the latter (see, e.g., \cite[p.~181]{BasarOlsder1999}). For these scenarios, learning options like fictitious play and gradient response cannot be applied, and novel approaches are needed to achieve convergence to a Stackelberg equilibrium. Existing results on learning in Stackelberg games are limited to linear or quadratic costs and finite action sets \cite{BrucknerScheffer2011,MareckiTesauroSegal2012,BlumHaghtalabProcaccia2014}, which are too restrictive for many applications including network security.

In this paper, we study a Stackelberg game between two players with continuous action sets. We consider the scenario where the leader only has partial knowledge of the follower's action set and cost function. As a result, the follower's strategy belongs to a family of parameterized functions that is known to the leader, but the actual value of the parameter vector is unknown. Our main contribution is an adaptive learning approach described in Section~\ref{sec:alg}, which simultaneously estimates the unknown parameter based on the follower's past actions and minimizes the leader's cost predicted using the parameter estimate. The approach is designed based on adaptive control techniques, and utilizes projections and hysteresis switching to ensure feasibility of solutions and finite-time convergence of the parameter estimate.

In Section~\ref{sec:conv}, we prove that the leader's predicted cost is guaranteed to become indistinguishable from its actual cost in finite time, up to a preselected, arbitrarily small error threshold. Also, the first-order necessary condition for optimality holds asymptotically for the predicted cost. Additionally, if a persistent excitation condition holds, then the parameter estimation error is guaranteed to become bounded by a preselected, arbitrarily small threshold in finite time as well. Our proof provides a rigorous treatment for the existence and convergence of solutions to the discontinuous dynamics that results from projections and switching, based on tools from differential inclusions theory. In particular, we establish an invariance principle for projected gradient descent in continuous time, which is of independent interest and is novel to the best of our knowledge.

In Section~\ref{sec:mismatch}, we extend the adaptive learning approach to the case where the parameterized function that is known to the leader does not match the follower's strategy perfectly. We prove that that our approach can be adjusted to guarantee the same convergence results for preselected error thresholds that are larger than the size of the mismatch.

In Section~\ref{sec:sim}, the algorithms and the convergence results are illustrated via a simulation example motivated by link-flooding distributed denial-of-service (DDoS) attacks, such as the Crossfire attack \cite{KangLeeGligor2013}. Section~\ref{sec:end} concludes the paper with a brief summary and an outlook on future research topics.

A preliminary version for some of these results appeared in the conference paper \cite{YangPoovendranHespanha2019}. The current paper improves \cite{YangPoovendranHespanha2019} by adding a notion of ``practical'' Stackelberg equilibrium, removing unnecessary assumptions, substantiating the results with complete proofs and clarifying remarks, and providing a more realistic and elaborated simulation example.

\emph{Notations:} Let $ \R_+ := [0, \infty) $ and $ \N := \{0, 1, \ldots\} $. Denote by $ I_n $ the identity matrix in $ \R^{n \times n} $, or simply $ I $ when the dimension is implicit. Denote by $ \|\cdot\| $ the Euclidean norm for vectors and the (induced) Euclidean norm for matrices. For two vectors $ v_1 $ and $ v_2 $, denote by $ (v_1, v_2) := (v_1^\top, v_2^\top)^\top $ their concatenation. For a set $ \cS \subset \R^n $, denote by $ \partial\cS $ and $ \overline\cS $ its boundary and closure, respectively. A signal $ r: \R_+ \to \R^n $ is of class $ \Linf $ if $ \sup_{t \geq 0} \|r(t)\| $ is finite. Denote by $ s \bB(x) $ the closed ball of radius $ s \geq 0 $ centered at $ x \in \R^n $, that is, $ s \bB(x) := \{z \in \R^n: \|z - x\| \leq s\} $.

\section{Problem formulation}
Consider a two-player game $ (\cR, \cA, J, H) $, where $ \cR \subset \R^{n_r} $ and $ \cA \subset \R^{n_a} $ are the \emph{action sets} of the first and the second player, respectively, and $ J: \cR \times \cA \to \R $ and $ H: \cA \times \cR \to \R $ are the corresponding \emph{cost functions} to be minimized. We are interested in a hierarchical game model proposed by Stackelberg \cite{Stackelberg2011}, where the first player (called the \emph{leader}) selects its action $ r \in \cR $ first, and then the second player (called the \emph{follower}), informed of the leader's choice, selects its own action $ a \in \cA $. Formally, a Stackelberg equilibrium is defined as follows; see, e.g., \cite[Sec.~3.1]{FudenbergTirole1991} or \cite[Def.~4.6 and~4.7, p.~179 and~180]{BasarOlsder1999}.
\begin{dfn}[Stackelberg equilibrium]\label{dfn:stbg}
Given a game defined by $ (\cR, \cA, J, H) $, an action $ r^* \in \cR $ is called a \emph{Stackelberg equilibrium action} for the leader if
\begin{equation}\label{eq:stbg}
    J^* := \inf_{r \in \cR} \max_{a \in \beta_a(r)} J(r, a) = \max_{a \in \beta_a(r^*)} J(r^*, a),
\end{equation}
where\footnote{In this paper, we only consider compact action sets and continuous cost functions; thus for each $ r \in \cR $, the set $ \beta_a(r) $ is nonempty and compact. However, the map $ r \mapsto \max_{a \in \beta_a(r)} J(r, a) $ may still be discontinuous.}
\begin{equation}\label{eq:opt-res-set}
    \beta_a(r) := \argmin_{a \in \cA} H(a, r)
\end{equation}
denotes the set of follower's best responses against a leader's action $ r \in \cR $, and $ J^* $ is known as the \emph{Stackelberg cost} for the leader; for an $ \varepsilon > 0 $, a routing action $ r^*_\varepsilon \in \cR $ is called an \emph{$ \varepsilon $ Stackelberg action} for the leader if \eqref{eq:stbg} is replaced by
\begin{equation*}
    \max_{a \in \beta_a(r^*_\varepsilon)} J(r^*_\varepsilon, a) \leq J^* + \varepsilon.
\end{equation*}
\end{dfn}

The Stackelberg cost provides a cost that the leader is able to guarantee against a rational follower. Depending on the follower's actual strategy, the leader may be able to achieve a better (smaller) cost while playing a Stackelberg equilibrium action. In practice, it is possible that no Stackelberg equilibrium action exists but the leader is able to achieve essentially the Stackelberg cost by playing an $ \varepsilon $ Stackelberg action while selecting a sufficiently small $ \varepsilon > 0 $; see also the discussion after Assumption~\ref{ass:reg} and the example in Section~\ref{sec:sim}.

We consider games with perfect but incomplete information, where the leader only has partial knowledge of the follower's action set and cost function and cannot predict the follower's action accurately. Specifically, the follower's strategy is an unknown function $ f: \cR \to \cA $ such that $ a = f(r) \in \beta_a(r) $ for all $ r \in \cR $. However, $ f $ belongs to a family of parameterized functions $ \{r \mapsto \hat f(\hat\theta, r): \hat\theta \in \Theta\} $ with a parameter set $ \Theta \subset \R^{n_\theta} $, that is, there is an \emph{actual value} $ \theta \in \Theta $ such that
\begin{equation}\label{eq:match}
    a = f(r) = \hat f(\theta, r) \qquad \forall r \in \cR.
\end{equation}
The parameterized function $ \hat f: \Theta \times \cR \to \R^{n_a} $ (including the parameter set $ \Theta $) is known to the leader, but the actual value $ \theta $ is unknown. The follower's action set $ \cA $ is also unknown except that $ \cA \subset \{\hat f(\hat\theta, r): \hat\theta \in \Theta,\, r \in \cR\} $ as implied by \eqref{eq:match}.

In practice, assuming that the follower's strategy belongs to a known family of parameterized functions introduces little loss of generality, as the follower's strategy can always be approximated on a compact set up to an arbitrary precision as a finite weighted sum of a preselected class of basis functions. An example of such an approximation is the \emph{radial basis function (RBF)} model \cite{FangLiSudjianto2005}, in which the leader assumes
\begin{equation*}
    \hat f(\theta, r) = \sum_{j=1}^{n_\theta} \theta_j F_j(r) = \sum_{j=1}^{n_\theta} \theta_j \phi(\|r - r_j^c\|),
\end{equation*}
where $ \phi: \R_+ \to \R^{n_a} $ is an RBF and each $ F_j: \cR \to \R^{n_a} $ is a kernel centered at $ r_j^c $. Note that in the RBF model, the approximation is affine with respect to the unknown parameter, which is also common in many other widely used approximation models such as \emph{orthogonal polynomials} and \emph{multivariate splines} \cite{FangLiSudjianto2005}. This motivates restricting our attention to affine maps $ \hat\theta \mapsto \hat f(\hat\theta, r) $. The following assumption captures this and other generic regularity conditions that we use to guarantee the existence of an $ \varepsilon $ Stackelberg action and the convergence of our parameter estimation algorithms.
\begin{ass}[Regularity]\label{ass:reg}
The follower's action set $ \cA $ is compact, and the leader's action set $ \cR $ and the parameter set $ \Theta $ are convex and compact; the follower's cost function $ H $ is continuous, the leader's cost function $ J $ and the parameterized function $ \hat f $ are continuously differentiable, and the map $ \hat\theta \mapsto \hat f(\hat\theta, r) $ is affine for each fixed $ r \in \cR $.
\end{ass}

Under Assumption~\ref{ass:reg}, there exists an $ \varepsilon $ Stackelberg action for each $ \varepsilon > 0 $ \cite[Prop.~4.2, p.~180]{BasarOlsder1999}. These conditions are much weaker than the standard sufficient conditions for the existence of a Stackelberg equilibrium action \cite[Th.~4.8, p.~180]{BasarOlsder1999}, which are in turn much weaker than those for a Nash equilibrium \cite[p.~181]{BasarOlsder1999}. Therefore, they are consistent with our interest in games with no Nash equilibrium but a ``practical'' Stackelberg equilibrium; see also the example in Section~\ref{sec:sim}.

We denote by $ \jacob{r} J(r, a) $ and $ \jacob{a} J(r, a) $ the gradients of the maps $ r \mapsto J(r, a) $ and $ a \mapsto J(r, a) $, respectively, and by $ \jacob{\theta} \hat f(r) $ and $ \jacob{r} \hat f(\hat\theta, r) $ the Jacobian matrices of the maps $ \hat\theta \mapsto \hat f(\hat\theta, r) $ and $ r \mapsto \hat f(\hat\theta, r) $, respectively.\footnote{To be consistent with the definition of Jacobian matrix, we take gradients as row vectors.} In particular, the Jacobian matrix $ \jacob{\theta} \hat f(r) $ is independent of $ \hat\theta $ due to the affine condition in Assumption~\ref{ass:reg}.

Our goal is to adjust the leader's action $ r $ to minimize its cost $ J(r, a) $ for the follower's action $ a = f(r) = \hat f(\theta, r) $, that is, to solve the optimization problem\footnote{Clearly, as the follower's response $ \hat f(\theta, r) \in \beta_a(r) $ for all $ r \in \cR $, the leader's optimal cost is upper bounded by its Stackelberg cost, i.e., $ \min_{r \in \cR} J(r, \hat f(\theta, r)) \leq J^* $.}
\begin{equation}\label{eq:op-act-cost}
    \min_{r \in \cR} J \big( r, \hat f(\theta, r) \big),
\end{equation}
based on past observations of the follower's action $ a = \hat f(\theta, r) $ and the leader's cost $ J(r, a) $, but without knowing the actual value $ \theta $. Our approach to solve this problem combines the following two components:
\begin{enumerate}
    \item Construct a \emph{parameter estimate} $ \hat\theta $ that approaches the actual value $ \theta $.
    \item Adjust the leader's action $ r $ based on a gradient descent method to minimize its \emph{predicted cost}
        \begin{equation*}
            \hat J(r, \hat\theta) := J \big( r, \hat f(\hat\theta, r) \big),
        \end{equation*}
        that is, to solve the optimization problem
        \begin{equation}\label{eq:op-est-cost}
            \min_{r \in \cR} \hat J(r, \hat\theta) = \min_{r \in \cR} J \big( r, \hat f(\hat\theta, r) \big).
        \end{equation}
\end{enumerate}
In this paper, our design and analysis are formulated using continuous-time dynamics, which is common in the literature of learning in game theory \cite{FudenbergLevine1998,Hart2005}.

\section{Estimation and optimization}\label{sec:alg}
To specify the adaptive algorithms for estimating the actual value $ \theta $ and optimizing the leader's action $ r $, we recall the following notions and basic properties from convex analysis; for more details, see, e.g., \cite[Ch.~6]{RockafellarWets1998} or \cite[Sec.~5.1]{Aubin1991}.

For a closed convex set $ \cC \subset \R^n $ and a point $ v \in \R^n $, we denote by $ [v]_\cC $ the \emph{projection} of $ v $ onto $ \cC $, that is,
\begin{equation*}
    [v]_\cC := \argmin_{w \in \cC} \|w - v\|.
\end{equation*}
The projection $ [v]_\cC $ exists and is unique as the set $ \cC $ is closed and convex, and satisfies $ [v]_\cC = v $ if $ v \in \cC $.

For a convex set $ \cS \subset \R^n $ and a point $ x \in \cS $, we denote by $ T_\cS(x) $ the \emph{tangent cone} to $ \cS $ at $ x $, that is,
\begin{equation}\label{eq:tan-cone}
    T_\cS(x) := \overline{\{h (z - x): z \in \cS,\, h > 0\}},
\end{equation}
and by $ N_\cS(x) $ the \emph{normal cone} to $ \cS $ at $ x $, that is,
\begin{equation}\label{eq:normal-cone}
    N_\cS(x) := \{v \in \R^n: v^\top w \leq 0 \text{ for all } w \in T_\cS(x)\}.
\end{equation}
The sets $ T_\cS(x) $ and $ N_\cS(x) $ are closed and convex, and satisfy $ T_\cS(x) = \R^n $ and $ N_\cS(x) = \{0\} $ if $ x \in \cS\backslash\partial\cS $. Moreover, we have
\begin{equation}\label{eq:tan-proj-cone}
    [v]_{T_\cS(x)} \in T_\cS(x), \quad v - [v]_{T_\cS(x)} \in N_\cS(x)
\end{equation}
and
\begin{equation}\label{eq:tan-proj-zero}
    \big( v - [v]_{T_\cS(x)} \big)^\top [v]_{T_\cS(x)} = 0
\end{equation}
for all $ v \in \R^n $ and $ x \in \cS  $.

\subsection{Parameter estimation}\label{ssec:alg-est}
We construct the parameter estimate $ \hat\theta $ by comparing past observations of the follower's action $ a = \hat f(\theta, r) $ and the leader's cost $ J(r, a) $ with the corresponding predicted values $ \hat f(\hat\theta, r) $ and $ \hat J(r, \hat\theta) = J \big( r, \hat f(\hat\theta, r) \big) $ computed using $ \hat\theta $. Their difference is defined as the \emph{observation error}
\begin{equation}\label{eq:obs-err}
    e_\obs := \begin{bmatrix}
        \hat f(\hat\theta, r) - a \\
        \hat J(r, \hat\theta) - J(r, a)
    \end{bmatrix}.
\end{equation}
We develop an estimation algorithm based on the observation error $ e_\obs $ so that the norm of the \emph{estimation error} $ \hat\theta - \theta $ is monotonically decreasing, regardless of how the leader's action $ r $ is being adjusted.

First, we establish a relation between the observation error $ e_\obs $ and the estimation error $ \hat\theta - \theta $.
\begin{lem}\label{lem:obs-est}
The observation error $ e_\obs $ satisfies
\begin{equation}\label{eq:obs-est}
    e_\obs = K(r, a, \hat\theta) (\hat\theta - \theta)
\end{equation}
with the gain matrix
\begin{equation}\label{eq:est-gain}
    K(r, a, \hat\theta) := \begin{bmatrix}
        I \\
        \int_{0}^{1} \jacob{a} J \big( r, \rho \hat f(\hat\theta, r) + (1 - \rho)\, a \big) \d\rho
    \end{bmatrix} \jacob{\theta} \hat f(r).
\end{equation}
\end{lem}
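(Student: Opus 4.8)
The plan is to compute the two block entries of $ e_\obs $ separately and recognize each as a linear function of $ \hat\theta - \theta $. For the first block, note that $ \hat f(\hat\theta, r) - a = \hat f(\hat\theta, r) - \hat f(\theta, r) $ by the matching condition \eqref{eq:match}, and since $ \hat\theta \mapsto \hat f(\hat\theta, r) $ is affine by Assumption~\ref{ass:reg}, the difference equals exactly $ \jacob{\theta} \hat f(r)\, (\hat\theta - \theta) $, with no remainder term. This is why the Jacobian $ \jacob{\theta} \hat f(r) $ can be written without an argument $ \hat\theta $ and why the first block row of $ K $ is simply $ I \cdot \jacob{\theta} \hat f(r) $.

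For the second block, the plan is to use the fundamental theorem of calculus along the straight line segment joining $ a = \hat f(\theta, r) $ to $ \hat f(\hat\theta, r) $. Writing $ g(\rho) := J\big(r, \rho\, \hat f(\hat\theta, r) + (1-\rho)\, a\big) $, which is well defined because the segment lies in a convex set containing both endpoints and $ J $ is $ \Cone $, we have
\begin{equation*}
    \hat J(r, \hat\theta) - J(r, a) = g(1) - g(0) = \int_0^1 g'(\rho)\, \d\rho.
\end{equation*}
By the chain rule, $ g'(\rho) = \jacob{a} J\big(r, \rho\, \hat f(\hat\theta, r) + (1-\rho)\, a\big)\, \big(\hat f(\hat\theta, r) - a\big) $. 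Pulling the constant vector $ \hat f(\hat\theta, r) - a $ out of the integral gives the second block as $ \big(\int_0^1 \jacob{a} J(\cdots)\, \d\rho\big) \big(\hat f(\hat\theta, r) - a\big) $, and then substituting the first-block identity $ \hat f(\hat\theta, r) - a = \jacob{\theta}\hat f(r)(\hat\theta - \theta) $ yields exactly the second block row of $ K(r, a, \hat\theta)(\hat\theta - \theta) $ as defined in \eqref{eq:est-gain}.

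Stacking the two blocks gives \eqref{eq:obs-est} with the claimed gain matrix. I do not anticipate a genuine obstacle here; the only point requiring a word of care is that the line segment $ \{\rho\, \hat f(\hat\theta, r) + (1-\rho)\, a : \rho \in [0,1]\} $ lies in the domain on which $ J(r, \cdot) $ and its gradient are defined and $ \Cone $—this follows because both endpoints lie in $ \cA \subset \{\hat f(\hat\theta, r) : \hat\theta \in \Theta,\, r \in \cR\} $ and, more to the point, $ J $ is assumed $ \Cone $ on all of $ \cR \times \cA $ with $ \cA $ compact, so one may regard $ J $ as $ \Cone $ on a neighborhood, or simply note that affineness of $ \hat f(\cdot, r) $ keeps $ \hat f(\hat\theta, r) $ in the same ambient space throughout. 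With that observed, the fundamental-theorem-of-calculus step is routine.
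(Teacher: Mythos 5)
Your proposal is correct and follows essentially the same route as the paper's proof in Appendix~B.1: the affine property of $\hat\theta \mapsto \hat f(\hat\theta, r)$ gives the first block exactly, and the fundamental theorem of calculus applied to $g(\rho) = J\big(r, \rho \hat f(\hat\theta, r) + (1-\rho)\, a\big)$ gives the second. Your added remark about the line segment lying in the domain of $J(r,\cdot)$ is a reasonable point of care that the paper itself glosses over, but it does not change the argument.
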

\begin{proof}
See Appendix~\ref{apx:obs-est}.
\end{proof}

Following Lemma~\ref{lem:obs-est}, the observation error $ e_\obs $ would be zero if the current estimate $ \hat\theta $ of the actual value $ \theta $ was correct. However, in most interesting scenarios, the dimension $ n_\theta $ of the parameter vector $ \theta $ is much larger than the dimension $ n_a + 1 $ of the observation error $ e_\obs $; thus the gain matrix $ K(r, a, \hat\theta) $ cannot be invertible, and a zero value for the observation error $ e_\obs $ does not imply that the estimate $ \hat\theta $ is correct.

We propose the following estimation algorithm to drive the parameter estimate $ \hat\theta $ towards the actual value $ \theta $:
\begin{equation}\label{eq:est-dyn}
    \dot{\hat\theta} = \big[ {-\lambda_e} K(r, a, \hat\theta)^\top e_\obs \big]_{T_\Theta(\hat\theta)}
\end{equation}
with the gain matrix $ K(r, a, \hat\theta) $ defined by \eqref{eq:est-gain} and the \emph{switching signal} $ \lambda_e: \R_+ \to \{0,\, \lambda_\theta\} $ defined by
\begin{equation}\label{eq:est-sw}
    \lambda_e(t) := \begin{cases}
        \lambda_\theta &\text{if } \|e_\obs(t)\| \geq \varepsilon_\obs; \\
        \lim_{s \nearrow t} \lambda_e(s) &\text{if } \|e_\obs(t)\| \in (\varepsilon_\obs', \varepsilon_\obs); \\
        0 &\text{if } \|e_\obs(t)\| \leq \varepsilon_\obs',
    \end{cases}
\end{equation}
and $ \lambda_e(0) := \lambda_\theta $ if $ \|e_\obs(0)\| \in (\varepsilon_\obs', \varepsilon_\obs) $, where $ \varepsilon_\obs > \varepsilon_\obs'  > 0 $ and $ \lambda_\theta > 0 $ are preselected constants. Several comments are in order: First, the gain matrix $ K(r, a, \hat\theta) $ depends on the parameter estimate $ \hat\theta $ but not on the actual value $ \theta $, so \eqref{eq:est-dyn} can be implemented without knowing $ \theta $. Second, the projection $ [\cdot]_{T_\Theta(\hat\theta)} $ onto the tangent cone $ T_\Theta(\hat\theta) $ ensures that the parameter estimate $ \hat\theta $ remains inside the parameter set $ \Theta $ \cite{Aubin1991}; see Appendix~\ref{apx:proj-dyn} for more details. Finally, the right-continuous, piecewise constant switching signal $ \lambda_e $ is designed so that the adaptation is on when $ \|e_\obs\| \geq \varepsilon_\obs $ and off when $ \|e_\obs\| \leq \varepsilon_\obs' $, with a hysteresis switching rule that avoids chattering. The key feature of \eqref{eq:est-dyn} is that the estimation error $ \hat\theta - \theta $ satisfies
\begin{multline*}
    \frac{\d\|\hat\theta - \theta\|^2}{\d t} = 2 (\hat\theta - \theta)^\top \big[ {-\lambda_e} K(r, a, \hat\theta)^\top e_\obs \big]_{T_\Theta(\hat\theta)} \\
    \leq 2 (\hat\theta - \theta)^\top \big( {-\lambda_e} K(r, a, \hat\theta)^\top e_\obs \big) = -2 \lambda_e \|e_\obs\|^2,
\end{multline*}
where the inequality follows from \eqref{eq:tan-cone}--\eqref{eq:tan-proj-cone}. We thus conclude that the estimation algorithm \eqref{eq:est-dyn} with the switching signal \eqref{eq:est-sw} guarantees
\begin{equation}\label{eq:est-lya}
    \frac{\d\|\hat\theta - \theta\|^2}{\d t} \leq -2 \lambda_e \|e_\obs\|^2 \leq 0,
\end{equation}
which implies that $ \|\hat\theta - \theta\| $ is monotonically decreasing and will not stop approaching zero unless $ \|e_\obs\| < \varepsilon_\obs $. In the convergence analysis in Section~\ref{sec:conv}, we will show that the adaptation of the parameter estimate $ \hat\theta $ stops in finite time, and the observation error $ e_\obs $ satisfies $ \|e_\obs\| < \varepsilon_\obs $ afterward.

\subsection{Cost minimization}\label{ssec:alg-ctrl}
Several options are available to adjust the leader's action $ r $, but in this paper our analysis will focus on a gradient descent method, which is fairly robust for a wide range of problems. Our ultimate goal is to minimize the leader's cost $ J(r, a) = J \big( r, \hat f(\theta, r) \big) $. However, computing the gradient descent direction of the actual cost requires knowledge of the actual value $ \theta $. Therefore, we minimize instead the predicted cost $ \hat J(r, \hat\theta) = J \big( r, \hat f(\hat\theta, r) \big) $, which depends instead on the parameter estimate $ \hat\theta $. This change in objective is justified by the property that $ \|\hat J(r, \hat\theta) - J(r, a)\| \leq \|e_\obs\| < \varepsilon_\obs $ holds after a finite time, which will be established in Section~\ref{sec:conv}.

The time derivative of the predicted cost $ \hat J(r, \hat\theta) $ is given by
\begin{equation}\label{eq:est-cost-grad}
    \dot{\hat J}(r, \hat\theta) = \jacob{r} \hat J(r, \hat\theta)\, \dot r + \jacob{\theta} \hat J(r, \hat\theta)\, \dot{\hat\theta},
\end{equation}
with
\begin{equation*}
\begin{aligned}
    \jacob{r} \hat J(r, \hat\theta) &:= \jacob{r} J \big( r, \hat f(\hat\theta, r) \big) + \jacob{a} J \big( r, \hat f(\hat\theta, r) \big) \jacob{r} \hat f(\hat\theta, r), \\
    \jacob{\theta} \hat J(r, \hat\theta) &:= \jacob{a} J \big( r, \hat f(\hat\theta, r) \big) \jacob{\theta} \hat f(r).
\end{aligned}
\end{equation*}
Here $ \jacob{r} J \big( r, \hat f(\hat\theta, r) \big) $ denotes the gradient of the map $ r \mapsto J(r, \hat a) $ at $ \hat a = \hat f(\hat\theta, r) $; thus $ \jacob{r} \hat J(r, \hat\theta) $ and $ \jacob{\theta} \hat J(r, \hat\theta) $ are the gradients of the maps $ r \mapsto \hat J(r, \hat\theta) = J \big( r, \hat f(\hat\theta, r) \big) $ and $ \hat\theta \mapsto \hat J(r, \hat\theta) $, respectively. As we will establish that the adaptation of the parameter estimate $ \hat\theta $ stops in finite time, we neglect the term with $ \dot{\hat\theta} $ in \eqref{eq:est-cost-grad} and focus exclusively in adjusting $ r $ along the gradient descent direction of $ r \mapsto \hat J(r, \hat\theta) $. This motivates the following optimization algorithm to adjust the leader's action:
\begin{equation}\label{eq:ctrl-dyn}
    \dot r = \big[ {-\lambda_r} \jacob{r} \hat J(r, \hat\theta)^\top \big]_{T_\cR(r)},
\end{equation}
where $ \lambda_r > 0 $ is a preselected constant. The projection $ [\cdot]_{T_\cR(r)} $ onto the tangent cone $ T_\cR(r) $ ensures that the leader's action $ r $ remains inside the action set $ \cR $ \cite{Aubin1991}; see Appendix~\ref{apx:proj-dyn} for more details. From \eqref{eq:est-cost-grad} and \eqref{eq:ctrl-dyn} we conclude that
\begin{equation*}
\begin{aligned}
    \dot{\hat J}(r, \hat\theta) &= \jacob{r} \hat J(r, \hat\theta) \big[ {-\lambda_r} \jacob{r} \hat J(r, \hat\theta)^\top \big]_{T_\cR(r)} + \jacob{\theta} \hat J(r, \hat\theta)\, \dot{\hat\theta} \\
    &= -\big\| \big[ {-\lambda_r} \jacob{r} \hat J(r, \hat\theta)^\top \big]_{T_\cR(r)} \big\|^2 \big/ \lambda_r + \jacob{\theta} \hat J(r, \hat\theta)\, \dot{\hat\theta} \\
    &= -\|\dot r\|^2/\lambda_r + \jacob{\theta} \hat J(r, \hat\theta)\, \dot{\hat\theta},
\end{aligned}
\end{equation*}
where the second equality follows from \eqref{eq:tan-proj-zero}. We thus conclude that the optimization algorithm \eqref{eq:ctrl-dyn} guarantees
\begin{equation*}
    \dot{\hat\theta} = 0 \implies \dot{\hat J}(r, \hat\theta) \leq -\|\dot r\|^2/\lambda_r \leq 0.
\end{equation*}
In the convergence analysis in Section~\ref{sec:conv}, we will show that the leader's action $ r $ converges asymptotically to the set of points for which the first-order necessary condition for optimality holds for the optimization problem \eqref{eq:op-est-cost}.

\section{Convergence analysis}\label{sec:conv}
We now present the main result of this paper:
\begin{thm}\label{thm:conv}
Suppose that Assumption~\ref{ass:reg} holds. For each pair of given error thresholds $ \varepsilon_\obs > \varepsilon_\obs' > 0 $ in \eqref{eq:est-sw}, the estimation and optimization algorithms \eqref{eq:est-dyn} and \eqref{eq:ctrl-dyn} guarantee the following:
\begin{enumerate}
    \item There exists a time $ T \geq 0 $ such that
        \begin{equation}\label{eq:conv-obs-est}
            \|e_\obs(t)\| < \varepsilon_\obs, \quad \hat\theta(t) = \hat\theta(T) \qquad \forall t \geq T.
        \end{equation}
    \item The first-order necessary condition for optimality holds asymptotically for the optimization problem \eqref{eq:op-est-cost}, that is,
        \begin{equation}\label{eq:conv-ctrl}
            \lim_{t \to \infty} \big[ {-\jacob{r}} \hat J(r(t), \hat\theta(T))^\top \big]_{T_\cR(r(t))} = 0.
        \end{equation}
\end{enumerate}
\end{thm}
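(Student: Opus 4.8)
The plan is to handle the two claims in sequence: first show that the parameter estimate $\hat\theta$ stops moving after a finite time (with the observation error below $\varepsilon_\obs$ thereafter), and then analyze the resulting autonomous projected‑gradient flow in $r$ by a LaSalle‑type invariance principle. As a preliminary I would invoke the projected‑dynamics framework (cf.\ Appendix~\ref{apx:proj-dyn}) together with the hysteresis rule~\eqref{eq:est-sw} to obtain a solution $(r,\hat\theta)$ of \eqref{eq:est-dyn}--\eqref{eq:est-sw} and \eqref{eq:ctrl-dyn} defined for all $t\ge0$ with $r(t)\in\cR$, $\hat\theta(t)\in\Theta$; since $\cR$, $\Theta$, $\cA$ are compact and $J$, $\hat f$ are $C^1$, the right‑hand sides of \eqref{eq:est-dyn}--\eqref{eq:ctrl-dyn} are bounded, so $r$ and $\hat\theta$ are globally Lipschitz in $t$, and since $a=\hat f(\theta,r)$ by \eqref{eq:match}, the observation error $e_\obs$ in \eqref{eq:obs-err} is then Lipschitz in $t$ with some constant $L$ depending only on the data. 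This Lipschitz bound is used repeatedly.

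\emph{Part (1).} From \eqref{eq:est-lya}, $\|\hat\theta(t)-\theta\|^2$ is nonincreasing and nonnegative, hence convergent, so $\int_0^\infty\lambda_e\|e_\obs\|^2\,\d t\le\tfrac12\|\hat\theta(0)-\theta\|^2<\infty$. I would then show $\lambda_e$ switches only finitely often: each time $\lambda_e$ jumps to $\lambda_\theta$ we have $\|e_\obs\|\ge\varepsilon_\obs$, and by hysteresis $\lambda_e$ stays at $\lambda_\theta$ until $\|e_\obs\|$ first reaches $\varepsilon_\obs'$, which by Lipschitzness takes time at least $(\varepsilon_\obs-\varepsilon_\obs')/L$; over that whole sub‑interval $\|e_\obs\|\ge\varepsilon_\obs'$, so it contributes at least $\lambda_\theta(\varepsilon_\obs')^2(\varepsilon_\obs-\varepsilon_\obs')/L>0$ to the finite integral above, forcing finitely many such sub‑intervals. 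Hence $\lambda_e$ is eventually constant; it cannot be $\equiv\lambda_\theta$ eventually, since that would keep $\|e_\obs\|>\varepsilon_\obs'$ for all large $t$ and make the integral diverge. So there is a finite $T$ with $\lambda_e\equiv0$ on $[T,\infty)$; then \eqref{eq:est-dyn} gives $\hat\theta(t)=\hat\theta(T)$ and \eqref{eq:est-sw} gives $\|e_\obs(t)\|<\varepsilon_\obs$ for all $t\ge T$, i.e.\ \eqref{eq:conv-obs-est}.

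\emph{Part (2).} On $[T,\infty)$ the dynamics reduces to the autonomous projected‑gradient flow $\dot r=[-\lambda_r\jacob{r}g(r)^\top]_{T_\cR(r)}$ with $g(r):=\hat J(r,\hat\theta(T))$, where $g$ is $C^1$ and $\cR$ is convex and compact. As in Section~\ref{ssec:alg-ctrl}, \eqref{eq:tan-proj-zero} gives $\dot g(r(t))=-\|\dot r(t)\|^2/\lambda_r\le0$, so $g(r(t))$ converges and $\int_T^\infty\|\dot r\|^2\,\d t<\infty$, while $r(\cdot)$ stays in the compact set $\cR$. I would then establish and apply an invariance principle for this flow: recasting it as the differential inclusion $\dot r\in-\lambda_r\jacob{r}g(r)^\top-N_\cR(r)$ (the minimal‑norm selection), with an upper‑semicontinuous, nonempty‑convex‑compact‑valued right‑hand side, the $\omega$‑limit set $\Omega$ of $r(\cdot)$ is nonempty, compact and weakly invariant; $g$ is constant on $\Omega$ by continuity; and weak invariance together with $\dot g=-\|\dot r\|^2/\lambda_r$ forces $\dot r\equiv0$ along every complete solution lying in $\Omega$, i.e.\ $[-\jacob{r}g(r)^\top]_{T_\cR(r)}=0$ on $\Omega$ (using $\dot r=\lambda_r[-\jacob{r}g(r)^\top]_{T_\cR(r)}$, since $T_\cR(r)$ is a cone). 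Since $r(t)\to\Omega$, this gives \eqref{eq:conv-ctrl}.

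\emph{Main obstacle.} The crux is Part~(2): the map $r\mapsto[-\jacob{r}g(r)^\top]_{T_\cR(r)}$ is discontinuous, because $T_\cR(r)$ collapses abruptly on $\partial\cR$, so from $\int_T^\infty\|\dot r\|^2\,\d t<\infty$ and convergence of $g(r(t))$ one cannot conclude $\dot r(t)\to0$ by Barbalat's lemma, nor invoke the classical LaSalle invariance principle off the shelf. The work is (i) casting the projected dynamics as a differential inclusion with a well‑behaved right‑hand side so that solutions exist and $\omega$‑limit sets are weakly invariant, and (ii) upgrading ``$r(t)$ approaches the first‑order stationary set'' to the pointwise limit \eqref{eq:conv-ctrl}, which is delicate precisely because $r\mapsto\|[-\jacob{r}g(r)^\top]_{T_\cR(r)}\|$ is only lower‑semicontinuous near $\partial\cR$; this is the invariance principle for projected gradient descent announced in the introduction as a contribution of independent interest.
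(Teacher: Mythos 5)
Your overall strategy coincides with the paper's: existence and boundedness of solutions via the hysteresis/projected-inclusion machinery (Lemma~\ref{lem:dyn-soln}), a Barbalat-type finite-switching argument for item~1), and an invariance principle for the projected gradient flow for item~2). Part~(1) is complete and correct---your uniform positive contribution $\lambda_\theta(\varepsilon_\obs')^2(\varepsilon_\obs-\varepsilon_\obs')/L$ per ``on'' interval is exactly the estimate the paper obtains with its dwell time $\delta$ and the sequence $(s_k)$, just organized as ``finitely many switches'' rather than as a contradiction with an unbounded sequence.

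The gap is in Part~(2), at the step you yourself flag as the crux. The inclusion you propose, $\dot r\in-\lambda_r\jacob{r}g(r)^\top-N_\cR(r)$, does \emph{not} have compact values: $N_\cR(r)$ is an unbounded cone whenever $r\in\partial\cR$, so the invariance theorem you want to invoke does not apply as stated. Moreover, even after truncating, weak invariance of the $\omega$-limit set under the \emph{inclusion} does not by itself force $[-\jacob{r}g^\top]_{T_\cR}=0$ there: the identity $\dot g=-\|\dot r\|^2/\lambda_r$ holds only for the minimal-norm (projected) selection, not along arbitrary solutions of the inclusion, so you cannot conclude $\dot r\equiv0$ on complete solutions in $\Omega$ without a decrease inequality valid for \emph{every} element of the set-valued map. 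The paper's Proposition~\ref{prop:proj-dyn-lasalle} repairs both defects at once: writing $p(z):=-\lambda_r\jacob{r}\hat J(z,\hat\theta(T))^\top$ for the descent field, it takes $G(z):=\{p(z)-w:w\in N_\cR(z)\}\cap\big\|p(z)-[p(z)]_{T_\cR(z)}\big\|\,\bB(p(z))$ as in \eqref{eq:proj-dyn-incl-fcn}. The intersection with the ball makes $G$ compact-convex-valued and upper semicontinuous while still containing the projected selection, and---via \eqref{eq:tan-proj-zero}---yields $p(z)^\top w\ge\big\|[p(z)]_{T_\cR(z)}\big\|^2$ for \emph{all} $w\in G(z)$, which is precisely the uniform decrease condition Ryan's invariance theorem needs to localize the limit set in $\{z:[p(z)]_{T_\cR(z)}=0\}$ and hence obtain \eqref{eq:conv-ctrl}. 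Your plan is salvageable, but this construction (or an equivalent one) is the missing ingredient, not a routine detail.
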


Essentially, item~1) ensures that the parameter estimate $ \hat\theta $ converges in finite time to a point that is indistinguishable from the actual value $ \theta $ using observations of the follower's action $ a = \hat f(\theta, r) $ and the leader's cost $ J(r, a) $, up to an error bounded by the threshold $ \varepsilon_\obs $. Regarding item~2), the necessity of \eqref{eq:conv-ctrl} for optimality is justified as follows.
\begin{lem}\label{lem:opt-nec}
If $ \hat r^* $ is a locally optimum of the optimization problem \eqref{eq:op-est-cost} with some fixed $ \hat\theta $, then
\begin{equation}\label{eq:opt-nec}
    \big[ {-\jacob{r}} \hat J(\hat r^*, \hat\theta)^\top \big]_{T_\cR(\hat r^*)} = 0.
\end{equation}
\end{lem}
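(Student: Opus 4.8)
The plan is to derive the standard first-order variational inequality for a local minimizer over a convex set, recast it as a normal-cone membership, and then use the projection identities \eqref{eq:tan-proj-cone}--\eqref{eq:tan-proj-zero} to turn that membership into the vanishing of the tangent-cone projection.

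First I would exploit convexity of $\cR$ together with differentiability of $r \mapsto \hat J(r, \hat\theta) = J\big(r, \hat f(\hat\theta, r)\big)$, which holds under Assumption~\ref{ass:reg}. For any $z \in \cR$, the segment $\hat r^* + h(z - \hat r^*)$ lies in $\cR$ for $h \in [0, 1]$, so local optimality of $\hat r^*$ forces $\hat J\big(\hat r^* + h(z - \hat r^*), \hat\theta\big) \geq \hat J(\hat r^*, \hat\theta)$ for all sufficiently small $h > 0$. Dividing by $h$ and letting $h \searrow 0$ yields $\jacob{r}\hat J(\hat r^*, \hat\theta)\,(z - \hat r^*) \geq 0$ for every $z \in \cR$. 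Since $w \mapsto \jacob{r}\hat J(\hat r^*, \hat\theta)\, w$ is linear, hence continuous and positively homogeneous, this inequality extends from the set $\{h(z - \hat r^*): z \in \cR,\, h > 0\}$ to its closure, namely the tangent cone $T_\cR(\hat r^*)$ in \eqref{eq:tan-cone}. Thus $\jacob{r}\hat J(\hat r^*, \hat\theta)\, w \geq 0$ for all $w \in T_\cR(\hat r^*)$, which by \eqref{eq:normal-cone} says precisely that $v := -\jacob{r}\hat J(\hat r^*, \hat\theta)^\top \in N_\cR(\hat r^*)$.

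It then remains to show that $v \in N_\cR(\hat r^*)$ forces $[v]_{T_\cR(\hat r^*)} = 0$. Write $w := [v]_{T_\cR(\hat r^*)}$. By \eqref{eq:tan-proj-cone}, $w \in T_\cR(\hat r^*)$ and $v - w \in N_\cR(\hat r^*)$, and by \eqref{eq:tan-proj-zero}, $(v - w)^\top w = 0$. Hence $v^\top w = \big(w + (v - w)\big)^\top w = \|w\|^2 + (v - w)^\top w = \|w\|^2$. On the other hand, since $v \in N_\cR(\hat r^*)$ and $w \in T_\cR(\hat r^*)$, the definition \eqref{eq:normal-cone} gives $v^\top w \leq 0$. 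Combining the two, $\|w\|^2 \leq 0$, so $w = [v]_{T_\cR(\hat r^*)} = 0$, which is exactly \eqref{eq:opt-nec}.

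As for difficulty, every step is routine: the last paragraph is just the Moreau decomposition for the polar pair $\big(T_\cR(\hat r^*),\, N_\cR(\hat r^*)\big)$ already packaged in \eqref{eq:tan-proj-cone}--\eqref{eq:tan-proj-zero}, and the variational inequality is the classical necessary condition. The only point that deserves a careful word is the passage from arbitrary $z \in \cR$ to arbitrary $w \in T_\cR(\hat r^*)$, which is handled purely by linearity of the directional derivative; no genuine obstacle arises.
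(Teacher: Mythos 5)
Your proof is correct and follows essentially the same route as the paper: establish $-\jacob{r}\hat J(\hat r^*,\hat\theta)^\top \in N_\cR(\hat r^*)$ and then use \eqref{eq:tan-proj-cone}--\eqref{eq:tan-proj-zero} to conclude that the tangent-cone projection vanishes. The only difference is that the paper obtains the normal-cone membership by citing a result of Rockafellar and Wets, whereas you derive it directly from the variational inequality and likewise spell out the Moreau-type cancellation that the paper leaves implicit; both steps are carried out correctly.
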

\begin{proof}
The results in \cite[Th.~6.12, p.~207]{RockafellarWets1998} allow us to conclude that $ -\jacob{r} \hat J(\hat r^*, \hat\theta)^\top \in N_\cR(\hat r^*) $. Then \eqref{eq:opt-nec} follows from \eqref{eq:normal-cone}--\eqref{eq:tan-proj-zero}.
\end{proof}

\begin{proof}[Proof of Theorem~\ref{thm:conv}]
As the right hand-sides of \eqref{eq:est-dyn} and \eqref{eq:ctrl-dyn} are potentially discontinuous due to projections and switching, the proof of Theorem~\ref{thm:conv} utilizes results from differential inclusions theory; see Appendix~\ref{apx:proj-dyn} for the necessary preliminaries.

First, we establish the existence of solutions for the system defined by \eqref{eq:est-dyn} and \eqref{eq:ctrl-dyn}.
\begin{lem}\label{lem:dyn-soln}
For each $ (\hat\theta_0, r_0) \in \Theta \times \cR $, there exists a solution to the system defined by \eqref{eq:est-dyn} and \eqref{eq:ctrl-dyn} on $ \R_+ $---that is, there exist absolutely continuous functions $ \hat\theta: \R_+ \to \Theta $ and $ r: \R_+ \to \cR $ with $ (\hat\theta(0), r(0)) = (\hat\theta_0, r_0) $ such that \eqref{eq:est-dyn} and \eqref{eq:ctrl-dyn} hold almost everywhere on $ \R_+ $. Moreover, we have
\begin{equation}\label{eq:l-inf}
    \hat\theta, \dot{\hat\theta}, r, \dot r, e_\obs, \dot e_\obs \in \Linf.
\end{equation}
\end{lem}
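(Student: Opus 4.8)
The plan is to treat the closed loop as a \emph{switched projected differential inclusion}. Introduce the combined state $ z := (\hat\theta, r) $, which takes values in the convex compact set $ \cX := \Theta \times \cR $ (Assumption~\ref{ass:reg}); note that $ a = \hat f(\theta, r) $ is a function of $ z $ by \eqref{eq:match}, so the right-hand sides of \eqref{eq:est-dyn} and \eqref{eq:ctrl-dyn} depend only on $ z $ and on the (memory) value of the hysteresis signal $ \lambda_e $. For $ \lambda_e $ frozen at one of its two values $ 0 $ or $ \lambda_\theta $, these right-hand sides form a vector field projected onto the tangent cone $ T_\cX(z) = T_\Theta(\hat\theta) \times T_\cR(r) $. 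Existence will be produced in two layers: first, for a fixed value of $ \lambda_e $, by invoking the existence theory for projected dynamics on convex sets recalled in Appendix~\ref{apx:proj-dyn}; second, by concatenating the pieces across switching times, which a uniform dwell-time estimate shows to be finite in number on every bounded interval.

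I would establish the a priori bounds \eqref{eq:l-inf} first, since they are what lets the solution be extended globally and the switching be controlled. For any solution that remains in $ \cX $ we have $ \hat\theta, r \in \Linf $ trivially, and $ a = \hat f(\theta, r) \in \Linf $ since $ \hat f $ is continuous on the compact set $ \Theta \times \cR $. By Assumption~\ref{ass:reg} the maps $ z \mapsto e_\obs $, $ z \mapsto K(r, a, \hat\theta) $ (the integrand in \eqref{eq:est-gain} is continuous, hence so is its integral over $ \rho \in [0,1] $), and $ z \mapsto \jacob{r}\hat J(r, \hat\theta)^\top $ are continuous, hence bounded on $ \cX $ by constants depending only on the problem data. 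Projection onto a cone is nonexpansive and fixes the origin, so $ \|\dot{\hat\theta}\| \le \lambda_\theta \|K(r,a,\hat\theta)^\top e_\obs\| $ and $ \|\dot r\| \le \lambda_r \|\jacob{r}\hat J(r,\hat\theta)^\top\| $ are bounded by data-dependent constants, giving $ \dot{\hat\theta}, \dot r \in \Linf $. Since $ \hat f $ and $ J $ are $ \Cone $, the map $ z \mapsto e_\obs $ is $ \Cone $, hence Lipschitz on the compact $ \cX $; composing it with the Lipschitz curves $ \hat\theta(\cdot), r(\cdot) $ (Lipschitz because their derivatives are in $ \Linf $) makes $ e_\obs(\cdot) $ locally Lipschitz, and differentiating along the solution gives $ \dot e_\obs = \jacob{r} e_\obs\, \dot r + \jacob{\hat\theta} e_\obs\, \dot{\hat\theta} $ a.e.\ with bounded Jacobians, so $ \dot e_\obs \in \Linf $. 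The point to stress is that all these bounds are uniform over solutions and over the time horizon.

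For the existence and global extension: fix $ \lambda_e \in \{0, \lambda_\theta\} $ and write the corresponding closed-loop field as $ z \mapsto [g_{\lambda_e}(z)]_{T_\cX(z)} $ with $ g_{\lambda_e} $ continuous on $ \cX $; Appendix~\ref{apx:proj-dyn} provides, for each initial point in $ \cX $, an absolutely continuous solution that remains in $ \cX $ on its interval of existence, and the uniform bound on $ \dot z $ rules out finite escape, so the solution persists until the next scheduled switch of $ \lambda_e $. By \eqref{eq:est-sw} a switch occurs only when $ \|e_\obs\| $ reaches $ \varepsilon_\obs $ from below (while $ \lambda_e = 0 $) or $ \varepsilon_\obs' $ from above (while $ \lambda_e = \lambda_\theta $); between a switch of one kind and the next switch of the other, $ \|e_\obs\| $ must traverse the hysteresis band, a change of at least $ \varepsilon_\obs - \varepsilon_\obs' > 0 $, and since $ |\tfrac{\d}{\d t}\|e_\obs\|| \le \|\dot e_\obs\| \le M_e $ a.e.\ for the uniform constant $ M_e $ from the previous paragraph, at least $ (\varepsilon_\obs - \varepsilon_\obs')/M_e > 0 $ units of time elapse between consecutive switches. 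Hence only finitely many switches occur on any bounded interval; concatenating the pieces produces an absolutely continuous $ z: \R_+ \to \cX $ satisfying \eqref{eq:est-dyn} and \eqref{eq:ctrl-dyn} a.e., and the bounds above give \eqref{eq:l-inf}.

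The principal obstacle is the first layer. The field $ g_{\lambda_e} $ is only continuous, not Lipschitz (e.g.\ $ K $ in \eqref{eq:est-gain} involves $ \jacob{a} J $, which under Assumption~\ref{ass:reg} is merely continuous), so classical Cauchy--Lipschitz theory is unavailable even before the projection is imposed; existence must instead be drawn from the convex-analytic structure, viewing $ \dot z = [g_{\lambda_e}(z)]_{T_\cX(z)} $ as the differential inclusion $ \dot z \in g_{\lambda_e}(z) - N_\cX(z) $ and applying a viability/sweeping-process existence result, with the tangency property $ [g_{\lambda_e}(z)]_{T_\cX(z)} \in T_\cX(z) $ from \eqref{eq:tan-proj-cone} guaranteeing invariance of $ \cX $---this is exactly what Appendix~\ref{apx:proj-dyn} is set up to deliver. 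A secondary, more bookkeeping-type difficulty is that the hysteresis signal $ \lambda_e $ in \eqref{eq:est-sw} must be shown to be well defined---right-continuous, with isolated switches---along a candidate solution before that solution is known to exist on all of $ \R_+ $; this is resolved by the uniform dwell time above, which is available precisely because the bound $ M_e $ does not depend on the (not yet constructed) solution.
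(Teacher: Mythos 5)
Your proposal is correct and follows essentially the same route as the paper's own proof in Appendix~B: freeze the hysteresis signal $\lambda_e$ at one of its two values, apply the projected-dynamical-system existence result of Appendix~A on the compact convex set $\Theta \times \cR$ for each phase, and use the uniform bound on $\dot e_\obs$ to obtain a dwell time of at least $(\varepsilon_\obs - \varepsilon_\obs')/M$ between consecutive switches, which excludes Zeno behavior and allows the pieces to be concatenated into a global solution. The $\Linf$ bounds in \eqref{eq:l-inf} are obtained from compactness and the nonexpansiveness of the cone projection exactly as you argue.
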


\begin{proof}
Lemma~\ref{lem:dyn-soln} follows from results on hysteresis switching in \cite{MorseMayneGoodwin1992} and results on projected differential inclusions in \cite{Aubin1991}; see Appendix~\ref{apx:dyn-soln} for the complete proof.
\end{proof}

Second, we establish item~1) of Theorem~\ref{thm:conv} via arguments along the lines of the proof of Barbalat's lemma \cite[Lemma~3.2.6, p.~76]{IoannouSun1996}. We cannot use Barbalat's lemma directly since the switching signal $ \lambda_e $ in \eqref{eq:est-lya} is not continuous but only piecewise continuous. Following \eqref{eq:est-lya}, we see that $ \|\hat\theta - \theta\|^2 $ is monotonically decreasing. Then $ \lim_{t \to \infty} \|\hat\theta(t) - \theta\| $, and thus
\begin{equation}\label{eq:est-barbalat-int}
    \lim_{t \to \infty} \int_{0}^{t} \lambda_e(s) \|e_\obs(s)\|^2 \d s,
\end{equation}
exists and is finite. On the other hand, \eqref{eq:est-dyn} and \eqref{eq:est-sw} imply that \eqref{eq:conv-obs-est} holds if there exists a time $ T \geq 0 $ such that
\begin{equation}\label{eq:est-barbalat-sw}
    \lambda_e(t) = 0 \qquad \forall t \geq T.
\end{equation}
Assume \eqref{eq:est-barbalat-sw} does not hold for any $ T \geq 0 $. Then \eqref{eq:est-sw} implies that there exists an unbounded increasing sequence $ (t_k)_{k \in \N} $ with $ t_0 > 0 $ such that
\begin{equation}\label{eq:est-barbalat-sw-inv}
    \lambda_e(t_k) = \lambda_\theta, \quad \|e_\obs(t_k)\| > \varepsilon_\obs' \qquad \forall k \in \N.
\end{equation}
Next, we show that there exists an unbounded sequence $ (s_k)_{k \in \N} $ with $ s_k \in [t_k - \delta, t_k] $ such that
\begin{equation}\label{eq:est-barbalat-interval}
    \|e_\obs(t)\| > \varepsilon_\obs',\quad \lambda_e(t) = \lambda_\theta \qquad \forall k \in \N, \forall t \in [s_k, s_k + \delta)
\end{equation}
with the constant
\begin{equation*}
    \delta := \min \!\bigg\{ t_0,\, \frac{\varepsilon_\obs - \varepsilon_\obs'}{\sup_{s \geq 0} \|\dot e_\obs(s)\|} \bigg\} > 0,
\end{equation*}
where the inequality follows from $ t_0 > 0 $, $ \varepsilon_\obs > \varepsilon_\obs' $, and $ \dot e_\obs \in \Linf $ in \eqref{eq:l-inf}. Indeed, for each $ k \in \N $, consider the following two possibilities:
\begin{enumerate}
    \item If $ \|e_\obs(t)\| < \varepsilon_\obs $ for all $ t \in [t_k - \delta, t_k] $, then \eqref{eq:est-sw} and $ \lambda_e(t_k) = \lambda_\theta $ imply that \eqref{eq:est-barbalat-interval} holds with $ s_k = t_k - \delta $.
    \item Otherwise, there exists an $ s_k \in [t_k - \delta, t_k] $ such that $ \|e_\obs(s_k)\| = \varepsilon_\obs $, and \eqref{eq:est-barbalat-interval} follows from the definition of $ \delta $ and \eqref{eq:est-sw}.
\end{enumerate}
Moreover, $ (s_k)_{k \in \N} $ is unbounded as $ (t_k)_{k \in \N} $ is unbounded. Following \eqref{eq:est-barbalat-interval}, we have
\begin{equation*}
    \int_{s_k}^{s_k+\delta} \lambda_e(s) \|e_\obs(s)\|^2 \d s > \lambda_\theta (\varepsilon_\obs')^2 \delta > 0
\end{equation*}
for the unbounded sequence $ (s_k)_{k \in \N} $, which contradicts the property that \eqref{eq:est-barbalat-int} exists and is finite. Therefore, there exists a time $ T \geq 0 $ such that \eqref{eq:est-barbalat-sw}, and thus \eqref{eq:conv-obs-est}, holds.

Finally, we prove item~2) of Theorem~\ref{thm:conv} based on the invariance principle for projected gradient descent Proposition~\ref{prop:proj-dyn-lasalle} in Appendix~\ref{apx:proj-dyn}. After the time $ T $ from item~1), the system \eqref{eq:ctrl-dyn} becomes
\begin{equation*}
    \dot r = \big[ {-\lambda_r} \jacob{r} \hat J(r, \hat\theta(T))^\top
\big]_{T_\cR(r)},
\end{equation*}
which can be modeled using the projected dynamical system \eqref{eq:proj-dyn} in Appendix~\ref{apx:proj-dyn} with the state $ x := r $ and the set $ \cS := \cR $. The corresponding function $ g $ in \eqref{eq:proj-dyn} is given by
\begin{equation*}
    g(x) := -\lambda_r \jacob{r} \hat J(x, \hat\theta(T))^\top,
\end{equation*}
which satisfies \eqref{eq:proj-dyn-grad} with $ V(x) := \lambda_r \hat J(x, \hat\theta(T)) $. Then \eqref{eq:conv-ctrl} follows from \eqref{eq:proj-dyn-lasalle} in Proposition~\ref{prop:proj-dyn-lasalle}.
\end{proof}

In Theorem~\ref{thm:conv}, there is no claim that the parameter estimate $ \hat\theta $ necessarily converges to the actual value $ \theta $. However, this can be guaranteed if the following \emph{persistent excitation (PE)} condition holds.
\begin{ass}[Persistent excitation]\label{ass:pe}
There exist constants $ \tau_0, \alpha_0 > 0 $ such that the gain matrix $ K(r, a, \hat\theta) $ defined by \eqref{eq:est-gain} satisfies
\begin{equation}\label{eq:pe}
    \int_{t}^{t+\tau_0} K(s)^\top K(s) \d s \geq \alpha_0 I \qquad \forall t \geq 0,
\end{equation}
where we let $ K(s) := K(r(s), a(s), \hat\theta(s)) $ for brevity, and the inequality means that the difference of the left- and right-hand sides is a positive semidefinite matrix.
\end{ass}

\begin{thm}\label{thm:conv-pe}
Suppose that Assumptions~\ref{ass:reg} and~\ref{ass:pe} hold. For each given threshold $ \varepsilon_\theta > 0 $, by setting
\begin{equation}\label{eq:est-sw-pe}
    \varepsilon_\theta \sqrt{\alpha_0/\tau_0} \geq \varepsilon_\obs > \varepsilon_\obs' > 0
\end{equation}
in \eqref{eq:est-sw}, the estimation and optimization algorithms \eqref{eq:est-dyn} and \eqref{eq:ctrl-dyn} guarantee the following:
\begin{enumerate}
    \item There exists a time $ T \geq 0 $ such that \eqref{eq:conv-obs-est} holds and
        \begin{equation}\label{eq:conv-pe}
            \|\hat\theta(T) - \theta\| < \varepsilon_\theta.
        \end{equation}
    \item The first-order necessary condition for optimality holds asymptotically for the optimization problem \eqref{eq:op-est-cost}, that is, \eqref{eq:conv-ctrl} holds.
\end{enumerate}
\end{thm}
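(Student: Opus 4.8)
The plan is to piggyback on Theorem~\ref{thm:conv}, which already delivers almost everything, and to add only a short sandwich estimate that converts the observation-error bound into an estimation-error bound via the PE condition. First I would apply Theorem~\ref{thm:conv} with the thresholds $\varepsilon_\obs > \varepsilon_\obs' > 0$ chosen as in \eqref{eq:est-sw-pe}: this produces a time $T \geq 0$ at which \eqref{eq:conv-obs-est} holds, so that $\|e_\obs(t)\| < \varepsilon_\obs$ and $\hat\theta(t) = \hat\theta(T)$ for all $t \geq T$. Item~2) of the present theorem is literally item~2) of Theorem~\ref{thm:conv}, so it needs no further argument; the only new content is the bound \eqref{eq:conv-pe} on the frozen estimate $\hat\theta(T)$.

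For \eqref{eq:conv-pe}, the key point is that after the adaptation stops the estimation error $\hat\theta(T) - \theta$ is a \emph{fixed} vector, so Lemma~\ref{lem:obs-est} gives $e_\obs(s) = K(s)\,(\hat\theta(T) - \theta)$ for every $s \geq T$, with $K(s) := K(r(s), a(s), \hat\theta(s))$. Fixing any $t \geq T$ and integrating $\|e_\obs(s)\|^2 = (\hat\theta(T) - \theta)^\top K(s)^\top K(s)\,(\hat\theta(T) - \theta)$ over $[t, t + \tau_0]$, I would pull the constant vector out of the quadratic form and invoke the PE inequality \eqref{eq:pe} to obtain the lower bound $\int_t^{t+\tau_0} \|e_\obs(s)\|^2 \,\d s \geq \alpha_0 \|\hat\theta(T) - \theta\|^2$. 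On the other hand, since $e_\obs$ is continuous (Lemma~\ref{lem:dyn-soln}) and strictly bounded by $\varepsilon_\obs$ on the compact interval $[t, t+\tau_0]$, its square attains a maximum there that is strictly below $\varepsilon_\obs^2$, giving the strict upper bound $\int_t^{t+\tau_0} \|e_\obs(s)\|^2 \,\d s < \tau_0 \varepsilon_\obs^2$. Combining the two and using $\alpha_0 > 0$ yields $\|\hat\theta(T) - \theta\| < \varepsilon_\obs \sqrt{\tau_0/\alpha_0} \leq \varepsilon_\theta$, where the last step is precisely the design choice \eqref{eq:est-sw-pe}.

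I do not expect a serious obstacle here; the argument is genuinely short once Theorem~\ref{thm:conv} is in hand. The two places that need care are: (i) using the finite-time termination of the adaptation so that $\hat\theta(T) - \theta$ may legitimately be taken out of the integrand (without this, the $\hat\theta$-dependence of $K$ would spoil the quadratic-form manipulation); and (ii) preserving a strict inequality in \eqref{eq:conv-pe}, which is why I pass through the maximum of the continuous function $\|e_\obs\|^2$ on a compact window rather than merely writing $\|e_\obs\| \le \varepsilon_\obs$. Finally, I would note that \eqref{eq:pe} is assumed for all $t \geq 0$, hence in particular for the chosen $t \geq T$, so there is no compatibility issue between the excitation window $[t, t+\tau_0]$ and the termination time $T$.
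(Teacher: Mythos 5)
Your proposal is correct and follows essentially the same route as the paper: invoke Theorem~\ref{thm:conv} for \eqref{eq:conv-obs-est} and \eqref{eq:conv-ctrl}, then sandwich $\int_{T}^{T+\tau_0}\|e_\obs(s)\|^2\,\d s$ between the strict upper bound $\varepsilon_\obs^2\tau_0$ and the PE lower bound $\alpha_0\|\hat\theta(T)-\theta\|^2$, using that $\hat\theta$ is frozen after $T$ so the error vector can be pulled out of the quadratic form. The only cosmetic differences are that the paper integrates over the specific window $[T,T+\tau_0]$ rather than a generic $[t,t+\tau_0]$ with $t\ge T$, and it does not bother to justify strictness via the maximum on a compact interval (pointwise strict inequality of the integrand suffices); neither affects correctness.
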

\begin{proof}
As \eqref{eq:conv-obs-est} and \eqref{eq:conv-ctrl} are established in Theorems~\ref{thm:conv}, it remains to prove \eqref{eq:conv-pe}. To this effect, we note that the inequality in \eqref{eq:conv-obs-est} implies
\begin{equation*}
    \int_{T}^{T + \tau_0} \|e_\obs(s)\|^2 \d s < \varepsilon_\obs^2 \tau_0 \leq \alpha_0 \varepsilon_\theta^2,
\end{equation*}
where the second inequality follows from \eqref{eq:est-sw-pe}. On the other hand, \eqref{eq:obs-est} and the equality in \eqref{eq:conv-obs-est} imply
\begin{equation*}
\begin{aligned}
    &\quad\, \int_{T}^{T + \tau_0} \|e_\obs(s)\|^2 \d s \\
    &= \int_{T}^{T + \tau_0} \|K(s) (\hat\theta(T) - \theta)\|^2 \d s \\
    &= (\hat\theta(T) - \theta)^\top \bigg( \int_{T}^{T + \tau_0} K(s)^\top K(s) \d s \Bigg) (\hat\theta(T) - \theta) \\
    &\geq \alpha_0 \|\hat\theta(T) - \theta\|^2,
\end{aligned}
\end{equation*}
where the inequality follows from the PE condition \eqref{eq:pe}. Combining the upper and lower bounds above yields \eqref{eq:conv-pe}.
\end{proof}

\begin{rmk}
In view of \eqref{eq:est-gain}, a sufficient condition for \eqref{eq:pe} is
\begin{equation}\label{eq:pe-sim}
    \int_{t}^{t+\tau_0} \jacob{\theta} f(r(s))^\top \jacob{\theta} f(r(s)) \d s \geq \alpha_0 I \qquad \forall t \geq 0.
\end{equation}
The PE condition \eqref{eq:pe-sim} is more restrictive than \eqref{eq:pe}; however, it can be checked without knowing the parameter estimate $ \hat\theta $.
\end{rmk}

\begin{rmk}\label{rmk:pe}
From the proof of Theorem~\ref{thm:conv-pe}, we see that \eqref{eq:conv-pe} only requires \eqref{eq:pe} or \eqref{eq:pe-sim} to hold at $ t = T $ for the time $ T $ from Theorem~\ref{thm:conv}. Therefore, to ensure \eqref{eq:conv-pe} in practice, it suffices to enforce \eqref{eq:pe} or \eqref{eq:pe-sim} when $ \lambda_e $ in \eqref{eq:est-sw} has been set to zero.
\end{rmk}

\section{Model mismatch}\label{sec:mismatch}
Up till now we assumed that there was some unknown value $ \theta $ from within the parameter set $ \Theta $ such that \eqref{eq:match} holds for the follower's strategy $ f $ and the parameterized function $ \hat f $ that is known to the leader. In this section, we consider the case where such perfect matching may not exist, and study the effect of a bounded mismatch between $ f(r) $ and $ \hat f(\theta, r) $.
\begin{ass}[Mismatch]\label{ass:mismatch-bnd}
The follower's strategy $ f $ is continuous, and there is an unknown value $ \theta \in \Theta $ such that
\begin{equation}\label{eq:mismatch-bnd}
    \|\hat f(\theta, r) - f(r)\| \leq \varepsilon_f/\kappa \qquad \forall r \in \cR
\end{equation}
with the constant
\begin{equation*}
    \kappa := \max_{r \in \cR,\, \hat\theta \in \Theta} \bigg\| \!\begin{bmatrix}
        I \\
        \int_{0}^{1} \jacob{a} J \big( r, \rho \hat f(\hat\theta, r) + (1 - \rho) f(r) \big) \d\rho
    \end{bmatrix}\! \bigg\|
\end{equation*}
and some known constant $ \varepsilon_f \geq 0 $.
\end{ass}

The follower's action set $ \cA $ is still unknown to the leader, except that $ \cA \subset \cup_{\hat\theta \in \Theta,\, r \in \cR} (\varepsilon_f/\kappa)\, \bB \big( \hat f(\hat\theta, r) \big) $ as implied by \eqref{eq:mismatch-bnd}. Assumption~\ref{ass:mismatch-bnd} generalized the condition \eqref{eq:match} as \eqref{eq:match} is equivalent to \eqref{eq:mismatch-bnd} with $ \varepsilon_f = 0 $.

Similar arguments to those in the proof of Lemma~\ref{lem:obs-est} show that the observation error $ e_\obs $ now satisfies
\begin{equation}\label{eq:obs-est-mismatch}
    e_\obs = K(r, a, \hat\theta) (\hat\theta - \theta) + e_f
\end{equation}
with the gain matrix $ K(r, a, \hat\theta) $ defined by \eqref{eq:est-gain} and the \emph{mismatch error}
\begin{equation*}
    e_f := \begin{bmatrix}
        I \\
        \int_{0}^{1} \jacob{a} J \big( r, \rho \hat f(\hat\theta, r) + (1 - \rho)\, a \big) \d\rho
    \end{bmatrix} \big( \hat f(\theta, r) - a \big).
\end{equation*}
It turns out that Assumption~\ref{ass:mismatch-bnd} guarantees
\begin{equation}\label{eq:mismatch-ineq}
    \|e_f(t)\| \leq \varepsilon_f \qquad \forall t \geq 0.
\end{equation}

The following two results extend Theorems~\ref{thm:conv} and~\ref{thm:conv-pe} to the current case without perfect matching between $ f(r) $ and $ \hat f(\theta, r) $ for some $ \theta \in \Theta $.

\begin{thm}\label{thm:conv-mismatch}
Suppose that Assumptions~\ref{ass:reg} and~\ref{ass:mismatch-bnd} hold. For each pair of given thresholds $ \varepsilon_\obs > \varepsilon_\obs' > \varepsilon_f $ in \eqref{eq:est-sw}, the estimation and optimization algorithms \eqref{eq:est-dyn} and \eqref{eq:ctrl-dyn} guarantee the same convergence results as those in Theorem~\ref{thm:conv}.
\end{thm}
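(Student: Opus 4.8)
\emph{Proof plan.}
The plan is to mirror the three-part structure of the proof of Theorem~\ref{thm:conv}, inserting the mismatch term $ e_f $ of \eqref{eq:obs-est-mismatch} wherever the observation error enters, and exploiting both the bound $ \|e_f\| \le \varepsilon_f $ from \eqref{eq:mismatch-ineq} and the strengthened threshold ordering $ \varepsilon_\obs > \varepsilon_\obs' > \varepsilon_f $. First I would re-establish the analogue of Lemma~\ref{lem:dyn-soln}: the dynamics \eqref{eq:est-dyn} and \eqref{eq:ctrl-dyn} are literally unchanged, and since $ f $ is continuous (Assumption~\ref{ass:mismatch-bnd}) and $ \cR $ is compact, the follower's action $ a = f(r) $ stays bounded, so the same arguments based on hysteresis switching \cite{MorseMayneGoodwin1992} and projected differential inclusions \cite{Aubin1991} yield an absolutely continuous solution on $ \R_+ $ with $ \hat\theta, \dot{\hat\theta}, r, \dot r, e_\obs \in \Linf $. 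The only real change is that $ \dot e_\obs \in \Linf $ is no longer available, because $ f $ is merely continuous rather than continuously differentiable; it is replaced by the weaker—but sufficient—property that $ e_\obs(\cdot) $ is \emph{uniformly continuous} on $ \R_+ $, which holds because $ t \mapsto (\hat\theta(t), r(t)) $ is Lipschitz while the maps $ (\hat\theta, r) \mapsto \hat f(\hat\theta, r) $, $ (r, a) \mapsto J(r, a) $, and $ r \mapsto f(r) $ are uniformly continuous on the relevant compact sets.

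For item~1) I would repeat the Lyapunov computation for $ \|\hat\theta - \theta\|^2 $, now using $ e_\obs = K(r, a, \hat\theta)(\hat\theta - \theta) + e_f $. Exactly as in Section~\ref{ssec:alg-est}, the projection inequality gives
\begin{equation*}
\begin{aligned}
    \frac{\d \|\hat\theta - \theta\|^2}{\d t}
        &\le -2 \lambda_e (\hat\theta - \theta)^\top K(r, a, \hat\theta)^\top e_\obs \\
        &= -2 \lambda_e (e_\obs - e_f)^\top e_\obs \\
        &\le -2 \lambda_e \|e_\obs\| \big( \|e_\obs\| - \varepsilon_f \big),
\end{aligned}
\end{equation*}
where the last step uses Cauchy--Schwarz and $ \|e_f\| \le \varepsilon_f $. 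By \eqref{eq:est-sw}, $ \lambda_e(t) = \lambda_\theta $ forces $ \|e_\obs(t)\| > \varepsilon_\obs' $, so with $ \varepsilon_\obs' > \varepsilon_f $ the right-hand side is $ \le 0 $ at all times and is $ \le -2 \lambda_\theta \varepsilon_\obs' (\varepsilon_\obs' - \varepsilon_f) < 0 $ whenever the adaptation is on. Hence $ \|\hat\theta - \theta\| $ is still monotonically decreasing, $ \lim_{t \to \infty} \|\hat\theta(t) - \theta\| $ exists and is finite, and $ \int_0^\infty \lambda_e(s) \|e_\obs(s)\| (\|e_\obs(s)\| - \varepsilon_f) \, \d s < \infty $. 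From here the finite-time argument for \eqref{eq:est-barbalat-sw} is essentially verbatim that of Theorem~\ref{thm:conv}, with two cosmetic adjustments: the fixed interval length is taken as $ \delta := \min\{t_0, \delta_0\} $, where $ \delta_0 > 0 $ is a modulus of uniform continuity of $ \|e_\obs\| $ corresponding to the tolerance $ \varepsilon_\obs - \varepsilon_\obs' $; and the contradiction with finiteness of the integral is obtained from $ \int_{s_k}^{s_k + \delta} \lambda_e(s) \|e_\obs(s)\| (\|e_\obs(s)\| - \varepsilon_f) \, \d s > \lambda_\theta \varepsilon_\obs' (\varepsilon_\obs' - \varepsilon_f) \delta > 0 $ along an unbounded sequence $ (s_k) $. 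This produces a time $ T \ge 0 $ with $ \lambda_e \equiv 0 $ on $ [T, \infty) $, hence \eqref{eq:conv-obs-est}.

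Item~2) requires no modification whatsoever, since the optimization dynamics \eqref{eq:ctrl-dyn} never reference $ f $: once $ \hat\theta $ is frozen at $ \hat\theta(T) $, one models \eqref{eq:ctrl-dyn} as the projected dynamical system \eqref{eq:proj-dyn} with $ x := r $, $ \cS := \cR $, $ g(x) := -\lambda_r \jacob{r} \hat J(x, \hat\theta(T))^\top $, and $ V(x) := \lambda_r \hat J(x, \hat\theta(T)) $, and invokes Proposition~\ref{prop:proj-dyn-lasalle} exactly as before to obtain \eqref{eq:conv-ctrl}.

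The main obstacle—and the reason the hypothesis \eqref{eq:est-sw} must be strengthened to $ \varepsilon_\obs > \varepsilon_\obs' > \varepsilon_f $—is keeping the Lyapunov derivative \emph{strictly} negative while the adaptation is active: the cross term $ -2 \lambda_e e_f^\top e_\obs $ may be as large as $ +2 \lambda_e \varepsilon_f \|e_\obs\| $, and it is precisely the surplus $ \varepsilon_\obs' - \varepsilon_f > 0 $ that dominates it and leaves a uniform negative margin $ \lambda_\theta \varepsilon_\obs'(\varepsilon_\obs' - \varepsilon_f) $ for the Barbalat-type argument. The only other point demanding attention is replacing the $ \dot e_\obs \in \Linf $ clause of Lemma~\ref{lem:dyn-soln} by uniform continuity of $ e_\obs $, which I expect to be routine given the compactness assumptions in Assumption~\ref{ass:reg}.
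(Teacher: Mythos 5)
Your proposal is correct and follows essentially the same route as the paper: the same Lyapunov computation with the cross term $-2\lambda_e e_f^\top e_\obs$ dominated by the surplus $\varepsilon_\obs' - \varepsilon_f$, the same Barbalat-type contradiction over intervals of length $\delta$, and the unchanged invocation of Proposition~\ref{prop:proj-dyn-lasalle} for item~2). The one place you go beyond the paper is in noting that $\dot e_\obs \in \Linf$ from Lemma~\ref{lem:dyn-soln} is no longer automatic when $f$ is merely continuous, and substituting uniform continuity of $e_\obs$ (with $\delta$ defined via a modulus of continuity); the paper simply asserts that Lemma~\ref{lem:dyn-soln} still holds, so your treatment is the more careful of the two and is exactly the right fix.
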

\begin{proof}
First, Lemma~\ref{lem:dyn-soln} still holds as $ f $ is continuous.

Second, we establish item~1) of Theorem~\ref{thm:conv-mismatch} via similar arguments to those in Section~\ref{ssec:alg-est} and the second step of the proof of Theorem~\ref{thm:conv}. Using the estimation algorithm \eqref{eq:est-dyn} with $ \varepsilon_\obs > \varepsilon_\obs' > \varepsilon_f $ in \eqref{eq:est-sw}, the estimation error $ \hat\theta - \theta $ now satisfies
\begin{equation*}
\begin{aligned}
    \frac{\d\|\hat\theta - \theta\|^2}{\d t} &= 2 (\hat\theta - \theta)^\top \big[ {-\lambda_e} K(r, a, \hat\theta)^\top e_\obs \big]_{T_\Theta(\hat\theta)} \\
    &\leq 2 (\hat\theta - \theta)^\top \big( {-\lambda_e} K(r, a, \hat\theta)^\top e_\obs \big) \\
    &= -2 \lambda_e (e_\obs - e_f)^\top e_\obs,
\end{aligned}
\end{equation*}
where the inequality follows from \eqref{eq:tan-cone}--\eqref{eq:tan-proj-cone}. Next, we prove that
\begin{equation}\label{eq:est-lya-mismatch}
    \frac{\d\|\hat\theta - \theta\|^2}{\d t} \leq -2 \lambda_e (e_\obs - e_f)^\top e_\obs \leq 0,
\end{equation}
in which $ \d\|\hat\theta - \theta\|^2/\d t = 0 $ if and only if $ \lambda_e = 0 $. Indeed, consider the following two possibilities:
\begin{enumerate}
    \item If $ \lambda_e = 0 $, then $ \d\|\hat\theta - \theta\|^2/\d t = 0 $.
    \item Otherwise $ \lambda_e = \lambda_\theta $, and thus $ \|e_\obs\| > \varepsilon_\obs' > \varepsilon_f \geq \|e_f\| $ following \eqref{eq:est-sw} and \eqref{eq:mismatch-ineq}. Hence
        \begin{equation*}
        \begin{aligned}
            \frac{\d\|\hat\theta - \theta\|^2}{\d t} &\leq -2 \lambda_\theta (e_\obs - e_f)^\top e_\obs \\
            &\leq -2 \lambda_\theta (\|e_\obs\|^2 - \|e_f\| \|e_\obs\|) \\
            &= -2 \lambda_\theta (\varepsilon_\obs' - \varepsilon_f)\, \varepsilon_\obs' < 0.
        \end{aligned}
        \end{equation*}
\end{enumerate}
Following \eqref{eq:est-lya-mismatch}, we see that $ \|\hat\theta - \theta\|^2 $ is monotonically decreasing. Thus $ \lim_{t \to \infty} \|\hat\theta(t) - \theta\|^2 $, and therefore
\begin{equation}\label{eq:est-barbalat-int-mismatch}
    \lim_{t \to \infty} \int_{0}^{t} \lambda_e(s) (e_\obs(s) - e_f(s))^\top e_\obs(s) \d s,
\end{equation}
exists and is finite. On the other hand, \eqref{eq:est-dyn} and \eqref{eq:est-sw} imply that \eqref{eq:conv-obs-est} holds if there exists a time $ T \geq 0 $ such that \eqref{eq:est-barbalat-sw} holds. Assume \eqref{eq:est-barbalat-sw} does not hold for any $ T \geq 0 $. Then the analysis in the second step of the proof of Theorem~\ref{thm:conv} shows that there exists an unbounded sequence $ (s_k)_{k \in \N} $ with $ s_k \in [t_k - \delta, t_k] $ such that \eqref{eq:est-barbalat-interval} holds. Consequently, we have
\begin{equation*}
\begin{aligned}
    &\quad\, \int_{s_k}^{s_k+\delta} \lambda_e(s) (e_\obs(s) - e_f(s))^\top e_\obs(s) \d s \\
    &\geq \lambda_\theta \int_{s_k}^{s_k+\delta} (\|e_\obs(s)\|^2 - \|e_f(s)\| \|e_\obs(s)\|) \d s \\
    &> \lambda_\theta (\varepsilon_\obs' - \varepsilon_f)\, \varepsilon_\obs' \delta > 0
\end{aligned}
\end{equation*}
for the unbounded sequence $ (s_k)_{k \in \N} $, which, combined with \eqref{eq:est-lya-mismatch}, contradicts the property that \eqref{eq:est-barbalat-int-mismatch} exists and is finite. Therefore, there exists a time $ T \geq 0 $ such that \eqref{eq:est-barbalat-sw}, and thus \eqref{eq:conv-obs-est}, holds.

Finally, item~2) of Theorem~\ref{thm:conv-mismatch} is the same as item~2) of Theorem~\ref{thm:conv} as the optimization process is the same after the adaptation of $ \hat\theta $ stops.
\end{proof}

\begin{thm}\label{thm:conv-pe-mismatch}
Suppose that Assumptions~\ref{ass:reg}--\ref{ass:mismatch-bnd} hold. For each given threshold $ \varepsilon_\theta > 2 \varepsilon_f \sqrt{\tau_0/\alpha_0} $, by setting
\begin{equation}\label{eq:est-sw-pe-mismatch}
    \varepsilon_\theta \sqrt{\alpha_0/\tau_0} - \varepsilon_f \geq \varepsilon_\obs > \varepsilon_\obs' > \varepsilon_f
\end{equation}
in \eqref{eq:est-sw}, the estimation and optimization algorithms \eqref{eq:est-dyn} and \eqref{eq:ctrl-dyn} guarantee the same convergence results as those in Theorem~\ref{thm:conv-pe}.
\end{thm}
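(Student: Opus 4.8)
The plan is to reproduce the proof of Theorem~\ref{thm:conv-pe} almost verbatim, with the exact identity $ e_\obs = K(r,a,\hat\theta)(\hat\theta - \theta) $ replaced by the perturbed relation \eqref{eq:obs-est-mismatch}, and with the mismatch bound \eqref{eq:mismatch-ineq} used to control the extra term $ e_f $. Note first that Theorem~\ref{thm:conv-mismatch} already establishes, under Assumptions~\ref{ass:reg} and~\ref{ass:mismatch-bnd} and the ordering $ \varepsilon_\obs > \varepsilon_\obs' > \varepsilon_f $, the existence of a time $ T \geq 0 $ with $ \|e_\obs(t)\| < \varepsilon_\obs $ and $ \hat\theta(t) = \hat\theta(T) $ for all $ t \geq T $, together with the asymptotic first-order condition \eqref{eq:conv-ctrl}. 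Hence the claim reduces to: (i) checking that thresholds $ \varepsilon_\obs, \varepsilon_\obs' $ satisfying \eqref{eq:est-sw-pe-mismatch} actually exist, which in particular forces $ \varepsilon_\obs > \varepsilon_\obs' > \varepsilon_f $ so that Theorem~\ref{thm:conv-mismatch} applies; and (ii) upgrading the finite-time bound on $ e_\obs $ to the parameter bound \eqref{eq:conv-pe}.

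For (i), I would observe that the hypothesis $ \varepsilon_\theta > 2\varepsilon_f\sqrt{\tau_0/\alpha_0} $ is exactly equivalent to $ \varepsilon_\theta\sqrt{\alpha_0/\tau_0} - \varepsilon_f > \varepsilon_f $, so the interval $ (\varepsilon_f,\, \varepsilon_\theta\sqrt{\alpha_0/\tau_0} - \varepsilon_f] $ has positive length; any $ \varepsilon_\obs $ in it together with any $ \varepsilon_\obs' \in (\varepsilon_f, \varepsilon_\obs) $ satisfies \eqref{eq:est-sw-pe-mismatch}, and for every such choice Theorem~\ref{thm:conv-mismatch} supplies the time $ T $ above.

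For (ii), on the interval $ [T, T+\tau_0] $ I would combine \eqref{eq:obs-est-mismatch}, the equality $ \hat\theta(s) = \hat\theta(T) $, the triangle inequality, $ \|e_\obs(s)\| < \varepsilon_\obs $, and $ \|e_f(s)\| \leq \varepsilon_f $ to obtain $ \|K(s)(\hat\theta(T) - \theta)\| = \|e_\obs(s) - e_f(s)\| < \varepsilon_\obs + \varepsilon_f $. Integrating this over $ [T, T+\tau_0] $ and invoking the PE condition \eqref{eq:pe} exactly as in the proof of Theorem~\ref{thm:conv-pe} gives
\begin{align*}
    \alpha_0 \|\hat\theta(T) - \theta\|^2
    &\leq \int_{T}^{T+\tau_0} \|K(s)(\hat\theta(T) - \theta)\|^2 \,\d s \\
    &< (\varepsilon_\obs + \varepsilon_f)^2 \tau_0,
\end{align*}
hence $ \|\hat\theta(T) - \theta\| < (\varepsilon_\obs + \varepsilon_f)\sqrt{\tau_0/\alpha_0} \leq \varepsilon_\theta $, where the last inequality is the first inequality in \eqref{eq:est-sw-pe-mismatch}. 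This is \eqref{eq:conv-pe}; item~2) is then identical to item~2) of Theorem~\ref{thm:conv-pe}, since after time $ T $ the optimization dynamics \eqref{eq:ctrl-dyn} is exactly the one analyzed there.

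I do not expect a substantial obstacle: the argument is a one-line perturbation of the proof of Theorem~\ref{thm:conv-pe}. The only point that needs care is the threshold bookkeeping --- the extra $ +\varepsilon_f $ produced by the triangle inequality must be absorbed by the $ -\varepsilon_f $ slack built into \eqref{eq:est-sw-pe-mismatch}, and the feasibility constraint $ \varepsilon_\theta > 2\varepsilon_f\sqrt{\tau_0/\alpha_0} $ is precisely what keeps the admissible window for $ \varepsilon_\obs, \varepsilon_\obs' $ nonempty --- together with the reuse of \eqref{eq:mismatch-ineq}, whose derivation mirrors the proof of Lemma~\ref{lem:obs-est}.
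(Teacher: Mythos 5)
Your proof is correct, and it reaches \eqref{eq:conv-pe} by a genuinely different route for the key estimate. The paper keeps the upper bound on $\int_T^{T+\tau_0}\|e_\obs(s)\|^2\,\d s$ and then lower-bounds $\|e_\obs(s)\|^2 = \|K(s)(\hat\theta(T)-\theta)+e_f(s)\|^2$ via Young's inequality with a carefully tuned weight $\varepsilon = \tfrac{\varepsilon_f}{\varepsilon_\theta}\sqrt{\tau_0/\alpha_0}$, which produces a factored expression $(\varepsilon_\theta\sqrt{\alpha_0}-\varepsilon_f\sqrt{\tau_0})\bigl(\tfrac{\sqrt{\alpha_0}}{\varepsilon_\theta}\|\hat\theta(T)-\theta\|^2 - \varepsilon_f\sqrt{\tau_0}\bigr)$ that is then compared against $(\varepsilon_\theta\sqrt{\alpha_0}-\varepsilon_f\sqrt{\tau_0})^2$. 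You instead solve \eqref{eq:obs-est-mismatch} for $K(s)(\hat\theta(T)-\theta) = e_\obs(s)-e_f(s)$ and apply the triangle inequality pointwise, getting $\|K(s)(\hat\theta(T)-\theta)\| < \varepsilon_\obs+\varepsilon_f$ before integrating and invoking \eqref{eq:pe}. The two arguments require exactly the same threshold condition --- your chain $\|\hat\theta(T)-\theta\| < (\varepsilon_\obs+\varepsilon_f)\sqrt{\tau_0/\alpha_0} \le \varepsilon_\theta$ uses precisely the first inequality of \eqref{eq:est-sw-pe-mismatch} --- but yours is more elementary and more transparent, avoiding the weighted Young step and the subsequent cancellation of the common positive factor. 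Your added observation that $\varepsilon_\theta > 2\varepsilon_f\sqrt{\tau_0/\alpha_0}$ is exactly what makes the admissible window $(\varepsilon_f,\ \varepsilon_\theta\sqrt{\alpha_0/\tau_0}-\varepsilon_f]$ nonempty is a point the paper leaves implicit; it is worth stating. The remaining structure (delegating \eqref{eq:conv-obs-est} and \eqref{eq:conv-ctrl} to Theorem~\ref{thm:conv-mismatch}, and reusing \eqref{eq:mismatch-ineq}) matches the paper.
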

\begin{proof}
As \eqref{eq:conv-obs-est} and \eqref{eq:conv-ctrl} are established in Theorems~\ref{thm:conv-mismatch}, it remains to prove \eqref{eq:conv-pe}. To this effect, we note that the inequality in \eqref{eq:conv-obs-est} implies
\begin{equation*}
    \int_{T}^{T + \tau_0} \|e_\obs(s)\|^2 \d s < \varepsilon_\obs^2 \tau_0 \leq (\varepsilon_\theta \sqrt{\alpha_0} - \varepsilon_f \sqrt{\tau_0})^2,
\end{equation*}
where the second inequality follows from \eqref{eq:est-sw-pe-mismatch}. On the other hand, \eqref{eq:obs-est-mismatch} and the equality in \eqref{eq:conv-obs-est} imply
\begin{equation*}
\begin{aligned}
    &\quad\,\, \|e_\obs(s)\|^2 \\
    &= \|K(s) (\hat\theta(T) - \theta) + e_f(s)\|^2 \\
    &\geq \bigg( 1 - \frac{\varepsilon_f}{\varepsilon_\theta} \sqrt{\frac{\tau_0}{\alpha_0}} \bigg) \|K(s) (\hat\theta(T) - \theta)\|^2 \\
    &\quad\, + \bigg( 1 - \frac{\varepsilon_\theta}{\varepsilon_f} \sqrt{\frac{\alpha_0}{\tau_0}} \bigg) \|e_f(s)\|^2 \\
    &= (\varepsilon_\theta \sqrt{\alpha_0} - \varepsilon_f \sqrt{\tau_0}) \bigg( \frac{\|K(s) (\hat\theta(T) - \theta)\|^2}{\varepsilon_\theta \sqrt{\alpha_0}} - \frac{\|e_f(s)\|^2}{\varepsilon_f \sqrt{\tau_0}} \bigg)
\end{aligned}
\end{equation*}
for all $ s \geq T $, where the inequality follows from Young's inequality $ 	2 a b \leq \varepsilon a^2 + b^2/\varepsilon $ for all $ a, b \in \R $ and $ \varepsilon > 0 $. Note that $ \varepsilon_\theta > 2 \varepsilon_f \sqrt{\tau_0/\alpha_0} $ implies $ \varepsilon_\theta \sqrt{\alpha_0} - \varepsilon_f \sqrt{\tau_0} > 0 $. Hence we have
\begin{equation*}
\begin{aligned}
    &\quad\, \int_{T}^{T + \tau_0} \|e_\obs(s)\|^2 \d s \\
    &\geq (\varepsilon_\theta \sqrt{\alpha_0} - \varepsilon_f \sqrt{\tau_0}) \bigg( \int_{T}^{T + \tau_0} \frac{\|K(s) (\hat\theta(T) - \theta)\|^2}{\varepsilon_\theta \sqrt{\alpha_0}} \d s \\
    &\quad\, - \int_{T}^{T + \tau_0} \frac{\|e_f(s)\|^2}{\varepsilon_f \sqrt{\tau_0}} \d s \bigg) \\
    &\geq (\varepsilon_\theta \sqrt{\alpha_0} - \varepsilon_f \sqrt{\tau_0}) \bigg( \frac{\sqrt{\alpha_0}}{\varepsilon_\theta} \|\hat\theta(T) - \theta\|^2 - \varepsilon_f \sqrt{\tau_0} \bigg),
\end{aligned}
\end{equation*}
where the second inequality follows from the PE condition \eqref{eq:pe} and the inequality \eqref{eq:mismatch-ineq}. Combining the upper and lower bounds above yields \eqref{eq:conv-pe}.
\end{proof}

\begin{rmk}\label{rmk:mismatch}
The results in this section also hold with the slightly less conservative condition
\begin{multline}\label{eq:mismatch-bnd-new}
    \min_{\theta \in \Theta} \max_{r \in \cR,\, \hat\theta \in \Theta} \bigg\| \!\begin{bmatrix} I \\ \int_{0}^{1} \jacob{a} J \big( r, \rho \hat f(\hat\theta, r) + (1 - \rho) f(r) \big) \d\rho \end{bmatrix} \\
    \big( \hat f(\theta, r) - f(r) \big) \bigg\| \leq \varepsilon_f
\end{multline}
in place of \eqref{eq:mismatch-bnd} in Assumption~\ref{ass:mismatch-bnd}.
\end{rmk}

\section{Simulation example}\label{sec:sim}
We illustrate the estimation and optimization algorithms and the convergence results via a simulation example motivated by link-flooding distributed denial-of-service (DDoS) attacks, such as the Crossfire attack \cite{KangLeeGligor2013}.

Consider a communication network consisting of $ L $ parallel links connecting a source to a destination. The set of links is denoted by $ \cL := \{1,\, \ldots, L\} $. Suppose that a router (leader) distributes $ R $ units of legitimate traffic among the parallel links, and an attacker (follower) disrupts communication by injecting $ A $ units of superfluous traffic on them. The router's action is represented by an $ L $-vector of the \emph{desired} legitimate traffic on each link $ r \in \cR := \{r \in \R_+^L: \sum_{l \in \cL} r_l = R\} $, and the attacker's action is represented by an $ L $-vector of the attack traffic $ a \in \cA := \{a \in \R_+^L: \sum_{l \in \cL} a_l = A\} $. Every link $ l \in \cL $ is subject to a constant capacity $ c_0 > 0 $ that upper-bounds the total traffic on $ l $. When $ r_l + a_l > c_0 $, the \emph{actual} legitimate traffic on $ l $ is decreased to
\begin{equation*}
    u_l := \min\{r_l,\, \max\{c_0 - a_l,\, 0\}\}.
\end{equation*}
The router aims to maximize the total actual legitimate traffic, whereas the attacker aims to minimize it. Hence the router's cost is
\begin{equation*}
    J(r, a) := -\sum_{l \in \cL} u_l,
\end{equation*}
and the attacker's cost is
\begin{equation}\label{eq:sim-atk-cost}
    H(a, r) := \sum_{l \in \cL} u_l = -J(r, a).
\end{equation}
Clearly, neither the router nor the attacker has an incentive to assign more traffic on a link than the capacity. Hence we assume $ r_l, a_l \in [0, c_0] $ for all $ l \in \cL $. For most nontrivial cases, the game defined by $ (\cR, \cA, J, H) $ has no Nash equilibrium. On the other hand, in \cite[Cor.~5]{YangHespanha2021} it was established that there exists a Stackelberg equilibrium action given by $ r^*_l = R/L $ for all $ l \in \cL $.

If the router knew that the attacker's cost function was indeed defined by \eqref{eq:sim-atk-cost}, it could play the Stackelberg equilibrium action $ r^* $. However, we consider the general scenario where it does not and, instead, adopts the adaptive learning approach proposed in this paper to construct its optimal action. To this effect, the router approximates the attacker's strategy $ f = (f_1, \ldots, f_L) $ using a quasi-RBF model defined by\footnote{The maps $ r \mapsto J(r, a) $ and $ r \mapsto \hat f(\hat\theta, r) $ in this example actually violate the smoothness conditions in Assumption~\ref{ass:reg} as they are only piecewise continuously differentiable. However, these conditions are only needed so that the optimization algorithm \eqref{eq:ctrl-dyn} is well-defined and does not lead to chattering. In this example, the set of non-differentiable points in $ \cR $ has measure zero and does not affect the simulation.\label{ftnf:sim-cont}}
\begin{multline}
    \hat f_l(\theta, r) := \sum_{j_1=1}^{n_\rbf} \cdots \sum_{j_{L-1}=1}^{n_\rbf} \theta_{l, j_1, \ldots, j_{L-1}} F_{j_1, \ldots, j_{L-1}}(r) \\
    = \sum_{j_1=1}^{n_\rbf} \cdots \sum_{j_{L-1}=1}^{n_\rbf} \theta_{l, j_1, \ldots, j_{L-1}} \phi(r - r^c_{j_1, \ldots, j_{L-1}}) \label{eq:sim-rbf}
\end{multline}
for $ l \in \cL $, where the RBF is defined by
\begin{equation*}
    \phi(x) := \indfcn_{\big( {-\frac{c_0}{2 n_\rbf}}, \frac{c_0}{2 n_\rbf} \big]^{L-1}}(x)
\end{equation*}
with the indicator function
\begin{equation*}
    \indfcn_\cS(x) := \begin{cases}
        1 &\text{if } x \in \cS; \\
        0 &\text{if } x \notin \cS,
    \end{cases}
\end{equation*}
the centers are defined by
\begin{equation*}
    r^c_{j_1, \ldots, j_{L-1}} := \bigg( \frac{(2 j_1 - 1)\, c_0}{2 m}, \ldots, \frac{(2 j_{L-1} - 1)\, c_0}{2 m} \bigg),
\end{equation*}
and $ n_\rbf \in \N $ is the number of kernels in each of the first $ L - 1 $ scalar component of $ \cR $, that is, the router approximates each scalar component of $ f $ using a grid of $ n_\rbf^{L-1} $ hypercubes. Hence $ n_\theta := L n_\rbf^{L-1} $ and $ \Theta := [0, c_0]^{n_\theta} $ in \eqref{eq:match}. In the following, we simulate our estimation and learning algorithms \eqref{eq:est-dyn} and \eqref{eq:ctrl-dyn} for networks with two or three parallel links. In these simulations, the constants are set by $ \varepsilon_\obs = 2 \varepsilon_\obs' = 0.002 $, $ \lambda_\theta = 0.02 $, and $ \lambda_r = 0.002 $, and the initial values of the parameter estimate $ \hat\theta $ and the routers' action $ r $ are randomly generated.

\subsection{Network with two parallel links}
\begin{figure}[!htbp]
\centering
\includegraphics[draft=false,width=1\columnwidth,max width=128pt,trim=.5em 0ex 0em 2ex,clip]{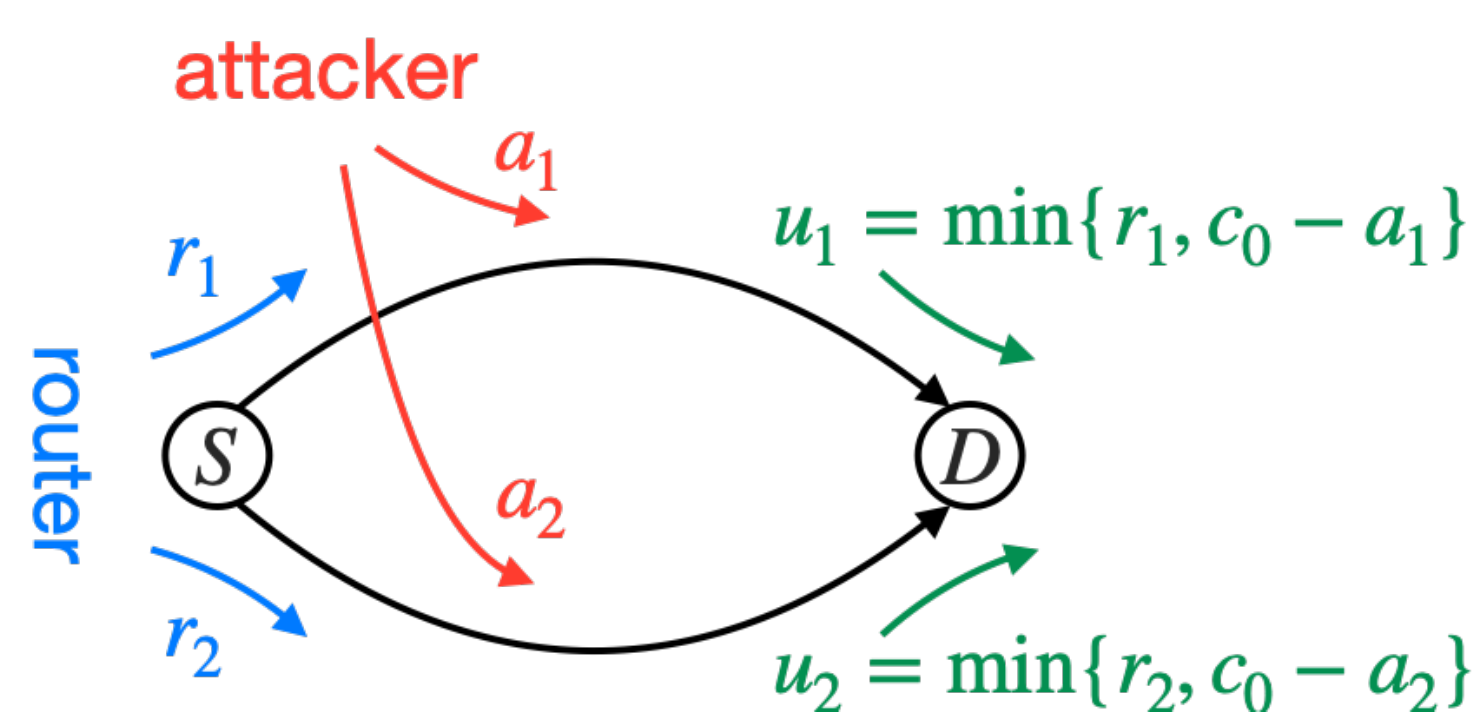}%
\caption{A network with one source $ S $, one destination $ D $, and two parallel links (assuming that $ a_l \leq c_0 $ for all $ l \in \cL $).}\label{fig:eg-net-2}
\end{figure}

Consider the network with $ L = 2 $ parallel links in Fig.~\ref{fig:eg-net-2}, capacity $ c_0 = 1 $, total desired legitimated traffic $ R = L c_0/2 = 1 $, and attack budget $ A = \lceil L c_0/2 \rceil = 1 $. We set the constant $ n_\rbf = 4 $. Then $ n_\theta = 8 $ and the parameter set $ \Theta = [0, 1]^8 $. Following \cite[Cor.~2]{YangHespanha2021}, the attacker's best response to a router action $ r $ is to set $ a_l = 1 $ on the link $ l \in \{1,\, 2\} $ with larger $ r_l $. Specifically, the actual value $ \theta $ in \eqref{eq:match}, written as the tensor form in \eqref{eq:sim-rbf}, is given by
\begin{equation*}
    \theta = \begin{bmatrix}
        0 & 0 & 1 & 1 \\
        1 & 1 & 0 & 0
    \end{bmatrix}.
\end{equation*}
As established in \cite[Cor.~5]{YangHespanha2021}, the Stackelberg equilibrium action is given by $ r^* = (0.5, 0.5) $.

\begin{figure}[!htbp]
\centering
\subfloat[Observation error]{\includegraphics[width=.49\columnwidth,max width=128pt,trim=1.5em 4ex 1.5em 3.5ex,clip]{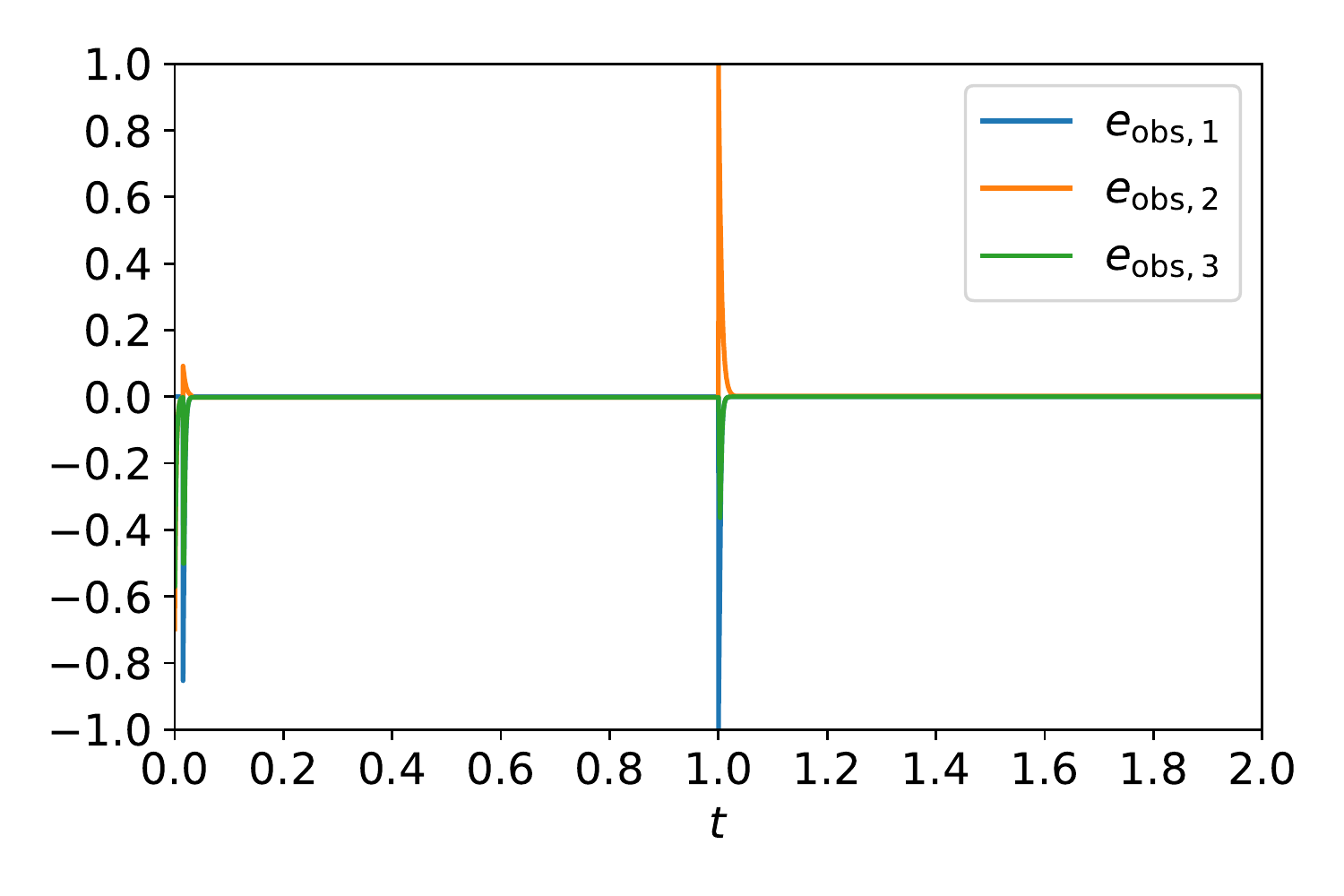}\label{fig:eg-2-obs}}%
\hfill%
\subfloat[Router's action]{\includegraphics[width=.49\columnwidth,max width=128pt,trim=1.5em 4ex 1.5em 3.5ex,clip]{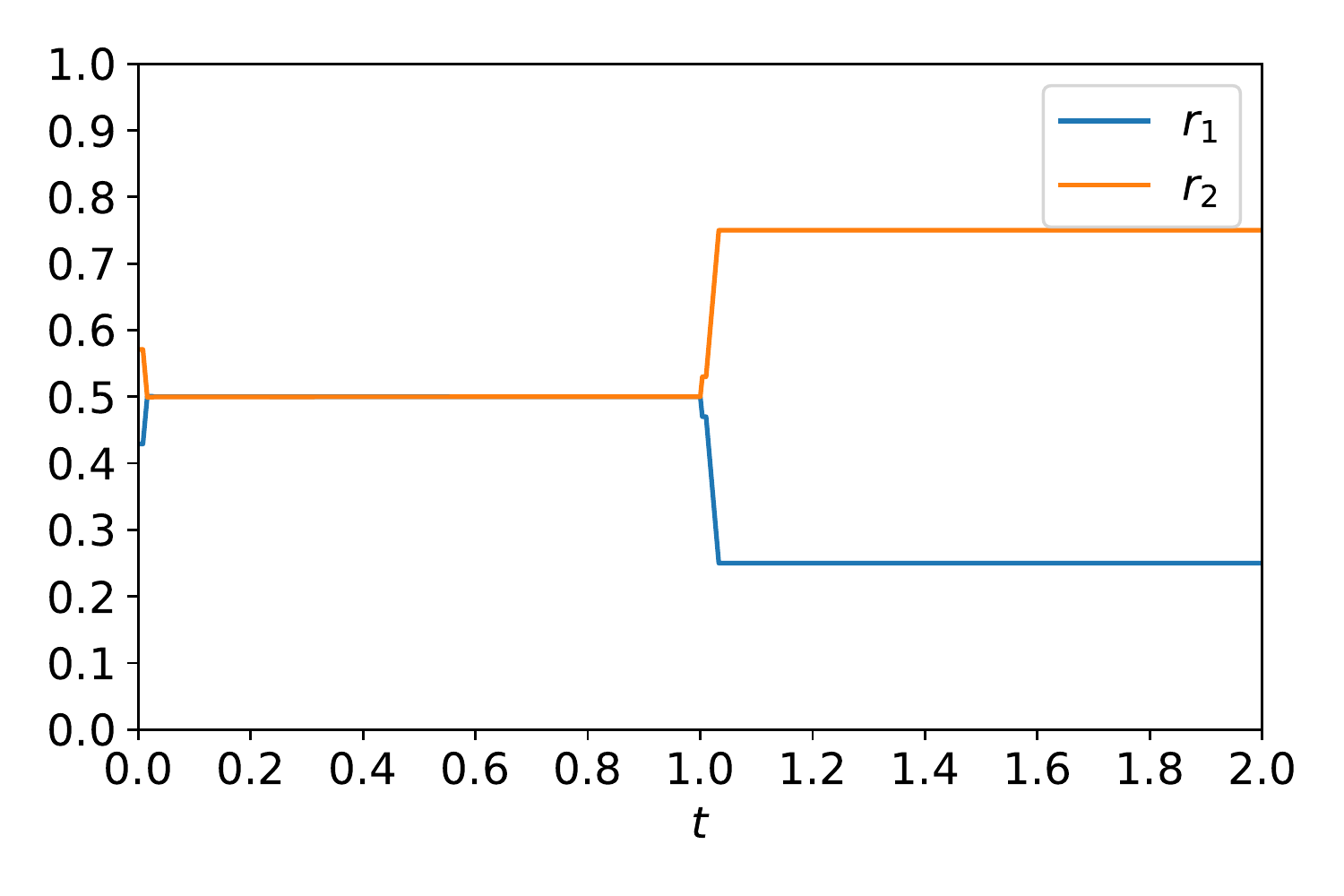}\label{fig:eg-2-rtg}}%
\\%
\subfloat[Router's actual and predicted costs]{\includegraphics[width=.49\columnwidth,max width=128pt,trim=1.5em 4ex 1.5em 3.5ex,clip]{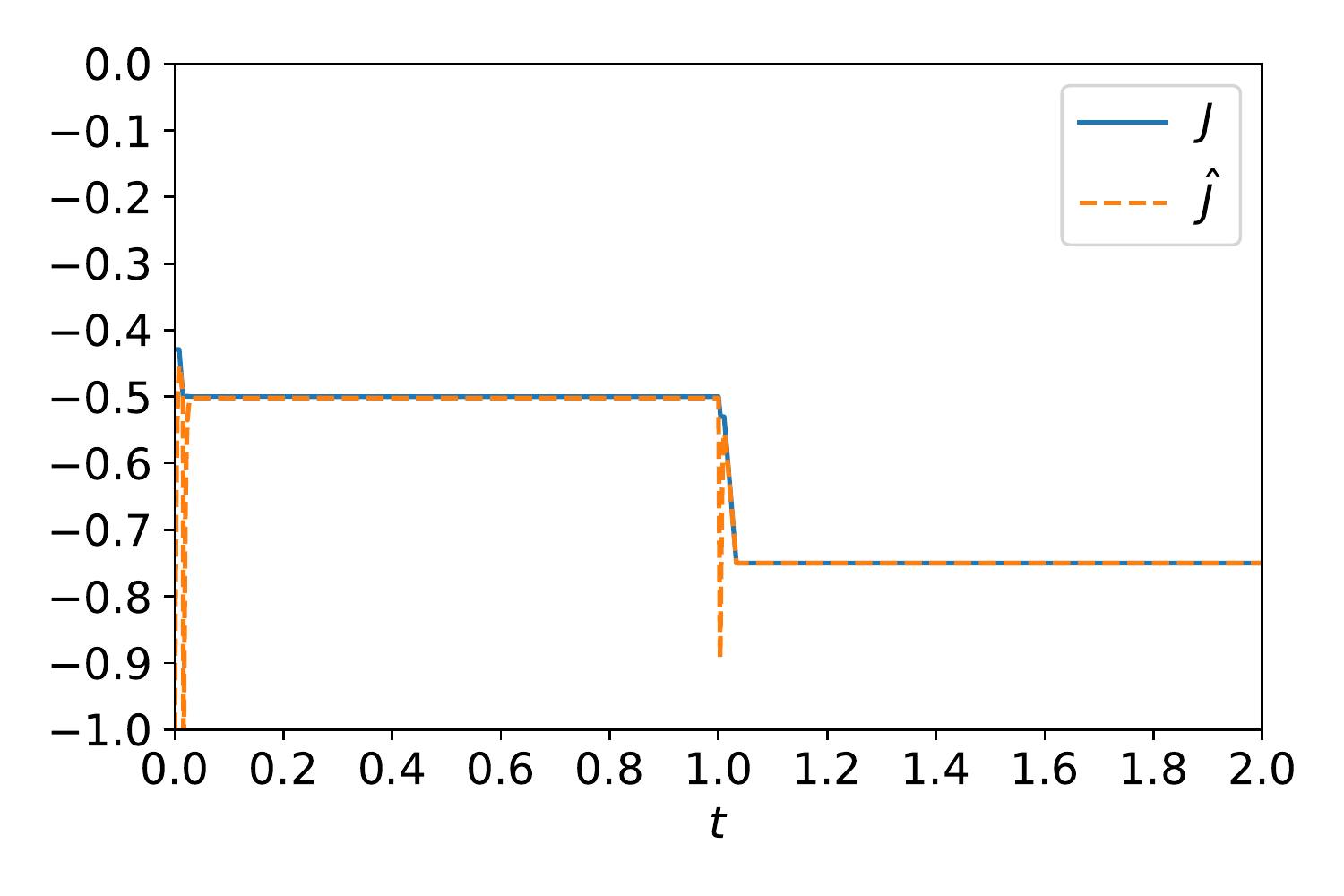}\label{fig:eg-2-rtg-cost}}%
\hfill%
\subfloat[Attacker's cost]{\includegraphics[width=.49\columnwidth,max width=128pt,trim=1.5em 4ex 1.5em 3.5ex,clip]{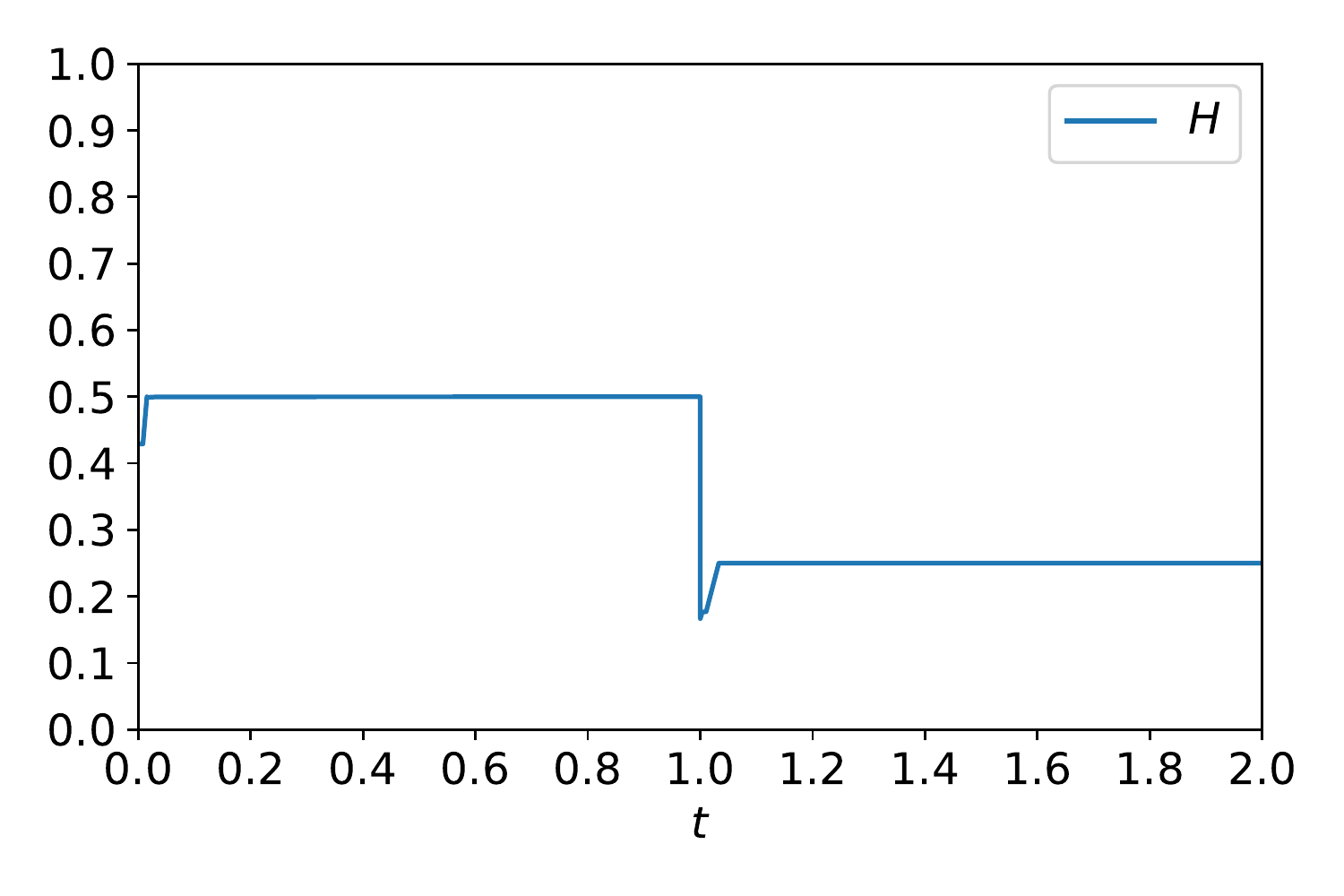}\label{fig:eg-2-atk-cost}}%
\caption{Simulation results for $ L = 2 $ w/o PE (horizontal axis: $ \times 10^4 $ units of time). In the first $ 10^4 $ units of time, the observation error converges to $ e_\obs = 0 $, the router's action $ r $ converges to the Stackelberg equilibrium action $ r^* = (0.5, 0.5) $, the router's and attacker's costs converge to $ J = \hat J = -H = -0.5 $; in the second $ 10^4 $ units of time, the attacker switches to a new cost function, the router's action $ r $ converges to an $ \varepsilon $ Stackelberg action near $ \bar r^* = (0.25, 0.75) $, the router's actual and predicted costs converge to $ J = \hat J = -0.75 $, while the attacker's cost converges to $ H = 0.25 $.}\label{fig:eg-2}%
\end{figure}

\begin{figure}[!htbp]
\centering
\subfloat[Router's actual and predicted cost functions at $ T = 10^4 $]{\includegraphics[width=.49\columnwidth,max width=128pt,trim=1.5em 3.5ex 1.5em 3.5ex,clip]{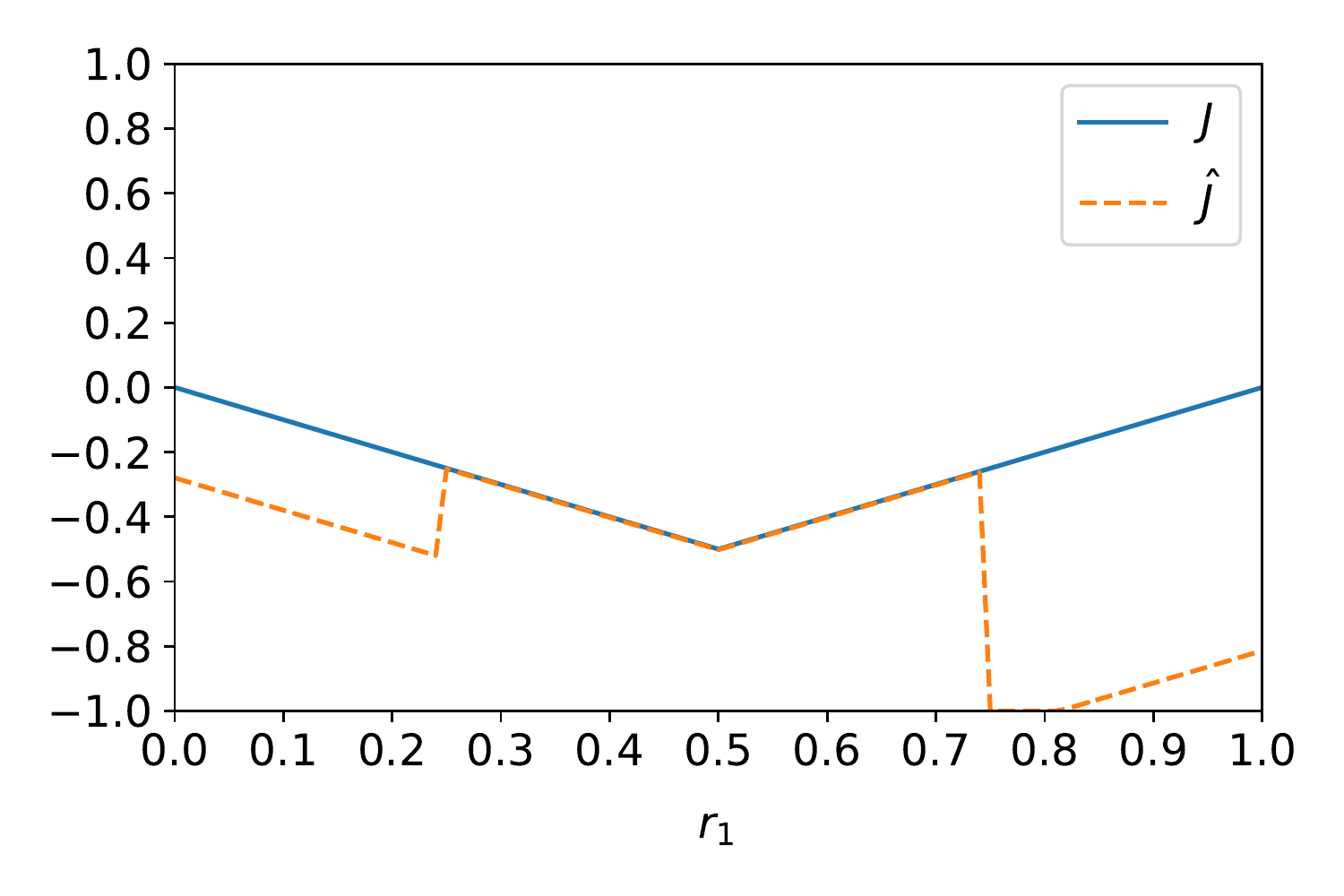}\label{fig:eg-2-fcn-1}}%
\hfill%
\subfloat[Router's actual and predicted cost functions at $ T = 2 \times 10^4 $]{\includegraphics[width=.49\columnwidth,max width=128pt,trim=1.5em 3.5ex 1.5em 3.5ex,clip]{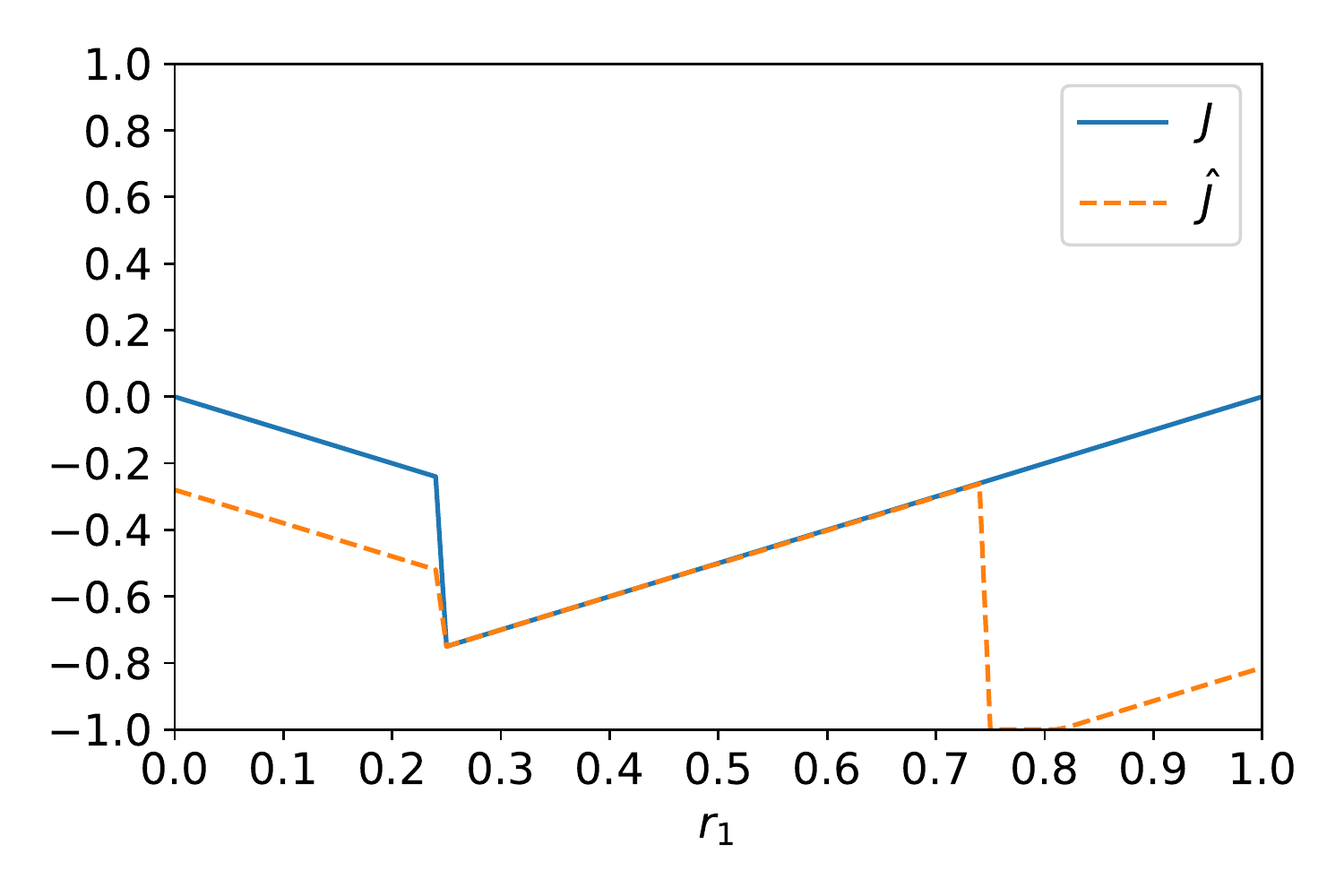}\label{fig:eg-2-fcn-2}}%
\caption{Router's actual cost function $ r_1 \mapsto J(r, f(r)) $ and predicted cost function $ r_1 \mapsto \hat J(\hat\theta(T), r) $ for $ L = 2 $ w/o PE. In both scenarios, the predicted cost function is accurate near the optima $ r^* $ but not everywhere.}\label{fig:eg-2-fcn}%
\end{figure}

\begin{figure}[!htbp]
\centering
\subfloat[Observation error]{\includegraphics[width=.49\columnwidth,max width=128pt,trim=1.5em 4ex 1.5em 3.5ex,clip]{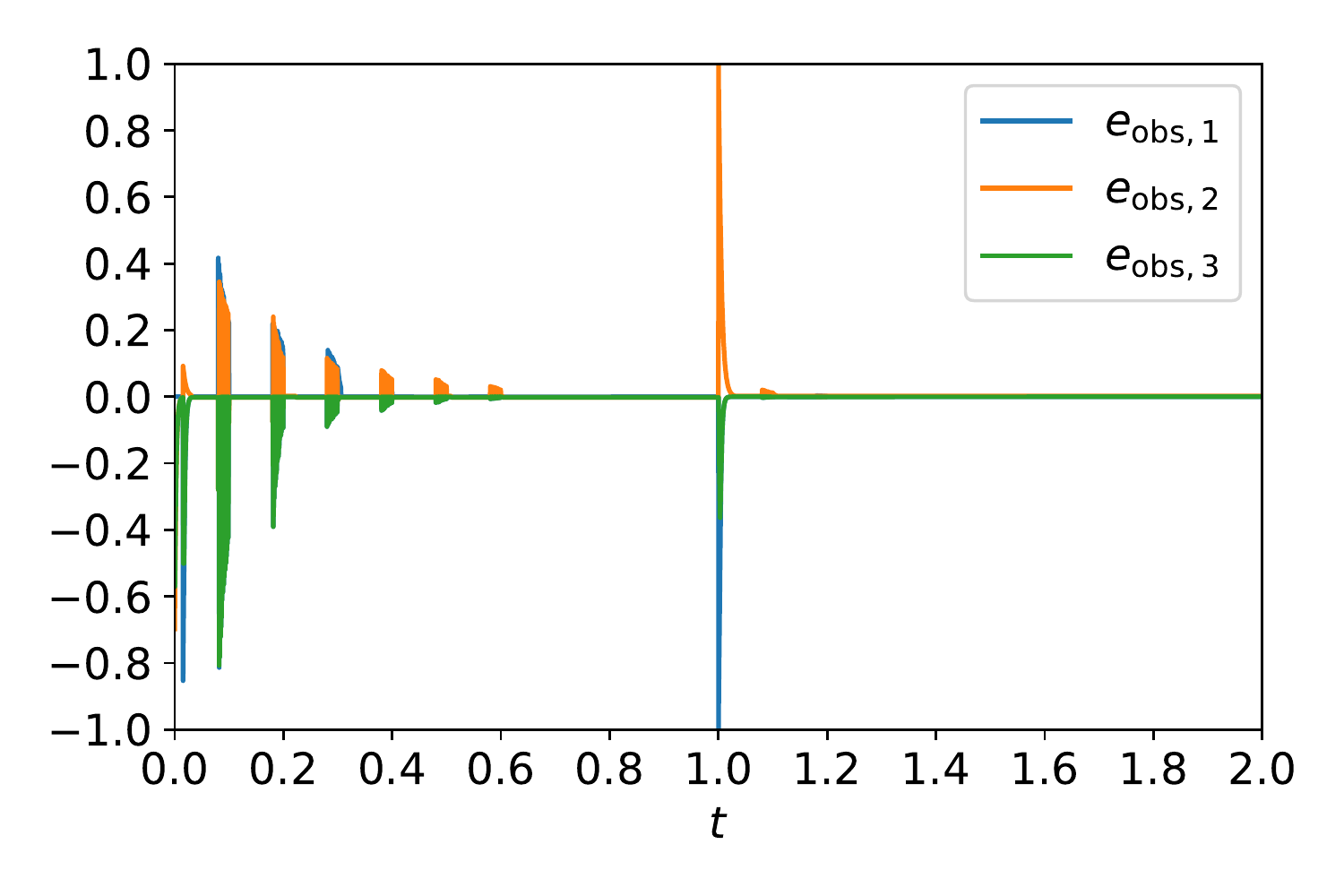}\label{fig:eg-2-obs-pe}}%
\hfill%
\subfloat[Router's action]{\includegraphics[width=.49\columnwidth,max width=128pt,trim=1.5em 4ex 1.5em 3.5ex,clip]{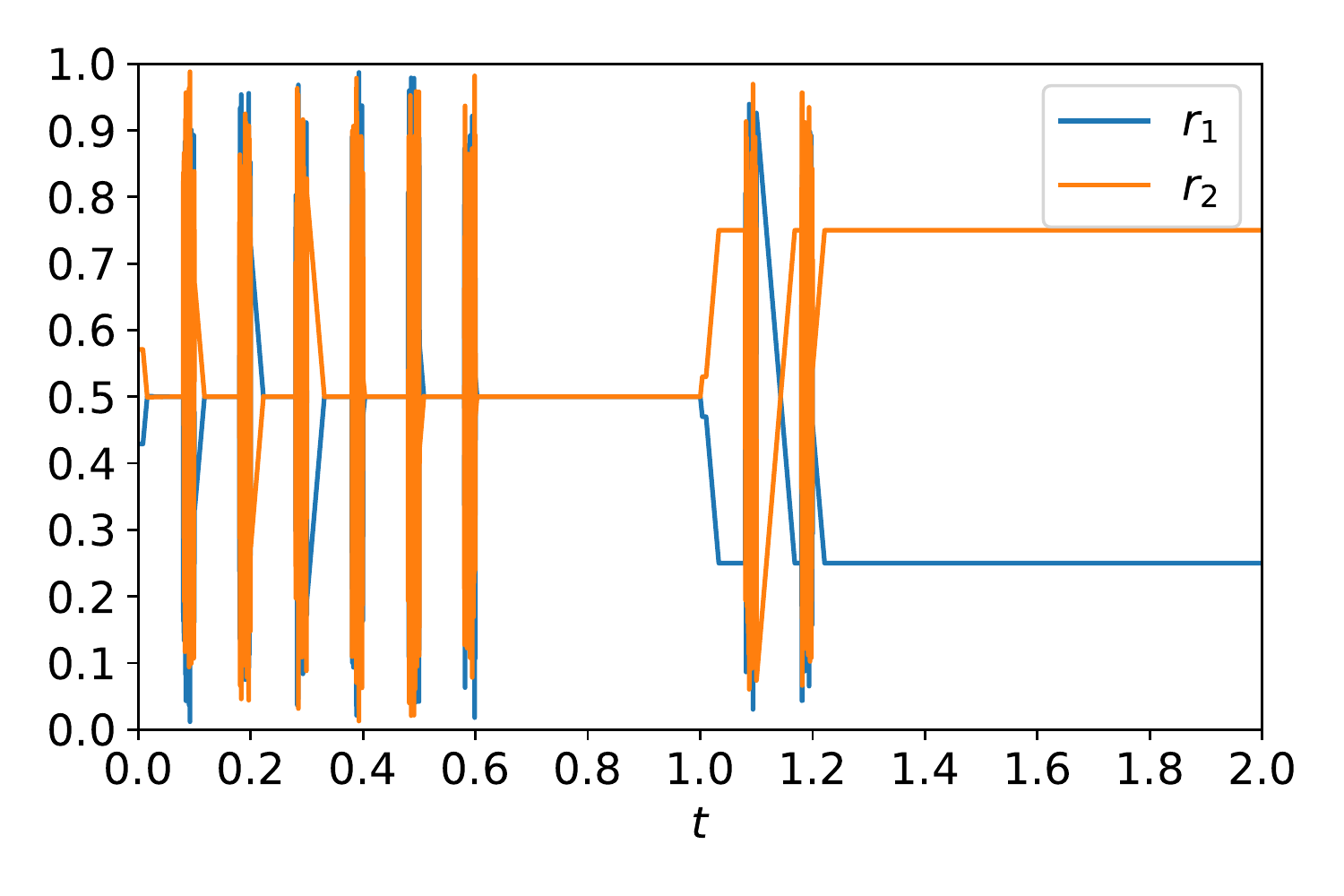}\label{fig:eg-2-rtg-pe}}%
\\%
\subfloat[Router's actual and predicted costs]{\includegraphics[width=.49\columnwidth,max width=128pt,trim=1.5em 4ex 1.5em 3.5ex,clip]{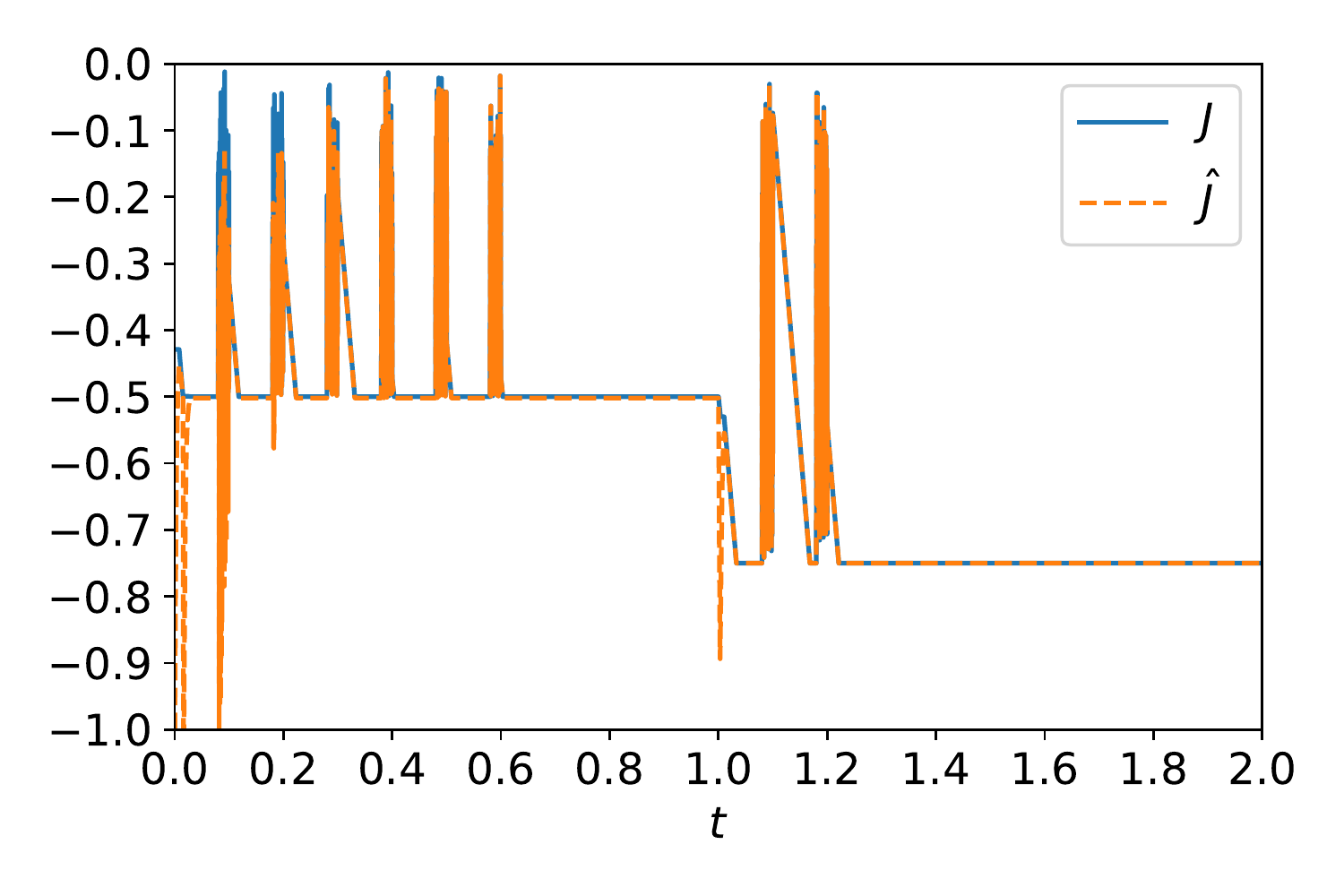}\label{fig:eg-2-rtg-cost-pe}}%
\hfill%
\subfloat[Attacker's cost]{\includegraphics[width=.49\columnwidth,max width=128pt,trim=1.5em 4ex 1.5em 3.5ex,clip]{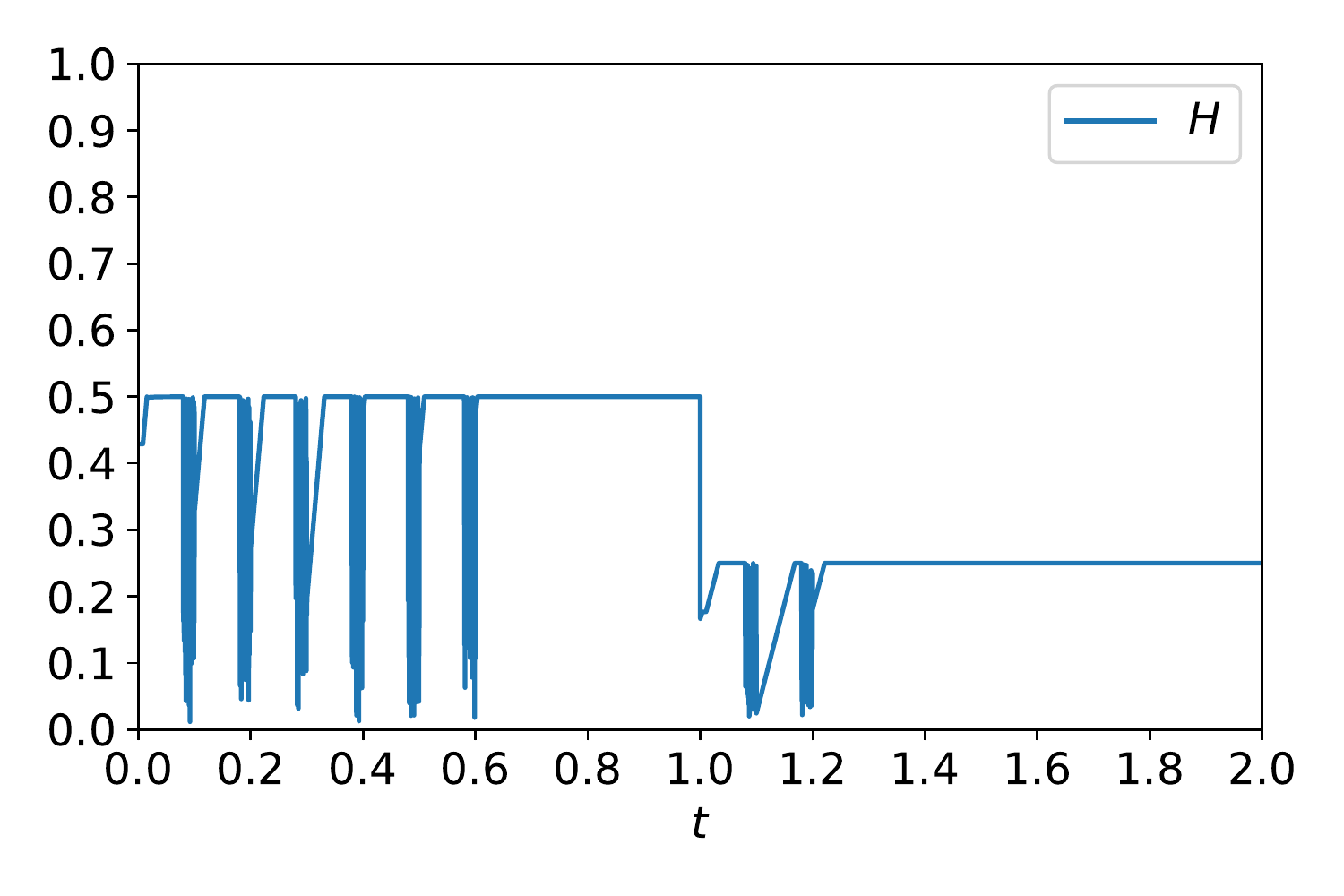}\label{fig:eg-2-atk-cost-pe}}%
\caption{Simulation results for $ L = 2 $ w/ PE (horizontal axis: $ \times 10^4 $ units of time). In the first $ 10^4 $ units of time, the observation error converges to $ e_\obs = 0 $, the router's action $ r $ converges to the Stackelberg equilibrium action $ r^* = (0.5, 0.5) $, the router's and attacker's costs converge to $ J = \hat J = -H = -0.5 $; in the second $ 10^4 $ units of time, the attacker switches to a new cost function, the router's action $ r $ converges to an $ \varepsilon $ Stackelberg action near $ \bar r^* = (0.25, 0.75) $, the router's actual and predicted costs converge to $ J = \hat J = -0.75 $, while the attacker's cost converges to $ H = 0.25 $.}\label{fig:eg-2-pe}%
\end{figure}

\begin{figure}[!htbp]
\centering
\subfloat[Router's actual and predicted cost functions at $ T = 10^4 $]{\includegraphics[width=.49\columnwidth,max width=128pt,trim=1.5em 3.5ex 1.5em 3.5ex,clip]{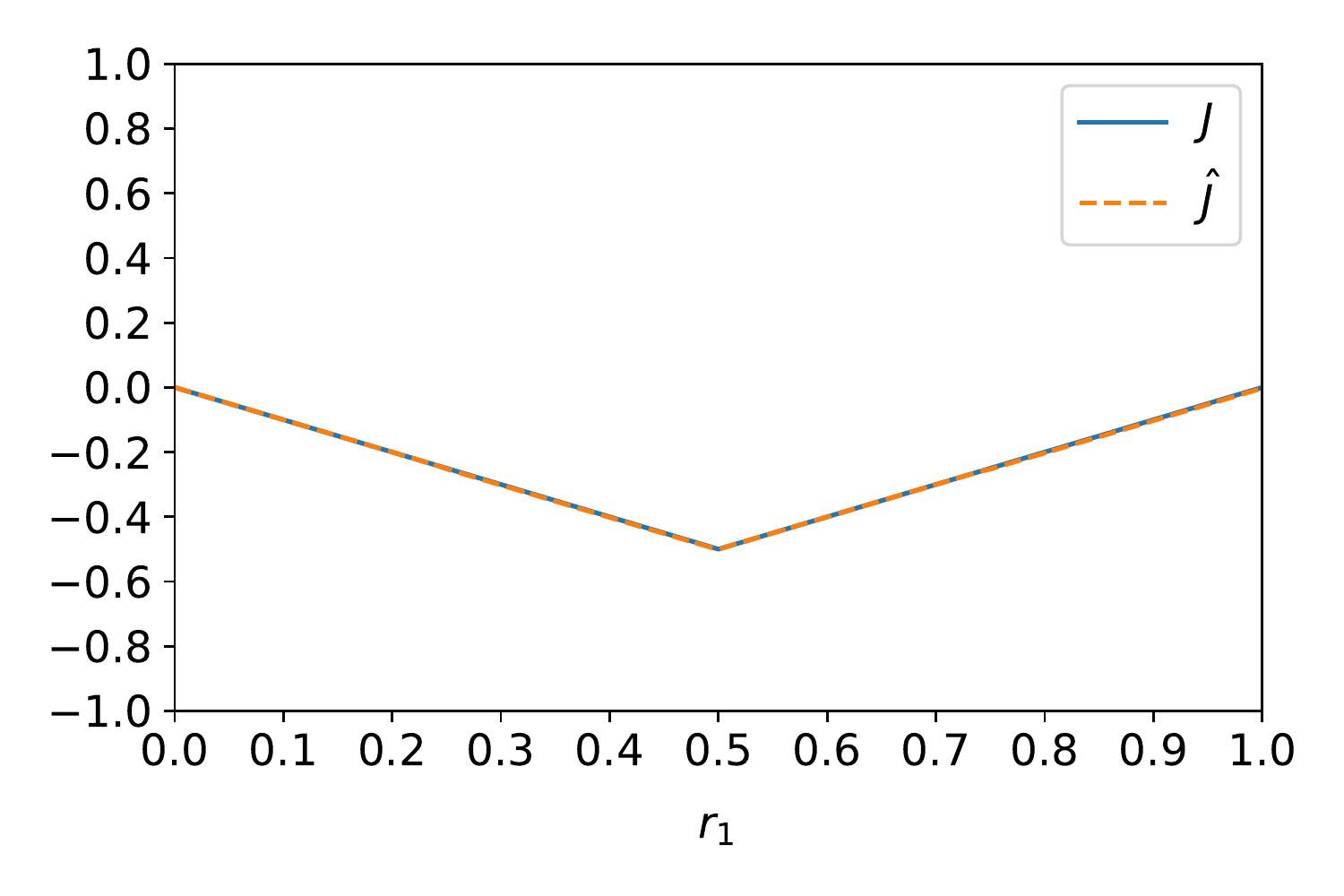}\label{fig:eg-2-fcn-1-pe}}%
\hfill%
\subfloat[Router's actual and predicted cost functions at $ T = 2 \times 10^4 $]{\includegraphics[width=.49\columnwidth,max width=128pt,trim=1.5em 3.5ex 1.5em 3.5ex,clip]{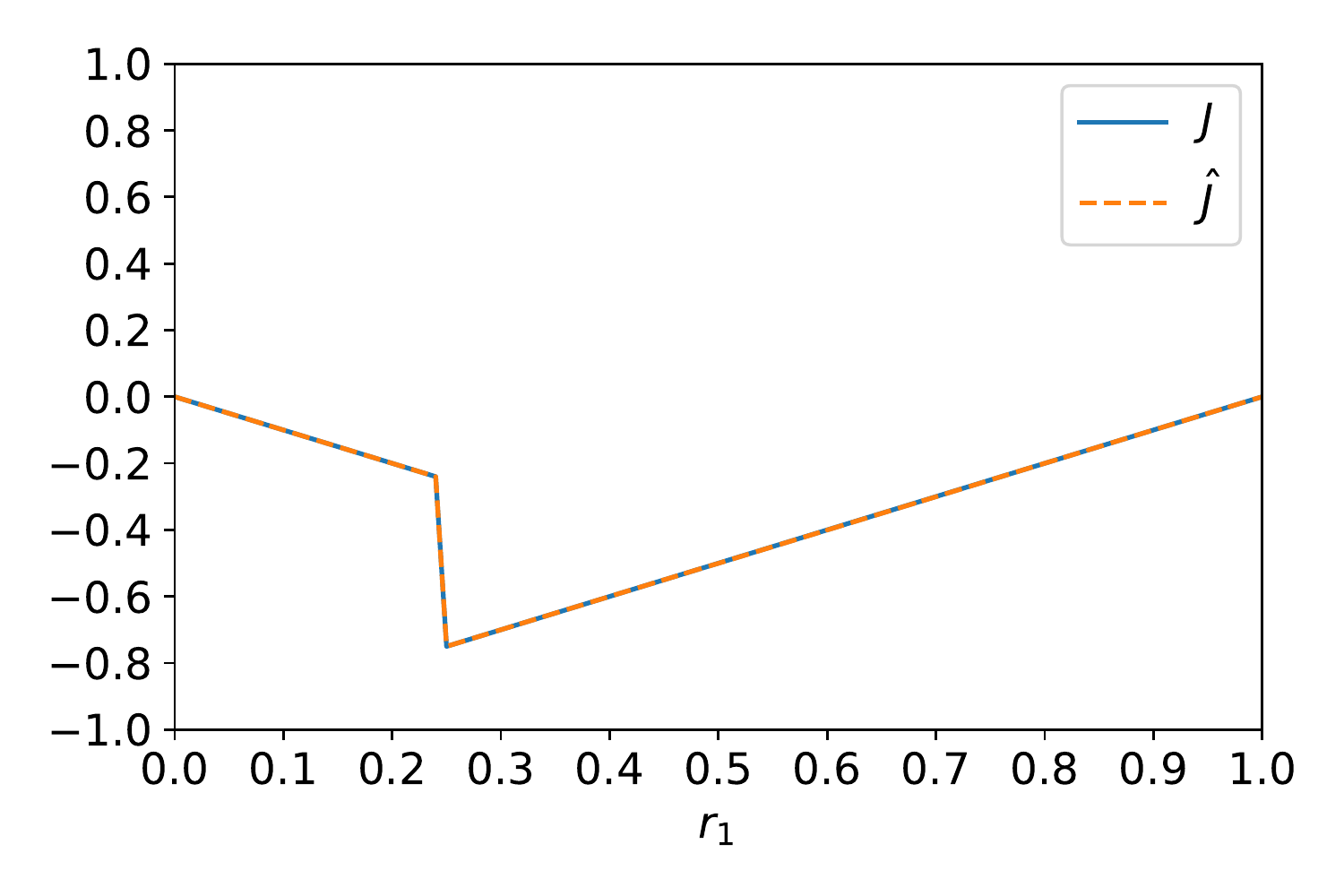}\label{fig:eg-2-fcn-2-pe}}%
\caption{Router's actual cost function $ r_1 \mapsto J(r, f(r)) $ and predicted cost function $ r_1 \mapsto \hat J(\hat\theta(T), r) $ for $ L = 2 $ w/ PE. In both scenarios, the predicted cost function is accurate everywhere.}\label{fig:eg-2-fcn-pe}%
\end{figure}

For the case without enforcing PE in Fig.~\ref{fig:eg-2}, in the first $ 10^4 $ units of time, the observation error $ e_\obs $ converges to $ 0 $, and the router's action $ r $ converges to the Stackelberg equilibrium action $ r^* = (0.5, 0.5) $, despite that the parameter estimate $ \hat\theta $ does not converge to the actual value $ \theta $, as shown in Fig.~\subref*{fig:eg-2-fcn-1}. In Fig.~\ref{fig:eg-2-pe}, we enforce PE by adding some random noise to $ r $ for a short period of time while the observation error remains small. In this case, in the first $ 10^4 $ units of time, the observation error $ e_\obs $ converges to $ 0 $, the router's action $ r $ converges to the Stackelberg equilibrium action $ r^* $, and the parameter estimate $ \hat\theta $ converges to the actual value $ \theta $, as shown in Fig.~\subref*{fig:eg-2-fcn-1-pe}. In both cases with and without PE, we also simulate the scenario where, after $ 10^4 $ units of time, the attacker starts focusing more on disrupting link $ 1 $ by switching to a new cost function defined by
\begin{equation*}
    \bar H(a, r) := u_1 + u_2/3.
\end{equation*}
The results in \cite[Cor.~2]{YangHespanha2021} allow us to conclude that, for the non-zero-sum game defined by $ (\cR, \cA, J, \bar H) $, the attacker's best response to a router's action $ r $ is to set $ a_l = 1 $ on the link $ l \in \{1,\, 2\} $ that corresponds to the larger one in $ \{r_1,\, r_2/3\} $. The corresponding actual value $ \theta $ in \eqref{eq:match}, written as the tensor form in \eqref{eq:sim-rbf}, becomes
\begin{equation*}
    \theta = \begin{bmatrix}
        0 & 1& 1 & 1 \\
        1 & 0 & 0 & 0
    \end{bmatrix}.
\end{equation*}
As established in \cite[Th.~4 and~Cor.~5]{YangHespanha2021}, there is no Stackelberg equilibrium action as defined by Definition~\ref{dfn:stbg} for $ (\cR, \cA, J, \bar H) $; however, there are $ \varepsilon $ Stackelberg actions near $ \bar r^* = (0.25, 0.75) $ for sufficiently small $ \varepsilon > 0 $. The corresponding simulation results in Fig.~\ref{fig:eg-2},~\subref*{fig:eg-2-fcn-2},~\ref{fig:eg-2-pe}, and~\subref*{fig:eg-2-fcn-2-pe} show that our adaptive learning approach is able to identify this switch in the attack, as the router's action $ r $ converges to $ \varepsilon $ Stackelberg actions near $ \bar r^* $ in both Fig.~\ref{fig:eg-2} and~\ref{fig:eg-2-pe}, and, when PE is enforced, the parameter estimate $ \hat\theta $ converges to the new actual value $ \theta $, as shown in Fig.~\subref*{fig:eg-2-fcn-2-pe}.

\subsection{Network with three parallel links}
Consider a network with $ L = 3 $ parallel links, capacity $ c_0 = 1 $, total desired legitimated traffic $ R = L c_0/2 = 1.5 $, and attack budget $ A = \lceil L c_0/2 \rceil = 2 $. We set the constant $ n_\rbf = 20 $. Then $ n_\theta = 1200 $ and the parameter set $ \Theta = [0, 1]^{1200} $. Following \cite[Cor.~2]{YangHespanha2021}, the attacker's best response to a router action is to set $ a_{l_1} = a_{l_2} = 1 $ on the two links $ l_1, l_2 \in \{1,\, 2,\, 3\} $ with largest $ r_l $. However, there is no $ \theta \in \Theta $ such that \eqref{eq:match} holds for attacker's strategy $ f $ and the function $ \hat f $ defined by \eqref{eq:sim-rbf}. To see the mismatch between $ f(r) $ and $ \hat f(\theta, r) $, one could consider the router's actions
\begin{equation*}
    r \in \{(0.45 + \varepsilon, 0.5 - \varepsilon, 0.55): 0 < \varepsilon < 0.05\}.
\end{equation*}
Comparing to \eqref{eq:sim-rbf}, we see that $ (r_1, r_2) $ belongs to the hypercube with the center $ (0.475, 0.475) $. However, the attacker's best response satisfies $ a_1 = 1 $ if $ \varepsilon > 0.025 $ and $ a_1 = 0 $ if $ \varepsilon < 0.025 $. Hence the corresponding scalar component of $ \theta $ will always yield an error of at least $ 0.5 $, and thus the mismatch threshold $ \varepsilon_f $ in \eqref{eq:mismatch-bnd} or \eqref{eq:mismatch-bnd-new} is at least $ 0.5 A/c_0 = 1 $. However, in the simulations below we are able to obtain much smaller observation errors near the Stackelberg equilibrium action $ r^* = (0.5, 0.5, 0.5) $ established in \cite[Cor.~5]{YangHespanha2021}. Due to the mismatch and motivated by Remark~\ref{rmk:mismatch}, we focus our attention on the performance of the estimation and learning algorithms \eqref{eq:est-dyn} and \eqref{eq:ctrl-dyn}.

\begin{figure}[!htbp]
\centering
\subfloat[Observation error]{\includegraphics[width=.49\columnwidth,max width=128pt,trim=1.5em 4ex 1.5em 3.5ex,clip]{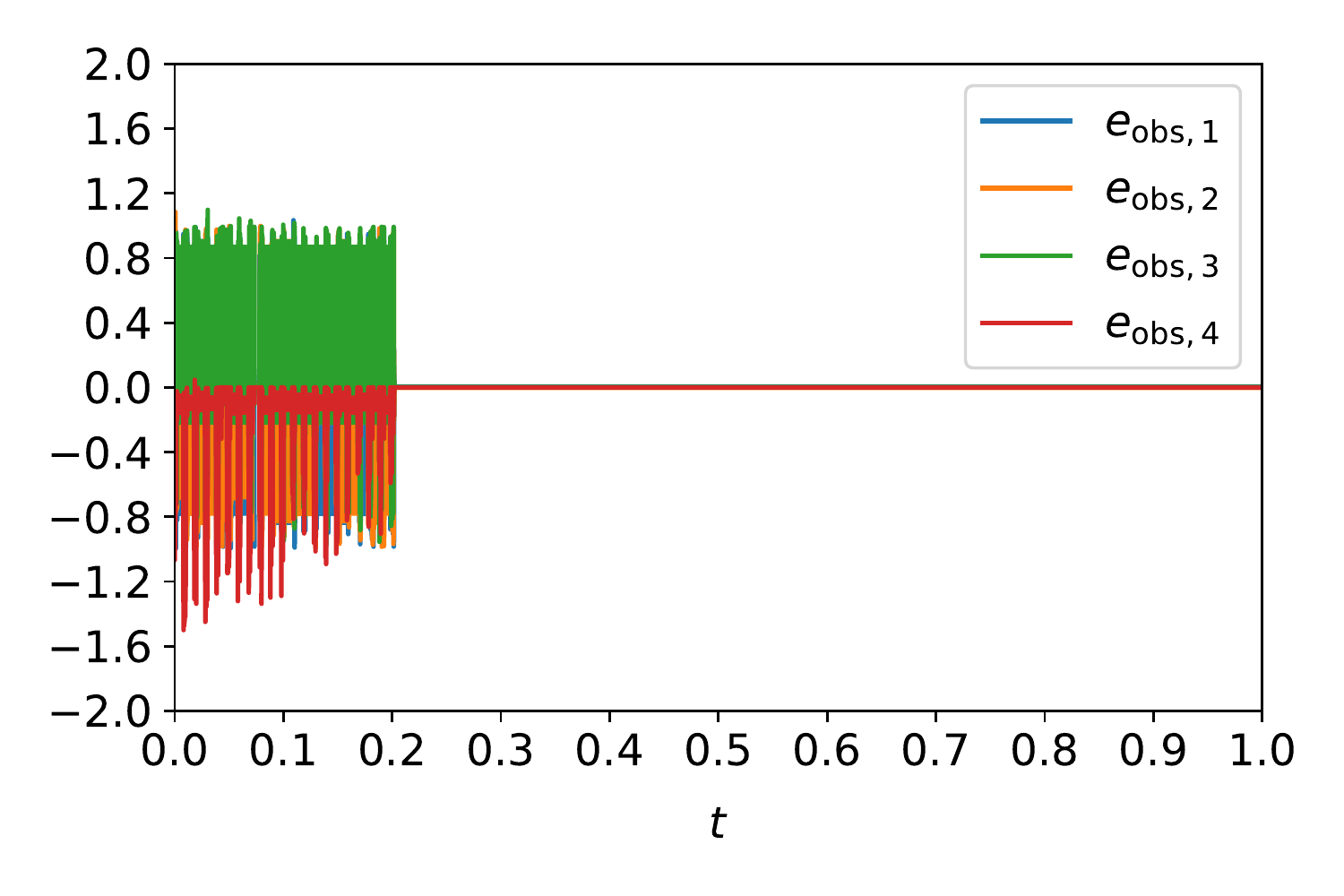}\label{fig:eg-3-obs}}%
\hfill%
\subfloat[Router's action]{\includegraphics[width=.49\columnwidth,max width=128pt,trim=1.5em 4ex 1.5em 3.5ex,clip]{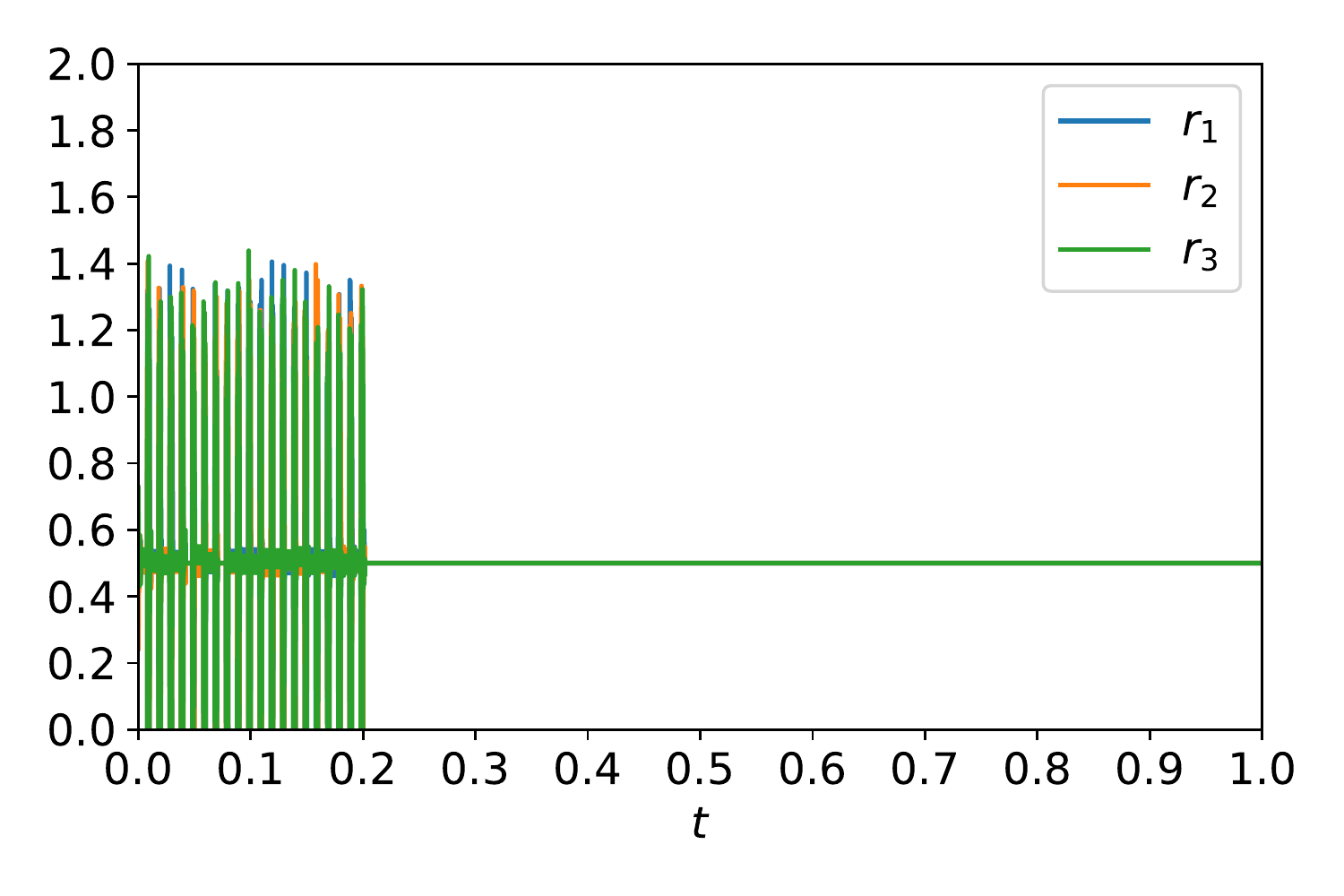}\label{fig:eg-3-rtg}}%
\\%
\subfloat[Router's actual and predicted costs]{\includegraphics[width=.49\columnwidth,max width=128pt,trim=1.5em 4ex 1.5em 3.5ex,clip]{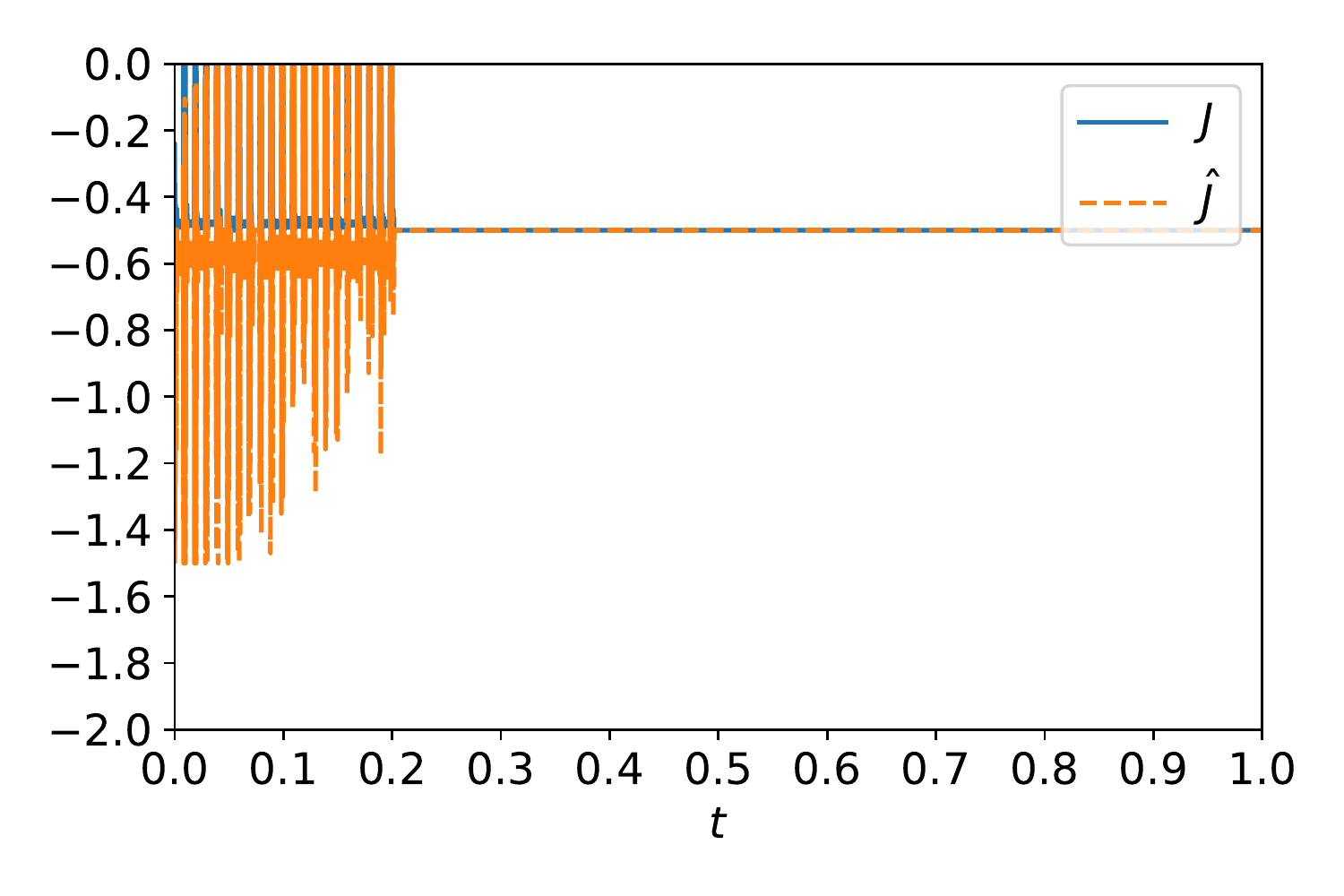}\label{fig:eg-3-rtg-cost}}%
\hfill%
\subfloat[Attacker's cost]{\includegraphics[width=.49\columnwidth,max width=128pt,trim=1.5em 4ex 1.5em 3.5ex,clip]{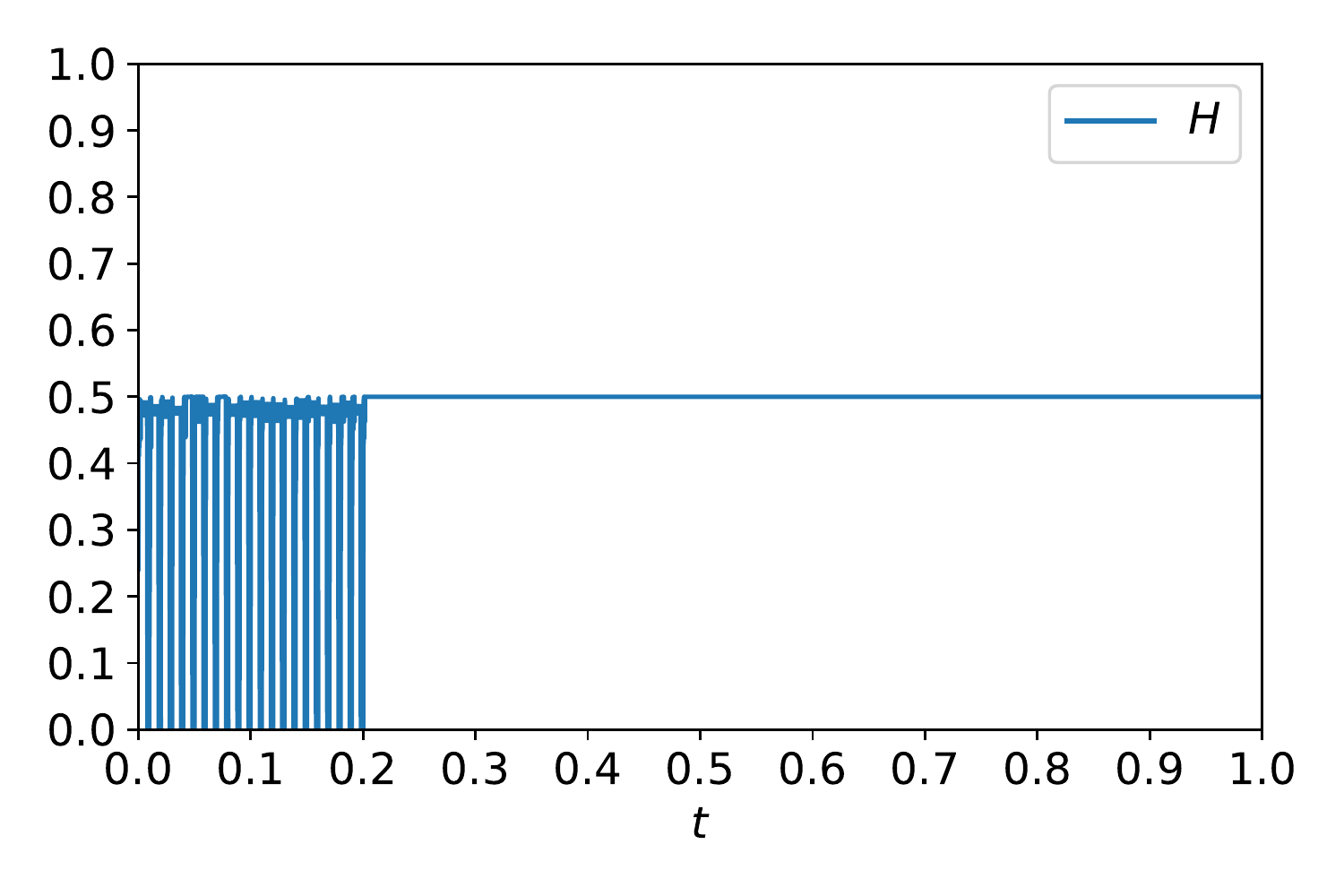}\label{fig:eg-3-atk-cost}}%
\caption{Simulation results for $ L = 3 $ and the cost function $ H $ w/ PE (horizontal axis: $ \times 10^5 $ units of time). The observation error converges to $ e_\obs = 0 $, the router's action $ r $ converges to the Stackelberg equilibrium action $ r^* = (0.5, 0.5, 0.5) $, the router's and attacker's costs converge to $ J = \hat J = -H = -0.5 $.}\label{fig:eg-3}%
\end{figure}

\begin{figure}[!htbp]
\centering
\subfloat[Observation error]{\includegraphics[width=.49\columnwidth,max width=128pt,trim=1.5em 4ex 1.5em 3.5ex,clip]{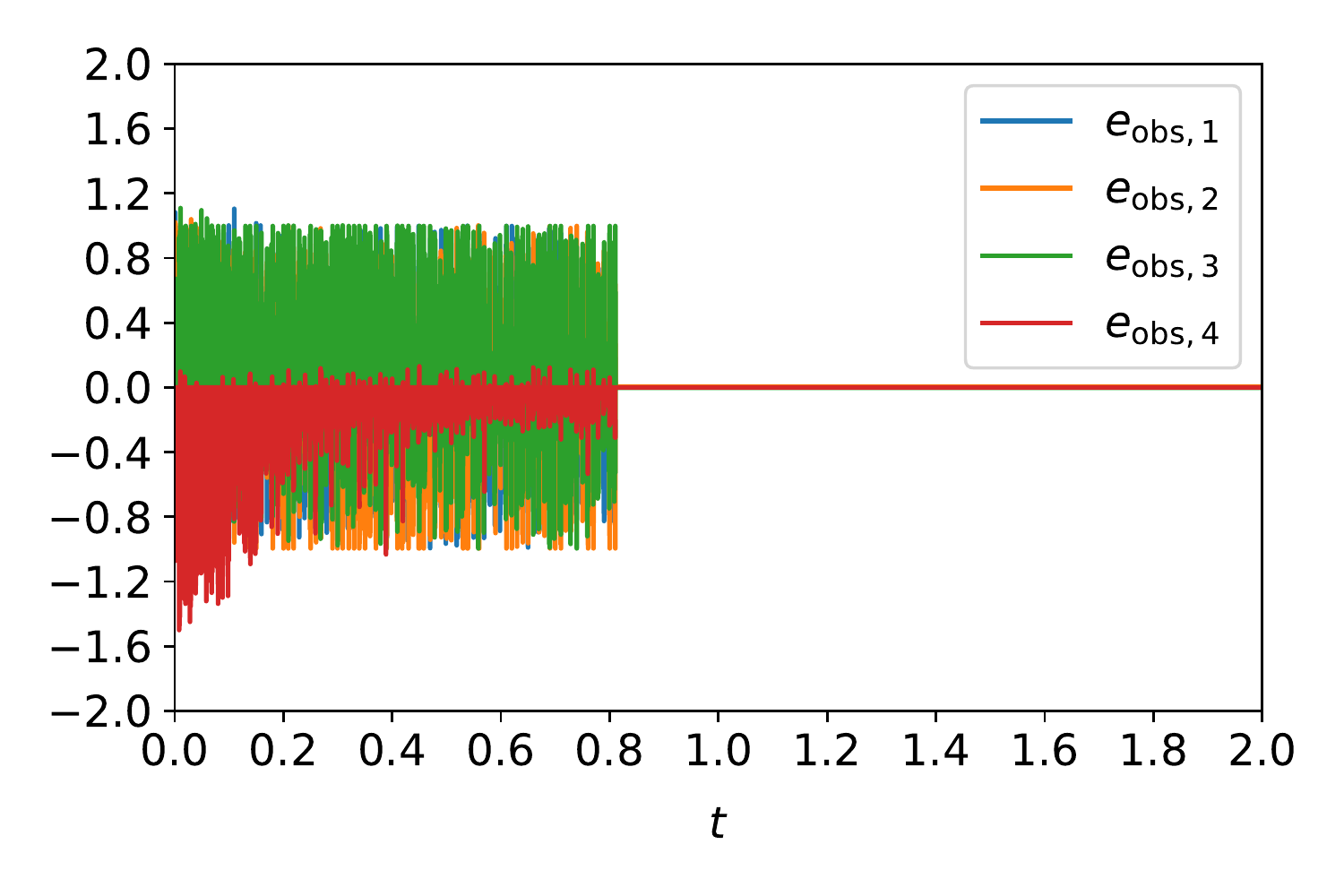}\label{fig:eg-3-obs-new}}%
\hfill%
\subfloat[Router's action]{\includegraphics[width=.49\columnwidth,max width=128pt,trim=1.5em 4ex 1.5em 3.5ex,clip]{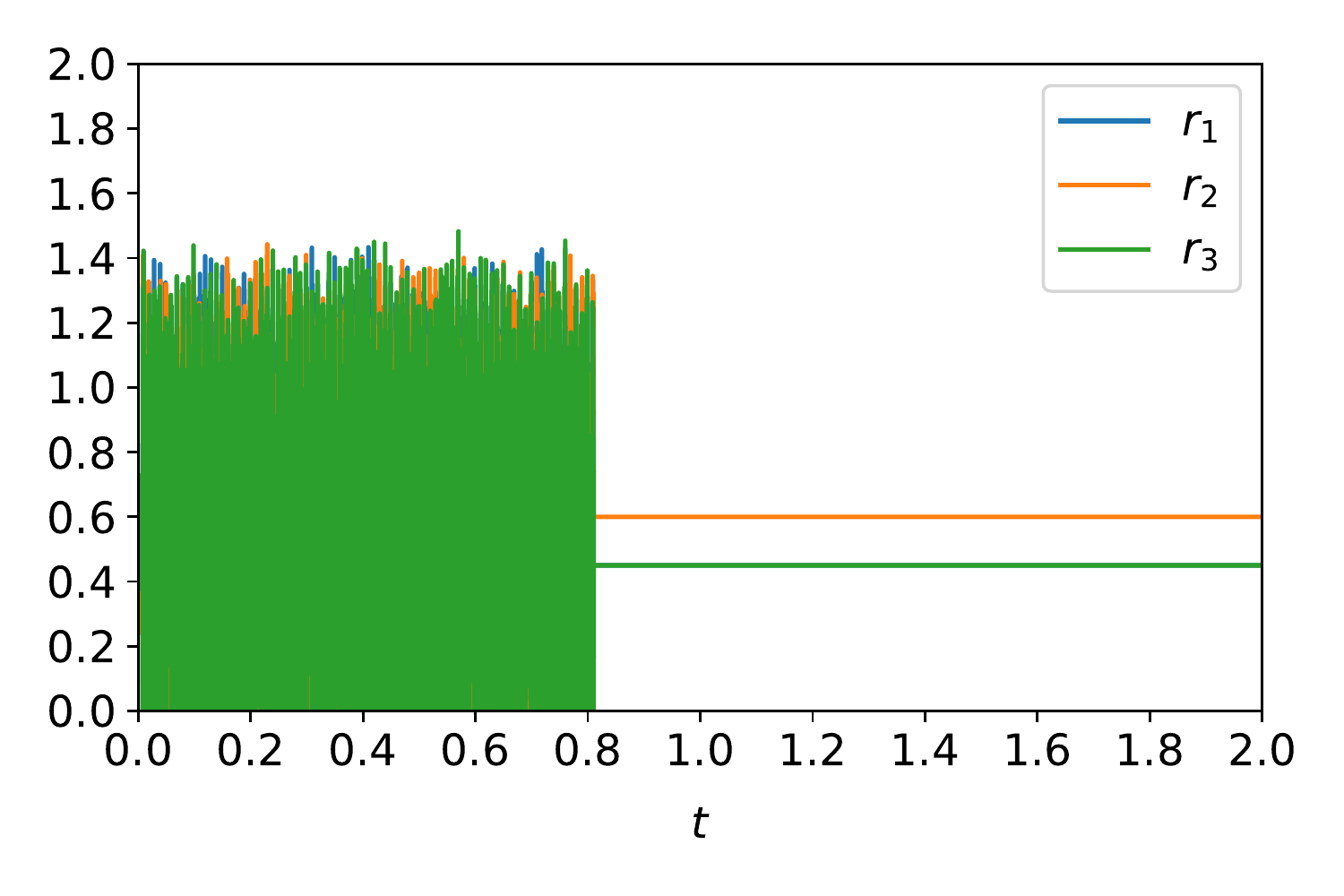}\label{fig:eg-3-rtg-new}}%
\\%
\subfloat[Router's actual and predicted costs]{\includegraphics[width=.49\columnwidth,max width=128pt,trim=1.5em 4ex 1.5em 3.5ex,clip]{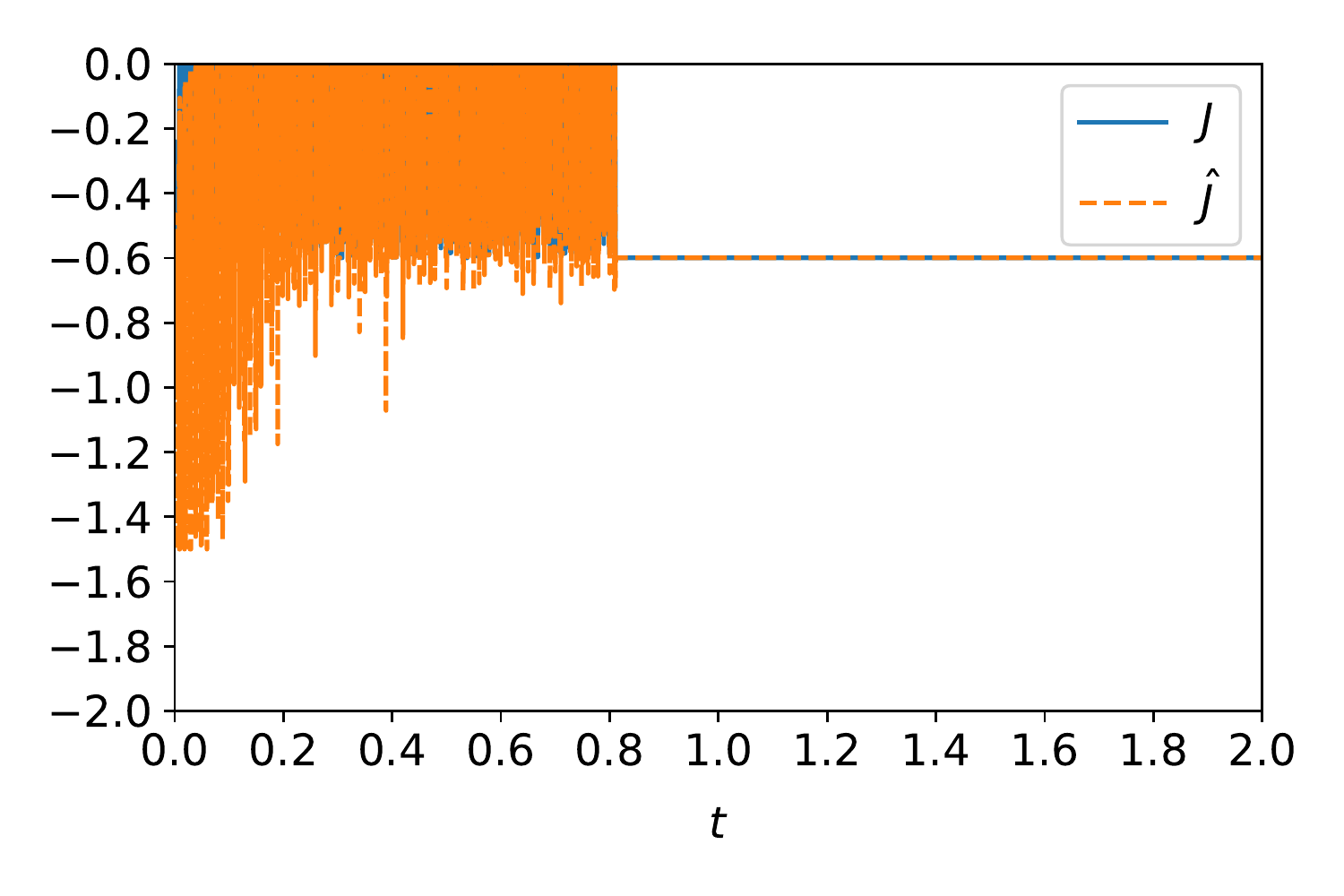}\label{fig:eg-3-rtg-cost-new}}%
\hfill%
\subfloat[Attacker's cost]{\includegraphics[width=.49\columnwidth,max width=128pt,trim=1.5em 4ex 1.5em 3.5ex,clip]{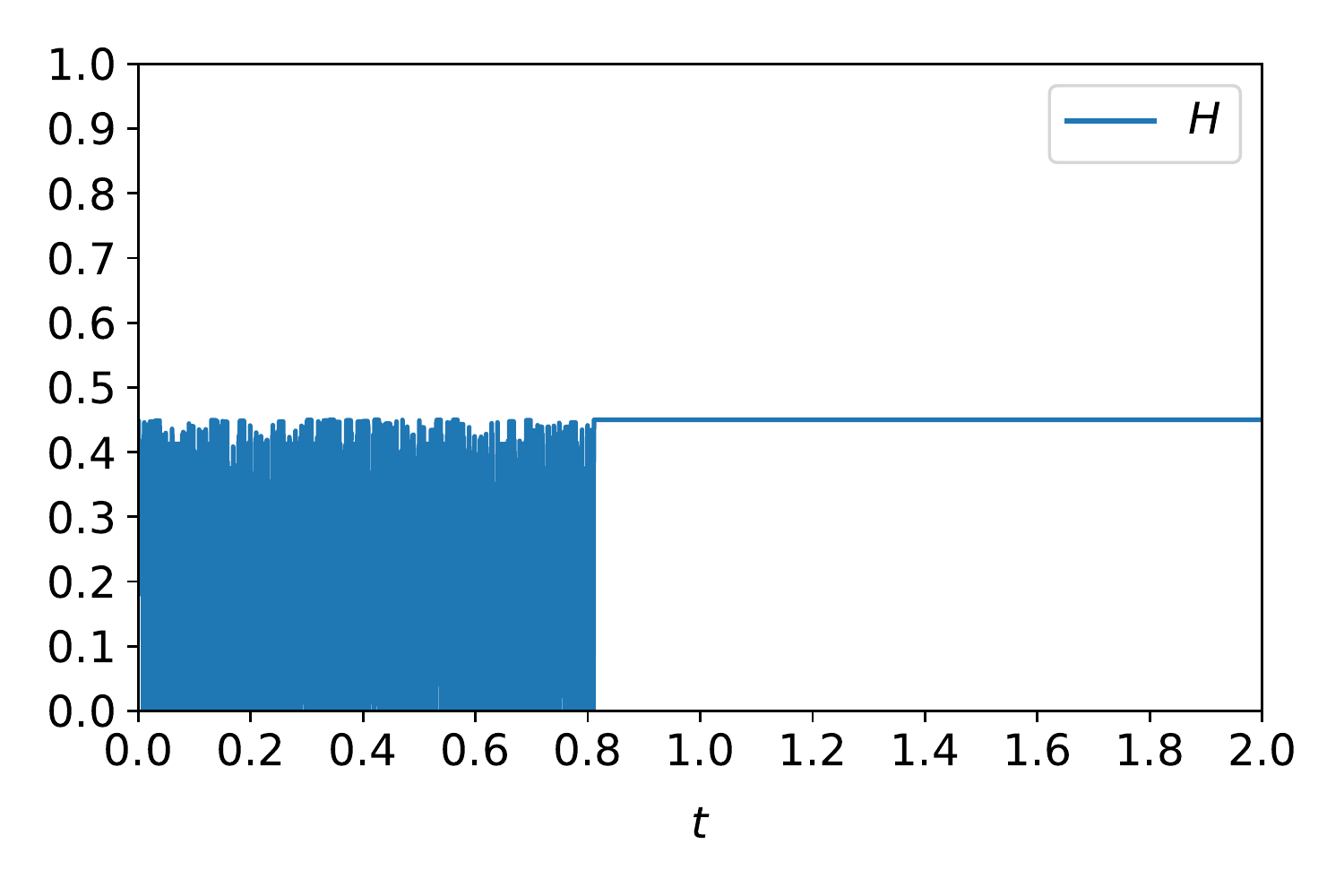}\label{fig:eg-3-atk-cost-new}}%
\caption{Simulation results for $ L = 3 $ and the cost function $ \bar H $ w/ PE (horizontal axis: $ \times 10^5 $ units of time). The observation error $ e_\obs $ converges to $ e_\obs = 0 $, the router's action $ r $ converges to an $ \varepsilon $ Stackelberg action near $ \bar r^* = (0.45, 0.6, 0.45) $, the router's actual and predicted costs converge to $ J = \hat J = -0.6 $, while the attacker's cost converges to $ H = 0.45 $.}\label{fig:eg-3-new}%
\end{figure}

In Fig.~\ref{fig:eg-3}, the observation error $ e_\obs $ converges to $ 0 $, and the router's action $ r $ converges to the Stackelberg equilibrium action $ r^* = (0.5, 0.5, 0.5) $. In Fig.~\ref{fig:eg-3-new}, the attacker starts focusing less on disrupting links $ 2 $ by switching to the new cost function defined by
\begin{equation*}
    \bar H(a, r) := u_1 + 3 u_2/4 + u_3.
\end{equation*}
Following \cite[Cor.~2]{YangHespanha2021}, for the non-zero-sum game defined by $ (\cR, \cA, J, \bar H) $, the attacker's best response to a router's action $ r $ is to set $ a_{l_1} = a_{l_2} = 1 $ on the two links $ l_1, l_2 \in \{1,\, 2,\, 3\} $ that correspond to the largest two in $ \{r_1,\, 3 r_2/4,\, r_3\} $. As established in \cite[Th.~4 and~Cor.~5]{YangHespanha2021}, there is no Stackelberg equilibrium action as defined by Definition~\ref{dfn:stbg} for $ (\cR, \cA, J, \bar H) $, but there are $ \varepsilon $ Stackelberg actions near $ \bar r^* = (0.45, 0.6, 0.45) $ for sufficiently small $ \varepsilon > 0 $. The corresponding simulation results are plotted in Fig.~\ref{fig:eg-3-new}, where the observation error $ e_\obs $ converges to $ 0 $, and the router's action $ r $ converges to an $ \varepsilon $ Stackelberg action near $ \bar r^* $.

\section{Conclusion}\label{sec:end}
This paper considered a two-player Stackelberg game where the leader only had partial knowledge of the follower's action set and cost function, and had to estimate the follower's strategy using a family of parameterized functions. We designed an adaptive learning approach that simultaneously estimated the follower's strategy based on past observations and minimized the leader's cost predicted using the latest estimation. Our approach was proved to guarantee that leader's actual and predicted costs becomes essentially indistinguishable in finite time, and the first-order necessary condition for optimality holds asymptotically for the predicted cost. Moreover, we provided a PE condition for ensuring estimation accuracy and studied the case with model mismatch. The results were illustrated via a simulation example motivated by DDoS attacks.

A feature of our estimation algorithm \eqref{eq:est-dyn} is that the norm of the estimation error $ \theta - \theta^* $ is monotonically decreasing and the observation error $ e_\obs $ becomes bounded in norm by the preselected, arbitrarily small threshold $ \varepsilon_\obs $ in finite time, regardless of how the leader's action $ r $ is adjusted. A future research direction is to extend our adaptive learning approach by adopting more efficient estimation and optimization algorithms for more complex applications. Some preliminary results on employing a neural network to predict the follower's response can be found in \cite{YangHespanha2021}. Other future research topics include to relax the affine condition in Assumption~\ref{ass:reg}, and to extend the current results to Stackelberg games on distributed networks.

\appendices
\section{Projected dynamical systems}\label{apx:proj-dyn}
Let $ \cS \subset \R^n $ be a compact convex set and $ g: \cS \to \R^n $ be a continuous function. In this section, we provide some preliminary results on the existence and convergence of solutions for the \emph{projected dynamical system}
\begin{equation}\label{eq:proj-dyn}
    \dot x = [g(x)]_{T_\cS(x)}.
\end{equation}
The difficulty in analyzing \eqref{eq:proj-dyn} lies in the fact that its right-hand side is only defined in the domain $ \cS $ and is potentially discontinuous in the boundary $ \partial\cS $ due to the projection $ [\cdot]_{T_\cS(x)} $. Therefore, we consider the concept of viable Carath\'{e}odory solution, that is, a solution to \eqref{eq:proj-dyn} on an interval $ L \subset \R_+ $ is an absolutely continuous function $ x: L \to \cS $ such that \eqref{eq:proj-dyn} holds almost everywhere on $ L $. In particular, it requires $ x(t) \in \cS $ for all $ t \in L $. The following result establishes the existence of solutions for the projected dynamical system \eqref{eq:proj-dyn}.
\begin{lem}\label{lem:proj-dyn}
For each $ x_0 \in \cS $, there exists a solution $ x $ to \eqref{eq:proj-dyn} on $ \R_+ $ with $ x(0) = x_0 $.
\end{lem}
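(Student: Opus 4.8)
The plan is to realize \eqref{eq:proj-dyn} as a \emph{slow viable solution} of an auxiliary differential inclusion with a well-behaved (upper semicontinuous, nonempty compact convex valued, bounded --- i.e.\ Marchaud) right-hand side, and then invoke the viability theorem of \cite{Aubin1991}. A direct Peano-type argument is unavailable because $x \mapsto [g(x)]_{T_\cS(x)}$ is only defined on $\cS$ and is discontinuous on $\partial\cS$. Since $g$ is continuous on the compact set $\cS$, put $m := \max_{x\in\cS}\|g(x)\| < \infty$; as $0 \in T_\cS(x)$ for every $x\in\cS$ (convexity of $\cS$), the projection obeys $\|[g(x)]_{T_\cS(x)}\| \le \|g(x)-0\| \le m$. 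I would regularize the discontinuous field by its upper semicontinuous convex hull
\begin{equation*}
    F(x) := \bigcap_{\varepsilon>0} \overline{\mathrm{co}}\,\bigl\{[g(y)]_{T_\cS(y)} : y\in\cS,\ \|y-x\|\le\varepsilon\bigr\}, \qquad x\in\cS ,
\end{equation*}
which is a Marchaud map with $F(x)\subseteq m\bB(0)$ and with $[g(x)]_{T_\cS(x)}\in F(x)$ for every $x$.

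The first substantive step is to show $F(x)\subseteq g(x)-N_\cS(x)$ for each $x\in\cS$. By the Moreau decomposition \eqref{eq:tan-proj-cone}--\eqref{eq:tan-proj-zero}, $[g(y)]_{T_\cS(y)} = g(y) - [g(y)]_{N_\cS(y)}$ with $[g(y)]_{N_\cS(y)}\in N_\cS(y)$ and $\|[g(y)]_{N_\cS(y)}\|\le m$. Taking convex combinations of such expressions along sequences $y\to x$ and using continuity of $g$ together with the (elementary, for convex $\cS$) closedness of the graph of $y\mapsto N_\cS(y)$, one finds that every $v\in F(x)$ can be written $v = g(x)-\bar n$ with $\bar n\in N_\cS(x)$, as claimed. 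In particular $[g(x)]_{T_\cS(x)}\in F(x)\cap T_\cS(x)$, so $\cS$ is a viability domain for $F$; since $\cS$ is convex (hence sleek), the hypotheses of the viability theorem in \cite{Aubin1991} hold, and for each $x_0\in\cS$ it yields an absolutely continuous viable trajectory $x:\R_+\to\cS$ with $x(0)=x_0$ and $\dot x(t)\in F(x(t))$ for a.e.\ $t$; boundedness of $F$ by $m$ makes this solution global and $m$-Lipschitz.

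It remains to upgrade $\dot x(t)\in F(x(t))$ to $\dot x(t) = [g(x(t))]_{T_\cS(x(t))}$ a.e. The key observation is that $[g(x)]_{T_\cS(x)}$ is the unique minimal-norm element of $g(x)-N_\cS(x)$: writing $v = g(x)-n$ with $n\in N_\cS(x)$ and $g(x) = [g(x)]_{T_\cS(x)} + [g(x)]_{N_\cS(x)}$ with orthogonal summands by \eqref{eq:tan-proj-zero}, one gets $\|v\|^2 = \|[g(x)]_{T_\cS(x)}\|^2 + 2\langle [g(x)]_{T_\cS(x)},\,[g(x)]_{N_\cS(x)}-n\rangle + \|[g(x)]_{N_\cS(x)}-n\|^2 \ge \|[g(x)]_{T_\cS(x)}\|^2$, because $\langle [g(x)]_{T_\cS(x)},\,n\rangle\le 0$ by \eqref{eq:normal-cone}, with equality iff $n = [g(x)]_{N_\cS(x)}$. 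Since $F(x)\subseteq g(x)-N_\cS(x)$ and $[g(x)]_{T_\cS(x)}\in F(x)\cap T_\cS(x)$, the minimal-norm element of $F(x)\cap T_\cS(x)$ is exactly $[g(x)]_{T_\cS(x)}$. Hence it suffices to take the \emph{slow} viable solution provided by \cite{Aubin1991}, i.e.\ the viable trajectory whose velocity is a.e.\ the minimal-norm element of $F(x(t))\cap T_\cS(x(t))$ (alternatively, obtained as a uniform limit of Euler polygonal arcs built from $[g(\cdot)]_{T_\cS(\cdot)}$ and projected back onto $\cS$, using Arzel\`a--Ascoli and the closed-graph property of $F$); such a solution satisfies \eqref{eq:proj-dyn} for a.e.\ $t$, which is the assertion.

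I expect the main obstacle to be precisely this last reconciliation: a generic solution of the regularized inclusion need not stay in $\cS$, and a generic viable solution of $\dot x\in g(x)-N_\cS(x)$ need not have $\dot x = [g(x)]_{T_\cS(x)}$ (the set $(g(x)-N_\cS(x))\cap T_\cS(x)$ is typically not a singleton at boundary points); only the slow/viable selection, together with the minimal-norm identification above, resolves both issues simultaneously. A cleaner alternative --- available because $\cS$ is convex --- is to view $\dot x\in g(x)-N_\cS(x)$ as the evolution equation governed by the maximal monotone operator given by the subdifferential of the indicator function of $\cS$ (which equals $N_\cS$); the classical theory then directly produces a strong solution whose velocity is the minimal section $g(x(t)) - [g(x(t))]_{N_\cS(x(t))} = [g(x(t))]_{T_\cS(x(t))}$, after a routine Lipschitz-approximation and compactness argument to cover merely continuous $g$.
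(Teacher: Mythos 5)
Your proposal is correct and is essentially the paper's proof unpacked: the paper checks that $g$, being continuous and bounded on the compact set $\cS$, is a Marchaud map, and then directly cites \cite[Th.~10.1.1, p.~354]{Aubin1991}, whose internal mechanics---relaxation of the projected field into $g(x)-N_\cS(x)$, viability of $\cS$, and identification of the velocity via the Moreau decomposition---are exactly what you reconstruct. One remark: the ``slow'' selection you invoke at the end is not actually needed, because \emph{every} viable solution of $\dot x\in g(x)-N_\cS(x)$ satisfies $\pm\dot x(t)\in T_\cS(x(t))$ for a.e.\ $t$ (both one-sided difference quotients of a trajectory confined to the convex set $\cS$ lie in the closed cone $T_\cS(x(t))$), hence $n(t)^\top\dot x(t)=0$ for the normal component $n(t)$, and uniqueness of the Moreau decomposition $g=u+v$ with $u\in T_\cS(x)$, $v\in N_\cS(x)$, $u^\top v=0$ then forces $\dot x(t)=[g(x(t))]_{T_\cS(x(t))}$, so the basic viability theorem already suffices and the obstacle you flag as the main difficulty dissolves.
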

\begin{proof}
As the function $ g $ is continuous on the compact set $ \cS $, it is upper bounded in norm and thus a Marchaud map \cite[Def.~2.2.4, p.~62]{Aubin1991}. Then Lemma~\ref{lem:proj-dyn} follows from \cite[Th.~10.1.1, p.~354]{Aubin1991}.
\end{proof}

In the following, we establish an invariance principle for the case where $ g $ is defined by a gradient descent process.
\begin{prop}\label{prop:proj-dyn-lasalle}
Suppose that the function $ g $ in \eqref{eq:proj-dyn} satisfies
\begin{equation}\label{eq:proj-dyn-grad}
    g(z) = -\nabla V(z)^\top \qquad \forall z \in \cS
\end{equation}
for some function $ V: \cS \to \R $. Then every solution $ x $ to \eqref{eq:proj-dyn} satisfies
\begin{equation}\label{eq:proj-dyn-lasalle}
    \lim_{t \to \infty} [g(x(t))]_{T_\cS(x(t))} = 0.
\end{equation}
\end{prop}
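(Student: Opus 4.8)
The plan is to run a LaSalle-type invariance argument adapted to the projected dynamics~\eqref{eq:proj-dyn}, using $V$ (suitably shifted to be nonnegative on the compact set $\cS$) as a Lyapunov function. First I would record the key dissipation identity: along any solution $x$, using~\eqref{eq:proj-dyn-grad} and the decomposition~\eqref{eq:tan-proj-cone}--\eqref{eq:tan-proj-zero} with $v = g(x)$ and the set $\cS$, one has for almost every $t$
\begin{equation*}
    \frac{\d}{\d t} V(x(t)) = \nabla V(x(t))\, \dot x(t) = -g(x(t))^\top [g(x(t))]_{T_\cS(x(t))} = -\big\| [g(x(t))]_{T_\cS(x(t))} \big\|^2 \leq 0,
\end{equation*}
where the last equality uses $g(x)^\top [g(x)]_{T_\cS(x)} = [g(x)]_{T_\cS(x)}^\top [g(x)]_{T_\cS(x)}$, which follows from~\eqref{eq:tan-proj-zero}. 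Since $\cS$ is compact and $V$ is continuous (being $-$ the antiderivative of a continuous $g$, or simply assumed continuous), $V$ is bounded below on $\cS$; hence $t \mapsto V(x(t))$ is nonincreasing and bounded below, so it converges, and consequently $\int_0^\infty \| [g(x(s))]_{T_\cS(x(s))} \|^2 \,\d s < \infty$.

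Next I would invoke a Barbalat-type argument to upgrade integrability to convergence to zero. This requires showing that $t \mapsto \| [g(x(t))]_{T_\cS(x(t))} \|^2$ is uniformly continuous, or at least that it cannot stay bounded away from zero on a sequence of intervals of fixed length. The subtle point is that $x \mapsto [g(x)]_{T_\cS(x)}$ need not be continuous on $\partial\cS$, so $\dot x$ itself is only measurable; however, $\dot x = [g(x)]_{T_\cS(x)}$ is bounded in norm by $\sup_{z\in\cS}\|g(z)\| < \infty$ (projection onto a cone containing the origin is nonexpansive relative to $0$, so $\|[g(x)]_{T_\cS(x)}\| \leq \|g(x)\|$), so $x$ is Lipschitz on $\R_+$. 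Therefore $V \circ x$ is Lipschitz, and more usefully the composite $t \mapsto g(x(t))$ is uniformly continuous (continuous $g$ composed with Lipschitz $x$). Then, writing $\| [g(x(t))]_{T_\cS(x(t))} \|^2 = -\frac{\d}{\d t}V(x(t))$ a.e., I would argue directly: if $\| [g(x(t_k))]_{T_\cS(x(t_k))} \|^2 \geq c > 0$ along some unbounded $(t_k)$, then — using that the projected vector has norm at least $c^{1/2}$ while $\dot x$ is bounded — on each interval $[t_k, t_k + \delta]$ the decrease $V(x(t_k)) - V(x(t_k+\delta))$ is bounded below by a positive constant depending only on $c$ and $\delta$ (this uses a lower semicontinuity / quasi-continuity property of the map $x \mapsto \|[g(x)]_{T_\cS(x)}\|$, together with uniform continuity of $g\circ x$), contradicting $\int_0^\infty \|[g(x)]_{T_\cS(x)}\|^2 < \infty$. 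This yields~\eqref{eq:proj-dyn-lasalle}.

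The main obstacle is exactly the discontinuity of $x \mapsto [g(x)]_{T_\cS(x)}$ on $\partial\cS$: a naive appeal to Barbalat's lemma fails because the integrand is not obviously uniformly continuous. The cleanest way around it, which I would pursue, is to exploit that $N_\cS$ and $T_\cS$ are outer/inner semicontinuous (closed graph) so that the ``anti-gradient that gets projected away'' behaves semicontinuously; concretely, if $t_k \to t^*$ and $x(t_k) \to x(t^*)$, then any limit point of $[g(x(t_k))]_{T_\cS(x(t_k))}$ lies in $T_\cS(x(t^*))$ and differs from $g(x(t^*))$ by an element of $N_\cS(x(t^*))$, hence equals $[g(x(t^*))]_{T_\cS(x(t^*))}$ by uniqueness of the decomposition; this gives continuity of $t \mapsto [g(x(t))]_{T_\cS(x(t))}$ along solutions even though the spatial map is discontinuous. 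With that continuity in hand, Barbalat's lemma applies verbatim. An alternative, which I would fall back on if the semicontinuity bookkeeping proves delicate, is to embed~\eqref{eq:proj-dyn} in the differential inclusion $\dot x \in -\nabla V(x)^\top - N_\cS(x)$ and apply an invariance principle for differential inclusions (e.g., along the lines of LaSalle for Filippov systems), concluding that $x(t)$ approaches the largest weakly invariant set contained in $\{z \in \cS : [g(z)]_{T_\cS(z)} = 0\}$, on which~\eqref{eq:proj-dyn-lasalle} holds trivially.
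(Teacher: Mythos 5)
Your dissipation identity and the resulting integrability $\int_0^\infty \|[g(x(s))]_{T_\cS(x(s))}\|^2\,\d s < \infty$ are correct, but the mechanism you propose for upgrading integrability to convergence fails. The claim that $t \mapsto [g(x(t))]_{T_\cS(x(t))}$ is continuous along solutions is false, and the semicontinuity you invoke points the wrong way: $z \mapsto T_\cS(z)$ is \emph{inner} (lower) semicontinuous, not outer semicontinuous, so limit points of tangent vectors at nearby points need not be tangent at the limit point. Concretely, take $\cS = [0,1] \subset \R$ and $V(z) = z$, so $g \equiv -1$. For $z > 0$ one has $[g(z)]_{T_\cS(z)} = -1$, while $[g(0)]_{T_\cS(0)} = 0$ since $T_\cS(0) = [0,\infty)$. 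The solution from $x_0 \in (0,1)$ reaches $0$ at time $t^* = x_0$ and stops, so $[g(x(t))]_{T_\cS(x(t))}$ jumps from $-1$ to $0$ at $t^*$; the limit point $-1$ of $[g(x(t_k))]_{T_\cS(x(t_k))}$ as $t_k \nearrow t^*$ does not lie in $T_\cS(0)$, so your uniqueness-of-decomposition argument does not apply. The direct Barbalat variant is gapped for the same reason: $z \mapsto \|[g(z)]_{T_\cS(z)}\|$ is only lower semicontinuous, so the neighborhoods of the points $x(t_k)$ on which it stays above $c/2$ have no uniform size, and you cannot extract a drop of $V$ bounded below uniformly in $k$ over intervals of fixed length $\delta$.

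Your fallback is essentially the paper's actual route, but as stated it is not yet a proof. The paper does not work with the inclusion $\dot x \in g(x) - N_\cS(x)$ directly: that set-valued map has unbounded values, and for an arbitrary $w = g(z) - n$ with $n \in N_\cS(z)$ the decrease inequality $\nabla V(z)\,w \leq -\|[g(z)]_{T_\cS(z)}\|^2$ can fail (there is no sign on $g(z)^\top n$). The key device is to intersect $g(z) - N_\cS(z)$ with the ball $\big\|g(z) - [g(z)]_{T_\cS(z)}\big\|\,\bB(g(z))$ as in \eqref{eq:proj-dyn-incl-fcn}; this keeps the values nonempty, compact and convex, preserves upper semicontinuity, still contains the true right-hand side $[g(z)]_{T_\cS(z)}$, and—via two elementary estimates using \eqref{eq:normal-cone}--\eqref{eq:tan-proj-zero}—forces $g(z)^\top w \geq \|[g(z)]_{T_\cS(z)}\|^2$ for \emph{every} $w$ in the set, which is exactly the hypothesis Ryan's invariance theorem needs with output $W(z) = \|[g(z)]_{T_\cS(z)}\|^2$. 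Finally, your remark that \eqref{eq:proj-dyn-lasalle} then ``holds trivially'' glosses the same semicontinuity issue: approaching the zero set of a merely lower semicontinuous function does not by itself force the function values along the trajectory to converge to zero, which is why the paper establishes the semicontinuity of $z \mapsto \big\|g(z) - [g(z)]_{T_\cS(z)}\big\|$ in a separate lemma before invoking the invariance theorem.
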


To prove Proposition~\ref{prop:proj-dyn-lasalle}, we extend the projected differential equation \eqref{eq:proj-dyn} to the differential inclusion
\begin{equation}\label{eq:proj-dyn-incl}
    \dot x \in G(x)
\end{equation}
with the set-valued function $ G: \cS \rightrightarrows \R^n $ defined by\footnote{The extension from the projected differential equation \eqref{eq:proj-dyn} to the differential inclusion \eqref{eq:proj-dyn-incl} is inspired by similar extensions in \cite{Henry1973} and \cite[p.~354]{Aubin1991}, and is specifically designed to simplified the proof of Proposition~\ref{prop:proj-dyn-lasalle}.}
\begin{equation}\label{eq:proj-dyn-incl-fcn}
\begin{aligned}
    G(z) &:= \{g(z) - w: w \in N_\cS(z)\} \\
    &\;\quad\,\cap \big\| g(z) - [g(z)]_{T_\cS(z)} \big\| \bB(g(z)).
\end{aligned}
\end{equation}
As $ [g(z)]_{T_\cS(z)} \in G(z) $ for all $ z \in \cS $, a solution to \eqref{eq:proj-dyn} is also a solution to \eqref{eq:proj-dyn-incl}. Hence we prove Proposition~\ref{prop:proj-dyn-lasalle} by applying an invariance theorem for differential inclusions to \eqref{eq:proj-dyn-incl}, which requires the following continuity property.
\begin{lem}
The set-valued function $ G $ defined by \eqref{eq:proj-dyn-incl-fcn} is upper semicontinuous on $ \cS $.
\end{lem}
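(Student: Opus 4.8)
The plan is to obtain upper semicontinuity of $G$ from the standard criterion that a set-valued map with closed graph whose range lies in a fixed compact set is upper semicontinuous (see, e.g., \cite{Aubin1991}). Accordingly I would verify: (a) $G$ is uniformly bounded; (b) $G(z)$ is nonempty and compact for each $z \in \cS$; and (c) the graph $\{(z,v) : z \in \cS,\, v \in G(z)\}$ is closed. For (a), note $0 \in T_\cS(z)$, so $\|g(z) - [g(z)]_{T_\cS(z)}\| \leq \|g(z)\|$ by the minimizing property of the projection, whence every $v \in G(z)$ satisfies $\|v\| \leq 2\|g(z)\| \leq 2M$ with $M := \max_{z \in \cS} \|g(z)\| < \infty$, finite since $g$ is continuous on the compact set $\cS$. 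For (b), the set $\{g(z) - w : w \in N_\cS(z)\}$ is closed, being the image of the closed set $N_\cS(z)$ under the homeomorphism $w \mapsto g(z) - w$, and $G(z)$ is its intersection with a closed ball, hence compact; it contains $[g(z)]_{T_\cS(z)}$, since $g(z) - [g(z)]_{T_\cS(z)} \in N_\cS(z)$ by \eqref{eq:tan-proj-cone} and the ball constraint then holds with equality.

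For (c), take $z_k \to z$ in $\cS$ and $v_k \in G(z_k)$ with $v_k \to v$. Write $w_k := g(z_k) - v_k \in N_\cS(z_k)$; by continuity of $g$, $w_k \to g(z) - v =: w$. Since $\cS$ is closed and convex, $N_\cS$ has closed graph -- $w_k \in N_\cS(z_k)$ means $w_k^\top(y - z_k) \leq 0$ for all $y \in \cS$, and letting $k \to \infty$ gives $w^\top(y - z) \leq 0$ for all $y \in \cS$, i.e.\ $w \in N_\cS(z)$ -- so $v = g(z) - w$ lies in $\{g(z) - w' : w' \in N_\cS(z)\}$. It remains to show $\|v - g(z)\| \leq \rho(z)$, where $\rho(z) := \|g(z) - [g(z)]_{T_\cS(z)}\| = \mathrm{dist}(g(z), T_\cS(z))$. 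From $v_k \in G(z_k)$ we have $\|v_k - g(z_k)\| \leq \rho(z_k)$, and the left side tends to $\|v - g(z)\|$; hence it suffices to prove that $\rho$ is upper semicontinuous on $\cS$.

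This last point is the main obstacle, since $z \mapsto [g(z)]_{T_\cS(z)}$ is genuinely discontinuous on $\partial\cS$ (the tangent cone jumps there), so a naive continuity argument fails; the remedy is that only the scalar $\rho(z) = \mathrm{dist}(g(z), T_\cS(z))$ needs to be controlled, and this quantity is upper semicontinuous. Using \eqref{eq:tan-cone} and the fact that the distance to a set equals the distance to its closure,
\begin{equation*}
    \rho(z) = \inf_{h \geq 0} \mathrm{dist}\big(g(z),\, h(\cS - z)\big) = \inf_{h \geq 0} \mathrm{dist}\big(g(z) + h z,\, h\cS\big).
\end{equation*}
For each fixed $h \geq 0$ the set $h\cS$ is compact and independent of $z$, so $z \mapsto \mathrm{dist}(g(z) + h z,\, h\cS)$ is continuous, being the composition of the continuous map $z \mapsto g(z) + h z$ with the $1$-Lipschitz function $\mathrm{dist}(\cdot,\, h\cS)$; therefore $\rho$, a pointwise infimum of continuous functions, is upper semicontinuous. (Alternatively, one may invoke that for a convex set the tangent-cone map $z \mapsto T_\cS(z)$ is lower semicontinuous, which immediately yields upper semicontinuity of $z \mapsto \mathrm{dist}(g(z), T_\cS(z))$.) Combining (a)--(c) with the stated criterion gives that $G$ is upper semicontinuous on $\cS$.
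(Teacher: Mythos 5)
Your proof is correct, and the skeleton matches the paper's: both arguments hinge on (i) upper semicontinuity of the radius $\rho(z) = \|g(z) - [g(z)]_{T_\cS(z)}\|$ and (ii) closedness of the map $z \mapsto \{g(z) - w : w \in N_\cS(z)\}$. Where you differ is in how these are established and combined. The paper obtains (i) by citing lower semicontinuity of the tangent-cone map for compact convex sets together with a marginal-function theorem, and then concludes via a corollary on intersecting a closed map with an upper semicontinuous compact-valued map; you instead prove (i) from scratch by rewriting $\rho(z) = \inf_{h \geq 0} \mathrm{dist}\big(g(z)+hz,\, h\cS\big)$ as a pointwise infimum of continuous functions, and you conclude via the elementary criterion that a closed-graph map with range in a fixed compact set is upper semicontinuous (your uniform bound $\|v\| \leq 2\max_{z\in\cS}\|g(z)\|$ and the closed graph of $N_\cS$ via $w^\top(y-z)\leq 0$ for all $y\in\cS$ are both verified correctly). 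Your route is more self-contained and makes explicit the one genuinely delicate point --- that $z \mapsto [g(z)]_{T_\cS(z)}$ itself jumps on $\partial\cS$ but its distance to $g(z)$ is still upper semicontinuous --- at the cost of a slightly longer argument; the paper's version is shorter but leans on three cited results from \cite{Aubin1991}. Either proof is acceptable.
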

\begin{proof}
As $ \cS $ is a compact and convex, the set-valued map $ z \mapsto T_\cS(z) $ is lower semicontinuous on $ \cS $ \cite[Th.~5.1.7, p.~162]{Aubin1991}. Also, as $ g $ is continuous on $ \cS $, the map $ (z, w) \mapsto -\|g(z) - w\| $ is continuous on $ \cS \times \R^n $. Hence the map $ z \mapsto \sup_{w \in T_\cS(z)} -\|g(z) - w\| = -\inf_{w \in T_\cS(z)} \|g(z) - w\| = -\big\| g(z) - [g(z)]_{T_\cS(z)} \big\| $ is lower semicontinuous on $ \cS $ \cite[Th.~2.1.6, p.~59]{Aubin1991}, that is, the map $ z \mapsto \big\| g(z) - [g(z)]_{T_\cS(z)} \big\| $ is upper semicontinuous on $ \cS $. Moreover, the map $ z \mapsto \{g(z) - w: w \in N_\cS(z)\} $ is closed. Hence $ G $ is upper semicontinuous on $ \cS $ \cite[Cor.~2.2.3, p.~61]{Aubin1991}.
\end{proof}

\begin{proof}[Proof of Proposition~\ref{prop:proj-dyn-lasalle}]
Suppose that
\begin{equation}\label{eq:proj-dyn-incl-lya}
    g(z)^\top w \geq \big\| [g(z)]_{T_\cS(z)} \big\|^2 \qquad \forall z \in \cS, \forall w \in G(z).
\end{equation}
Then the function $ V $ satisfies
\begin{equation*}
    \nabla V(z)\, w \leq -\big\| [g(z)]_{T_\cS(z)} \big\|^2 \qquad \forall z \in \cS, \forall w \in G(z).
\end{equation*}
Note that the set-valued function $ G $ is upper semicontinuous, and the set $ G(z) $ is nonempty, compact, and convex for every $ z \in \cS $. Hence the invariance theorem \cite[Th.~2.11]{Ryan1998} implies that every solution to \eqref{eq:proj-dyn-incl}, and therefore every solution $ x $ to \eqref{eq:proj-dyn}, approaches the largest invariant set in $ \{z \in \cS: \|[g(z)]_{T_\cS(z)}\| = 0\} $. Hence \eqref{eq:proj-dyn-lasalle} holds as $ g $ is continuous on the compact set $ \cS $.

It remains to show that \eqref{eq:proj-dyn-incl-lya} holds. Consider arbitrary $ z \in \cS $ and $ w \in G(z) $. First, as $ g(z) - w \in N_\cS(z) $, from \eqref{eq:normal-cone} and \eqref{eq:tan-proj-cone} we have $ (g(z) - w)^\top [g(z)]_{T_\cS(z)} \leq 0 $, and thus
\begin{equation*}
\begin{aligned}
	&\quad\,\, \|w\|^2 - \big\| [g(z)]_{T_\cS(z)} \big\|^2 \\
	&\geq \|w\|^2 - \big\| [g(z)]_{T_\cS(z)} \big\|^2 - 2 (g(z) - w)^\top [g(z)]_{T_\cS(z)} \\
	&= \big\| w - [g(z)]_{T_\cS(z)} \big\|^2 \geq 0,
\end{aligned}
\end{equation*}
where the equality follows partially from \eqref{eq:tan-proj-zero}. Second, as $ w \in G(z) \subset \big\| g(z) - [g(z)]_{T_\cS(z)} \big\| \bB(g(z)) $, we have $ \|g(z) - w\| \leq \big\| g(z) - [g(z)]_{T_\cS(z)} \big\| $, and thus
\begin{equation*}
\begin{aligned}
	2 g(z)^\top w &= \|w\|^2 + \|g(z)\|^2 - \|g(z) - w\|^2 \\
	&\geq \|w\|^2 + \|g(z)\|^2 - \big\| g(z) - [g(z)]_{T_\cS(z)} \big\|^2 \\
	&= \|w\|^2 + \big\| [g(z)]_{T_\cS(z)} \big\|^2 \geq 2 \big\| [g(z)]_{T_\cS(z)} \big\|^2,
\end{aligned}
\end{equation*}
where the second equality follows partially from \eqref{eq:tan-proj-zero}. Hence \eqref{eq:proj-dyn-incl-lya} holds.
\end{proof}

\section{Proof of technical lemmas}
\subsection{Proof of Lemma~\ref{lem:obs-est}}\label{apx:obs-est}
As the map $ \hat\theta \mapsto \hat f(\hat\theta, r) $ is affine, its Jacobian matrix $ \jacob{\theta} \hat f(r) $ is independent of $ \hat\theta $. Thus for a fixed $ r \in \cR $, we have
\begin{equation*}
    \hat f(\hat\theta, r) - a = \hat f(\hat\theta, r) - \hat f(\theta, r) = \jacob{\theta} \hat f(t) (\hat\theta - \theta).
\end{equation*}
Next, consider the function $ g: [0, 1] \to \R $ defined by
\begin{equation*}
    g(\rho) := J \big( r, \rho \hat f(\hat\theta, r) + (1 - \rho)\, a \big),
\end{equation*}
which is continuously differentiable as $ J $ is continuously differentiable. Then
\begin{equation*}
    \hat J(r, \hat\theta) - J(r, a) = g(1) - g(0) = \int_{0}^{1} \frac{\d g(\rho)}{\d\rho} \d\rho,
\end{equation*}
in which
\begin{equation*}
\begin{aligned}
    \frac{\d g(\rho)}{\d\rho} &= \jacob{a} J \big( r, \rho \hat f(\hat\theta, r) + (1 - \rho)\, a \big) \big( \hat f(\hat\theta, r) - a \big) \\
    &= \jacob{a} J \big( r, \rho \hat f(\hat\theta, r) + (1 - \rho)\, a \big) \jacob{\theta} \hat f(r) (\hat\theta - \theta).
\end{aligned}
\end{equation*}
Hence
\begin{equation*}
\begin{aligned}
    &\quad\,\, \hat J(r, \hat\theta) - J(r, a) \\
    &= \int_{0}^{1} \jacob{a} J(r, \rho \hat f(\hat\theta, r) + (1 - \rho) a) \jacob{\theta} \hat f(r) (\hat\theta - \theta) \d\rho \\
    &= \bigg( \int_{0}^{1} \jacob{a} J(r, \rho \hat f(\hat\theta, r) + (1 - \rho) a) \d\rho \bigg) \jacob{\theta} \hat f(r) (\hat\theta - \theta).
\end{aligned}
\end{equation*}

\subsection{Proof of Lemma~\ref{lem:dyn-soln}}\label{apx:dyn-soln}
Consider an arbitrary $ (\hat\theta_0, r_0) \in \Theta \times \cR $, and let $ \lambda_e(0) $ be the corresponding value given by \eqref{eq:obs-err} and \eqref{eq:est-sw}. Suppose $ \lambda_e(0) = \lambda_\theta $, that is, $ \|e_\obs(0)\| > \varepsilon_\obs' $. In the following, we construct a solution to \eqref{eq:est-dyn} and \eqref{eq:ctrl-dyn} on $ \R_+ $ with $ (\hat\theta(0), r(0)) = (\hat\theta_0, r_0) $ recursively. The case where $ \lambda_e(0) = 0 $ can be proved based on the same construction starting with the second step.

First, consider the system defined by \eqref{eq:est-dyn} and \eqref{eq:ctrl-dyn} with $ \lambda_e \equiv \lambda_\theta $, that is,
\begin{equation}\label{eq:dyn-1}
\begin{aligned}
    \dot{\hat\theta} &= \big[ {-\lambda_\theta} K(r, f(r), \hat\theta)^\top K(r, f(r), \hat\theta) (\hat\theta - \theta) \big]_{T_\Theta(\hat\theta)}, \\
    \dot r &= \big[ {-\lambda_r} \jacob{r} \hat J(r, \hat\theta)^\top \big]_{T_\cR(r)},
\end{aligned}
\end{equation}
which can be modeled using the projected dynamical system \eqref{eq:proj-dyn} in Appendix~\ref{apx:proj-dyn} with the state $ x := (\hat\theta, r) $ and the set $ \cS := \Theta \times \cR $. In particular, $ f $ is continuous due to \eqref{eq:match} and Assumption~\ref{ass:reg}. Then Lemma~\ref{lem:proj-dyn} in Appendix~\ref{apx:proj-dyn} implies that there exists a solution $ (\hat\theta_1, r_1) $ to \eqref{eq:dyn-1} on $ \R_+ $ with $ (\hat\theta_1(0), r_1(0)) = (\hat\theta_0, r_0) $. Consider the corresponding observation error $ e_{\obs, 1} $ and switching signal $ \lambda_{e, 1} $ defined by \eqref{eq:obs-err} and \eqref{eq:est-sw}, and let
\begin{equation*}
    t_1 := \inf\{t > 0: \|e_{\obs, 1}(t)\| \leq \varepsilon_\obs'\}.
\end{equation*}
Then $ (\hat\theta_1, r_1) $ is a solution to \eqref{eq:est-dyn} and \eqref{eq:ctrl-dyn} on $ [0, t_1) $ with $ (\hat\theta_1(0), r_1(0)) = (\hat\theta_0, r_0) $. If $ t_1 = \infty $ then the proof is complete. Otherwise, $ e_{\obs, 1}(t_1) = \varepsilon_\obs' $ and thus $ \lambda_{e, 1}(t_1) = 0 $, and we continue with the second step below.

Second, consider the system defined by \eqref{eq:est-dyn} and \eqref{eq:ctrl-dyn} with $ \lambda_e \equiv 0 $, that is,
\begin{equation}\label{eq:dyn-2}
\begin{aligned}
    \dot{\hat\theta} &= 0, \\
    \dot r &= \big[ {-\lambda_r} \jacob{r} \hat J(r, \hat\theta)^\top \big]_{T_\cR(r)},
\end{aligned}
\end{equation}
which can also be modeled using the projected dynamical system \eqref{eq:proj-dyn} in Appendix~\ref{apx:proj-dyn} with the state $ x :=
(\hat\theta, r) $ and the set $ \cS := \Theta \times \cR $. Then Lemma~\ref{lem:proj-dyn} in Appendix~\ref{apx:proj-dyn} implies that there exists a solution $ (\hat\theta_2, r_2) $ to \eqref{eq:dyn-2} on $ [t_1, \infty) $ with $ (\hat\theta_2(t_1), r_2(t_1)) = (\hat\theta_1(t_1), r_1(t_1)) $. Consider the corresponding observation error $ e_{\obs, 2} $ and switching signal $ \lambda_{e, 2} $ defined by \eqref{eq:obs-err} and \eqref{eq:est-sw}, and let
\begin{equation*}
    t_2 := \inf\{t \geq t_1: \|e_{\obs, 2}(t)\| \geq \varepsilon_\obs\}.
\end{equation*}
Then $ (\hat\theta_2, r_2) $ is a solution to \eqref{eq:est-dyn} and \eqref{eq:ctrl-dyn} on $ [t_1, t_2) $ with $ (\hat\theta_2(t_1), r_2(t_1)) = (\hat\theta_1(t_1), r_1(t_1)) $. If $ t_2 = \infty $ then the proof is complete. Otherwise, $ e_{\obs, 2}(t_2) = \varepsilon_\obs $ and thus $ \lambda_{e, 2}(t_2) = \lambda_\theta $, and we continue with the first step above.

In this way, we obtain an increasing sequence $ (t_k)_{k \in \N} $ with $ t_0 = 0 $ and a corresponding sequence $ (\hat\theta_k, r_k)_{k \geq 1} $ of absolutely continuous functions $ \hat\theta_k: [t_{k-1}, \infty) \to \Theta $ and $ r_k: [t_{k-1}, \infty) \to \cR $. Moreover, following \eqref{eq:obs-err}, \eqref{eq:dyn-1}, \eqref{eq:dyn-2}, Assumption~\ref{ass:reg}, and the property that $ (\hat\theta_k(t), r_k(t)) \in \Theta \times \cR $ for all $ k \geq 1 $ and $ t \geq t_{k-1} $, there exists a constant $ M \geq 0 $ such that $ \|\dot e_{\obs, k}(t)\| \leq M $ for all $ k \geq 1 $ and $ t \geq t_{k-1} $. Hence
\begin{equation*}
    t_k - t_{k-1} \geq (\varepsilon_\obs - \varepsilon_\obs')/M \qquad \forall k \geq 2,
\end{equation*}
and thus $ \lim_{k \to \infty} t_k = \infty $. Therefore, the absolutely continuous functions $ \hat\theta: \R_+ \to \Theta $ and $ r: \R_+ \to \cR $ defined by
\begin{equation*}
    \hat\theta(t) := \hat\theta_k(t), \quad r(t) := r_k(t), \qquad k \geq 1, t \in [t_{k-1}, t_k)
\end{equation*}
form a solution to \eqref{eq:est-dyn} and \eqref{eq:ctrl-dyn} on $ \R_+ $ with $ (\hat\theta(0), r(0)) = (\hat\theta_0, r_0) $. The proof is completed by noticing that \eqref{eq:l-inf} follows from \eqref{eq:obs-err}, \eqref{eq:dyn-1}, \eqref{eq:dyn-2}, Assumption~\ref{ass:reg}, and the property that $ (\hat\theta(t), r(t)) \in \Theta \times \cR $ for all $ t \geq 0 $.

\bibliographystyle{IEEEtran}
\bibliography{reference-abbr}
\end{document}